\documentclass[prx,a4paper,aps,twocolumn,superscriptaddress,longbibliography]{revtex4-1}

\usepackage[utf8]{inputenc}
\usepackage[T1]{fontenc}
\usepackage{amsmath, amsthm, amssymb, mathtools, dsfont}
\usepackage{times}
\usepackage{graphicx}
\usepackage{dcolumn}
\usepackage{bm,bbm}
\usepackage{color}
\usepackage[usenames,dvipsnames]{xcolor}
\usepackage{esint}
\usepackage{paralist}

\usepackage{geometry}
\geometry{tmargin=2cm,bmargin=2cm,lmargin=1.5cm,rmargin=1.5cm}

\definecolor{myurlcolor}{rgb}{0,0,0.7}
\definecolor{myrefcolor}{rgb}{0.8,0,0}
\usepackage[unicode=true,pdfusetitle, bookmarks=true,bookmarksnumbered=false,bookmarksopen=false, breaklinks=false,pdfborder={0 0 0},backref=false,colorlinks=true, linkcolor=myrefcolor,citecolor=myurlcolor,urlcolor=myurlcolor]{hyperref}

\DeclareGraphicsExtensions{.png,.pdf,.eps}
\graphicspath{ {./figs/} }
\usepackage{epstopdf}

\setlength{\belowcaptionskip}{-10pt}

\makeatletter

 \theoremstyle{plain}
 \newtheorem{fact}{Fact}
 \theoremstyle{plain}
 \newtheorem{lem}{Lemma}
 \theoremstyle{plain}
 \newtheorem{thm}{Theorem}
 \theoremstyle{plain}
 \newtheorem{exa}{Example}
 \theoremstyle{plain}
 \newtheorem{corr}{Corollary}
 \theoremstyle{plain}
 
 \theoremstyle{remark}
 \newtheorem*{rem*}{Remark}
	\theoremstyle{plain}
	 \newtheorem{rem}{Remark}


\renewcommand{\cdot}{\,} 
\newcommand{\e}{\mathrm{e}}
\renewcommand{\d}{\mathrm{d}}
\newcommand{\param}{\varphi}
\newcommand{\1}{\mathds{1}}
\newcommand{\ot}{\otimes}

\newcommand{\ket}[1]{| #1 \rangle}
\newcommand{\bra}[1]{\langle #1 |}
\newcommand{\braket}[2]{\langle #1|#2\rangle}
\newcommand{\ketbra}[2]{\ket{#1}\!\bra{#2}}
\newcommand{\proj}[1]{\ketbra{#1}{#1}}
\renewcommand{\t}[1]{\mathrm{#1}}
\newcommand{\ii}{\mathrm{i}}
\DeclareMathOperator{\tr}{tr}
\DeclareMathOperator{\dB}{\mathnormal{d}_B\!}

\DeclareMathOperator{\QFI}{\mathnormal{F}\!}
\DeclareMathOperator{\FI}{\mathnormal{F}_\t{cl}}
\DeclareMathOperator{\lspan}{span}
\renewcommand{\H}{\mathcal{H}}
\DeclareMathOperator*{\Hloc}{\mathcal{H}_{loc}}
\newcommand{\dloc}{d}
\newcommand{\Hsym}{\mathcal{S}}
\newcommand{\R}{\mathbb{R}}
\newcommand{\SU}{\mathrm{SU}}
\DeclareMathOperator*{\Expect}{\mathbb{E}}

\newcommand{\landauO}{\mathcal{O}}

\newcommand{\eref}[1]{\eqref{#1}}
\newcommand{\eqnref}[1]{Eq.~\eqref{#1}}
\newcommand{\eqnsref}[2]{Eqs.~\eqref{#1} and \eqref{#2}}
\newcommand{\figref}[1]{Fig.~\ref{#1}}

\newcommand{\secref}[1]{Section~\ref{#1}}
\newcommand{\appref}[1]{Appendix~\ref{#1}}
\newcommand{\lemref}[1]{Lemma~\ref{#1}}
\newcommand{\thmref}[1]{Theorem~\ref{#1}}
\newcommand{\corrref}[1]{Corollary~\ref{#1}}
\newcommand{\exaref}[1]{Example~\ref{#1}}
\newcommand{\remref}[1]{Remark~\ref{#1}}
\newcommand{\factref}[1]{Fact~\ref{#1}}
\newcommand{\refcite}[1]{Ref.~\cite{#1}}




\global\long\global\long\global\long\def\bra#1{\mbox{\ensuremath{\langle#1|}}}
\global\long\global\long\global\long\def\ket#1{\mbox{\ensuremath{|#1\rangle}}}

\global\long\global\long\global\long\def\kb#1#2{\mbox{\ensuremath{\ensuremath{\ensuremath{|#1\rangle\!\langle#2|}}}}}

\makeatother

\begin{document}

\title{Random bosonic states for robust quantum metrology}

\author{M. Oszmaniec}
\affiliation{ICFO-Institut de Ciencies Fotoniques, The Barcelona Institute of Science and Technology, 08860 Castelldefels (Barcelona), Spain}
\email{michal.oszmaniec@icfo.es} 

\author{R. Augusiak}
\affiliation{ICFO-Institut de Ciencies Fotoniques, The Barcelona Institute of Science and Technology, 08860 Castelldefels (Barcelona), Spain}
\affiliation{Center for Theoretical Physics, Polish Academy of Sciences, Aleja Lotnik\'ow 32/46, 02-668 Warsaw, Poland}

\author{C. Gogolin}
\affiliation{ICFO-Institut de Ciencies Fotoniques, The Barcelona Institute of Science and Technology, 08860 Castelldefels (Barcelona), Spain}
\affiliation{Max-Planck-Institut f{\"u}r Quantenoptik, Hans-Kopfermann-Stra{\ss}e 1, 85748 Garching, Germany}

\author{J. Ko\l{}ody\'{n}ski}
\affiliation{ICFO-Institut de Ciencies Fotoniques, The Barcelona Institute of Science and Technology, 08860 Castelldefels (Barcelona), Spain}

\author{A. Ac\'in}
\affiliation{ICFO-Institut de Ciencies Fotoniques, The Barcelona Institute of Science and Technology, 08860 Castelldefels (Barcelona), Spain}
\affiliation{ICREA-Instituci\'o Catalana de Recerca i Estudis Avan\c cats, Lluis Companys 23, 08010 Barcelona, Spain}

\author{M. Lewenstein}
\affiliation{ICFO-Institut de Ciencies Fotoniques, The Barcelona Institute of Science and Technology, 08860 Castelldefels (Barcelona), Spain}
\affiliation{ICREA-Instituci\'o Catalana de Recerca i Estudis Avan\c cats, Lluis Companys 23, 08010 Barcelona, Spain}

\begin{abstract}
We study how useful random states are for quantum metrology, i.e., whether they surpass the classical limits imposed on precision in the canonical phase sensing scenario.
First, we prove that random pure states drawn from the Hilbert space of distinguishable particles typically do not lead to super-classical scaling of precision even when allowing for local unitary optimization.
Conversely, we show that random pure states from the symmetric subspace typically achieve the optimal Heisenberg scaling without the need for local unitary optimization.
Surprisingly, the Heisenberg scaling is observed for random isospectral states of arbitrarily low purity and preserved under loss of a fixed number of particles. Moreover, we prove that for pure states a standard photon-counting interferometric measurement suffices to typically achieve resolutions following the Heisenberg scaling for all values of the phase at the same time. Finally, we demonstrate that metrologically useful states can be prepared with short random optical circuits generated from three types of beam-splitters and a single non-linear (Kerr-like) transformation.
\end{abstract}

\maketitle

\section{Introduction}
Quantum metrology opens the possibility to exploit quantum features to measure unknown physical quantities with accuracy surpassing the constraints dictated by classical physics \cite{Giovannetti2011,Toth2014,Demkowicz2015,Dowling2015}.
Classically, by employing $N$ probes to independently sense a parameter, the mean squared error of estimation scales at best as $1/N$.
This resolution is also known as the \emph{Standard Quantum Limit} (SQL) \cite{Giovannetti2004}. Quantum mechanics, however, allows one to engineer entangled states of $N$ particles which, when used as probes, can lead to resolutions beyond the SQL. Crucially, in the canonical phase sensing
scenario a precision scaling like $1/N^{2}$, known as the \emph{Heisenberg Limit} (HL) \cite{Giovannetti2006}, may be attained. In practice, the destructive impact of noise must also be taken into account \cite{Huelga1997,Escher2011,Demkowicz2012}, but quantum-enhanced resolutions have been successfully observed in optical interferometry \cite{Mitchell2004,Nagata2007} (including gravitational-wave detection \cite{LIGO2011,*LIGO2013}), ultracold ion spectroscopy \cite{Leibfried2004,Roos2006}, atomic magnetometry \cite{Wasilewski2010,Sewell2012}, and in entanglement-assisted atomic clocks \cite{Appel2009,Louchet2010}.

A fundamental question is to understand which quantum states offer an advantage for quantum metrology. 
Quantum-enhanced parameter sensitivity may only be observed with systems exhibiting inter-particle entanglement \cite{Pezze2009}, thus, 
such enhanced sensitivity can be used to detect multipartite entanglement \cite{Riedel2010,Krischek2011,Strobel2014,Lucke2014} and 
lower-bound the number of particles being entangled \cite{Sorensen2001,Hyllus2012,*Toth2012}.
However, the precise connection between entanglement and a quantum metrological advantage is so far not fully understood.

It is known that states achieving the optimal sensitivity must have
entanglement between all their particles \cite{Hyllus2012,*Toth2012}, like for example the
Greenbegrer-Horne-Zeilinger (GHZ) state (equivalent to the N00N state
in optical interferometry), yet there also exist classes of such
states which are useless from the metrological perspective
\cite{Hyllus2010a}. The optimal states, however, belong to the symmetric (bosonic)
subspace, from which many states have been recognized to
offer a significant advantage in quantum metrology \cite{Hyllus2010a,Apellaniz2015,Benatti2013}. On the other hand, a very weak form of entanglement---so-called undistillable entanglement---may lead to Heisenberg scaling \cite{Czekaj2015}, while any super-classical scaling arbitrarily close to the HL ($1/N^{2-\epsilon}$ with $\epsilon>0$) can be achieved with states whose geometric measure of entanglement vanishes in the limit $N \to \infty$ \cite{Augusiak2016}.

\begin{figure}[!t]
\centering
\includegraphics[width=\columnwidth]{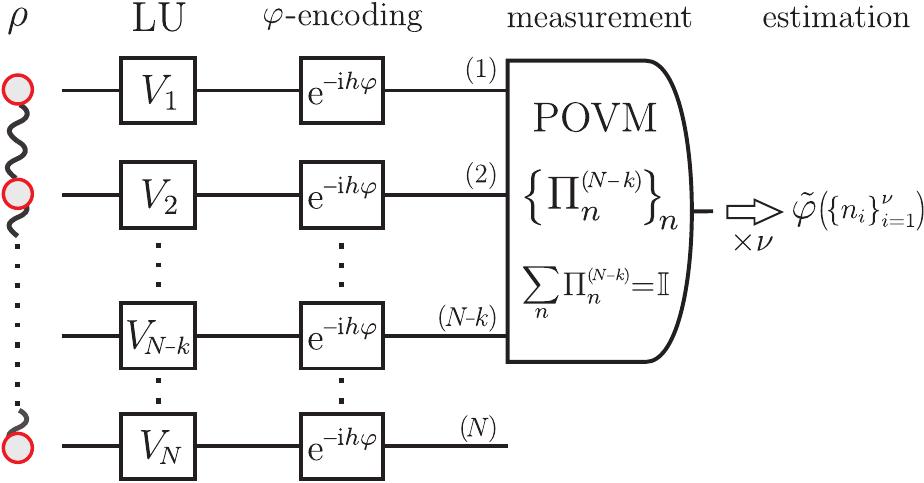}
\caption{(color online)
LU-optimised lossy quantum metrology protocol:~%
A given state $\rho$ is adjusted with local unitary (LU) operations $V_1,\dots,V_N$ to probe as 
precisely as possible small fluctuations of the parameter $\param$, which is independently 
and unitarily encoded into each of the $N$ constituent particles.
Finally, only $N-k$ particles are measured, reflecting the possibility of losing $k$ of them.
The most general measurement is then described by a positive-operator-valued measure (POVM), i.e., a collection of positive semi-definite operators $\{\Pi_n^{N-k}\}_n$ which act on the remaining $N-k$ particles and sum up to the identity.
The process is repeated $\nu$ times, in order to construct the most sensitive estimate of the parameter, $\tilde\param$, based on the measurement outcomes $\{n_i\}_{i=1}^\nu$.}
\label{fig:protocol}
\end{figure}

Here, we go beyond merely presenting examples of states leading to quantum-enhanced precision.
Instead, we conduct a systematic study by analyzing typical properties of the \emph{quantum} and \emph{classical Fisher information} on various ensembles of quantum states.
First, we show that states of distinguishable particles typically are \emph{not} useful for metrology, despite having a large amount of 
entanglement as measured by the entanglement entropy \cite{Page1993,Foong1994,Sen1996,Lewenstein2012} and various other measures \cite{Hayden2006,Gross2009,Oszmaniec2014,moPHD}. On the contrary, we show most pure random states from the 
\emph{symmetric} (bosonic) subspace of any local dimension achieve resolutions at the HL.
Moreover, we prove that the usefulness of random symmetric states is robust against loss of a finite number of particles, and holds also for mixed states with fixed spectra (as long as the distance from the maximally mixed state in the symmetric subspace is sufficiently large).
This is in stark contrast to the case of GHZ states, which completely lose their (otherwise ideal) phase sensitivity upon loss of just a single particle.
Third, we show that, even for a fixed measurement, random pure bosonic states typically allow to sense the phase at the HL.
Concretely, this holds in the natural quantum optics setting of photon-number detection in output modes of a balanced beam-splitter \cite{Bachor2004} and independently of the value of the parameter.
Finally, we demonstrate that states generated using random circuits with gates from a universal gate-set on the symmetric subspace consisting only of beam-splitters and a single non-linear Kerr-like transformation also typically  achieve Heisenberg scaling---again even for a fixed measurement.
As all our findings also equally apply to standard atomic interferometry \cite{Schumm2005,Sebby2007,Zhang2012,Stroescu2015}.
Our work shows that metrological usefulness is a more generic feature than previously thought and opens up new possibilities for quantum-enhanced metrology based on random states.

Lastly, let us note that, as metrological usefulness of quantum states is tantamount 
to the notion of state \emph{macroscopicity} \cite{Froewis2012a}, our results directly apply in this context \cite{Froewis2012,Tichy2015}.
Moreover, since states attaining HL can be used to approximately clone $N$ quantum gates into as many as $N^2$ gates as $N \to \infty$, one can immediately use our findings to also infer that typical symmetric states provide a resource allowing for optimal \emph{asymptotic replication of unitary gates} \cite{Chiribella2013,Duer2015}.

Our results are based on leveraging recent insights concerning the \emph{continuity of quantum Fisher information} \cite{Augusiak2016}, \emph{measure concentration techniques} \cite{Hayden2006,Ledoux2005,Anderson2010,Collins2015}, lately proven results about the \emph{spectral gap in the special unitary group} \cite{Bourgain2011}, as well as the theory of \emph{approximate $t$-designs} \cite{Low09,Low2010a,Brandao2012}.

{Our work sheds new light on the role of symmetric states in 
quantum metrology \cite{Hyllus2010a,Benatti2013,Apellaniz2015,Froewis2014,Tichy2015}. 
In particular, it clarifies the usefulness of symmetric states from the typicality perspective \cite{Hyllus2010a}, 
but also analytically confirms the findings about their typical properties previously suggested by numerical compuations \cite{Froewis2014,Tichy2015}.}

The remainder of the paper is organized as follows.
In \secref{sec:setting} we introduce the setting of quantum parameter estimation including the classical and quantum Fisher information and their operational interpretation.
In \secref{sec:notat} we familiarize the reader with the technique of measure concentration and introduce some additional notation.
In \secref{sec:disting} we present our results on the lack of usefulness of random states of distinguishable particles for quantum metrology.
Subsequently, in \secref{sec:symuseful} we show that states from the symmetric (bosonic) subspace typically attain the HL. Following that, in \secref{sec:robust}, we analyze the robustness of metrological usefulness of such states under noise and loss of particles.
In \secref{sec:phys} we turn our attention to the classical Fisher information in a concrete measurement setup.
We prove that with random pure symmetric states Heisenberg scaling can be typically achieved with a physically accessible measurement setup---essentially a Mach-Zehnder interferometer with particle number detectors.
Finally, in \secref{sec:numerics}, we demonstrate that symmetric states whose metrological properties effectively mimic those of random symmetric states can be prepared by short random circuits generated from a universal gate-set in the symmetric subspace.
We conclude our work in \secref{sec:conclusions}.

\section{Quantum metrology}\label{sec:setting}
We consider the canonical \emph{phase sensing} scenario of quantum metrology 
\cite{Giovannetti2006}, in which one is given $N$ devices (black boxes) that 
encode a phase-like parameter $\param$. The task is to determine the optimal strategy 
of preparing quantum probes, so that after interaction with the devices the probes can be 
measured in a way that reveals highest sensitivity to fluctuations of the parameter $\param$.
Crucially, in quantum mechanics one has the freedom to apply the devices ``in parallel'' to a 
(possibly entangled) state $\rho$ of $N$ particles---see \figref{fig:protocol}.
The whole process is assumed to be repeated many $\nu\gg1$ times, so that sufficient 
statical data is always guaranteed.
Note that parameter sensing is the most ``optimistic'' metrology setting \cite{Lehmann1998}, 
in the sens that any type of general phase estimation problem that also accounts for the 
non-perfect prior knowledge about the value of $\param$ or finite size of the 
measurement data is bound to be more difficult \cite{Gill1995}.

Let $p_{n|\param}$ be the probability that (in any given round) the measurement resulted 
in the outcome $n$ given that the initial state was $\rho$ and the parameter was $\param$.
Then, the mean squared error $\Delta^2\tilde\param$ of any unbiased and consistent estimator $\tilde\param$ of $\param$ is 
lower-bounded by the \emph{Cram\'{e}r-Rao Bound} (CRB) \cite{Lehmann1998}: 
\begin{equation} \label{eq:CRB}
\Delta^2\tilde\param
\geq
\frac{1}{\nu\FI\left(\left\{p_{n|\param}\right\}\right)},
\end{equation}
where
\begin{equation} \label{eq:FI}
\FI\left(\left\{p_{n|\param}\right\}\right)\coloneqq \sum_n
\frac{1}{p_{n|\param}}\left(\frac{\d p_{n|\param}}{\d\param}\right)^2
\end{equation}
is the \emph{classical Fisher information} (FI).
Importantly, in the phase sensing scenario, the CRB \eref{eq:CRB} is guaranteed to be 
tight in the limit of large $\nu$ \cite{Lehmann1998}.
The classical Fisher information, hence, quantifies in an operationally 
meaningful way what is the ultimate resolution to the fluctuations of $\param$ with a 
given measurement and state.

The quantum device is taken to act on a single particle (photon, atom, etc.) with 
Hilbert space $\Hloc$ of finite dimension $\dloc \coloneqq |\Hloc|$ via a Hamiltonian $h$, which without loss of generality we assume to be traceless
(any nonzero trace of $h$ can always be incorporated into an irrelevant global phase-factor).
The device unitarily encodes the unknown parameter $\param$ by performing the transformation $\e^{-\ii h \param}$.
The multi-particle state $\rho \in \H_N \coloneqq \left(\Hloc\right)^{\otimes N}$ moves along the trajectory
\begin{equation}
\label{eq:par_encod}
 \param \mapsto (\e^{-\ii h \param})^{\otimes N}\,\rho\,(\e^{\ii h \param})^{\otimes N} ,
\end{equation}
corresponding to the unitary evolution with the global Hamiltonian 
\begin{equation}
\label{eq:ham_encod}
H \coloneqq H_N \coloneqq \sum_{j=1}^N h^{(j)} .
\end{equation}
The measurement is defined by a positive-operator-valued measure (POVM), $\{\Pi_n^N\}_n$, acting 
on the whole system (or, while accounting for particle losses, only on the remaining to $N-k$ particles, see \figref{fig:protocol}) and satisfying $\sum_n \Pi_n^{N}=\mathbb{I}$.
It yields outcome $n$ with probability 
\begin{equation}
p_{n|\param} = \tr\left(\Pi_n^N\,\e^{-\ii H \param}\,\rho\,\e^{\ii H \param}\right).
\end{equation}

In the seminal work by \citet{Braunstein1994}, it was shown how to quantify the maximal usefulness 
of a state $\rho$ in the above scenario by maximizing the classical Fisher information \eref{eq:FI} over all possible POVMs.
The resulting quantity is called \emph{quantum Fisher Information} (QFI).
It depends solely on the quantum state $\rho$ being considered and the Hamiltonian $H$ responsible for the parameter encoding and we denote it by $\QFI(\rho,H)$.
A QFI scaling faster than linear with $N$ (for the fixed local Hamiltonian $h$) ultimately leads to super-classical resolutions by virtue of the CRB \eref{eq:CRB}. Since the Hamiltonian \eref{eq:ham_encod} is local and parameter-independent, the ultimate HL is unambiguously given by $\FI \propto N^2$ with super-classical scaling being possible solely due to the entanglement properties of $\rho$ and not due to a non-local or non-linear parameter dependence \cite{Boixo2007,Rivas2010,Zwierz2010}. Importantly, thanks to the unitary character of the parameter encoding \eref{eq:par_encod}, 
resolutions that scale beyond SQL can indeed be attained in metrology \cite{Escher2011,Demkowicz2012}.

Although in the phase sensing scenario the optimal measurement can be designed for a particular value of $\param$,
it is often enough to know, prior to the estimation, the parameter to lie within a sufficiently narrow window 
of its values, as then there exist a sequence of measurements which eventually---in the limit of many protocol repetitions 
($\nu\to\infty$ in \figref{fig:protocol}) with measurements adaptively adjusted---yields a classical Fisher information 
that still achieves the QFI \cite{Fujiwara2006}. Crucially, the optimal scaling of the QFI achievable in the phase sensing 
scenario is proportional to $N^2$, what proves the Heisenberg scaling to indeed be the ultimate one.


For the sake of having a concise terminology, we call a family of states for increasing $N$ \emph{useful} for quantum sensing if there exists a Hamiltonian
$h$ in \eqnref{eq:par_encod} such that the corresponding QFI scales faster than $N$ (i.e., $\QFI(\rho, H) \notin \landauO(N)$) in the limit of large $N$.
In contrast, we say that the family of states is \emph{not useful} for quantum sensing (and hence also for all less  ``optimistic'' metrological scenarios), if its QFI scales asymptotically at most like $N$ (i.e., $\QFI(\rho, H) \in \landauO(N)$).
We adopt the above nomenclature for the sake of brevity and concreteness. However, let us stress that states reaching beyond SQL, despite not preserving super-classical precision scaling, may also yield dramatic precision enhancement (e.g., squeezed states in gravitational detectors
\cite{LIGO2011}), which, in fact, guarantees then their rich entanglement
structure \cite{Riedel2010,Krischek2011,Strobel2014,Lucke2014,Sorensen2001,Toth2012}.
Nevertheless, it is the super-classical precision scaling that
manifests the necessary entanglement properties to be maintained with
the system size. In particular, its protection at the level of QFI has recently
allowed to design novel noise-robust metrology protocols \cite{Kessler2014,Brask2015}.

In the remainder of this section we give a mathematical definition of the QFI and discuss some of its properties.
The QFI has an elegant geometric interpretation \cite{Braunstein1994} as the ``square of the speed'' along the trajectory \eref{eq:par_encod} measured with respect to the Bures distance $\dB\left(\rho,\sigma\right)\coloneqq \sqrt{2 \, [1-\mathcal{F}\left(\rho,\sigma\right)]}$, where $\mathcal{F}\left(\rho,\sigma\right) \coloneqq \tr\sqrt{\rho^{1/2}\,\sigma\,\rho^{1/2}}$ is the fidelity.
This allows one to define the QFI geometrically \cite{Bengtsson2006}:
\begin{equation} \label{eq:QFI_geom}
 \QFI\left(\rho,H\right) \coloneqq \left[ \lim_{\param\to 0} 2\,\dB\left(\rho, \e^{-\ii H \param}\,\rho\,\e^{\ii H \param}\right)/|\param| \right]^2.
\end{equation}
This geometric interpretation of QFI is key for the derivation of the following results.
Using the spectral decomposition of a quantum state $\rho = \sum_{j=1}^{d^N}\,p_j \proj{e_j}$, with  $p_j \geq 0$ 
denoting its eigenvalues, we can write the QFI more explicitly as \cite{Toth2014,Demkowicz2015}:
\begin{equation} \label{eq:QFI_formula}
 \QFI\left(\rho,H\right)=2\sum_{j,k:\,p_{j}+p_{k}\neq0}\frac{\left(p_{j}-p_{k}\right)^{2}}{p_{j}+p_{k}}\left|\braket{e_{j}}{H\left|e_{k}\right.}\right|^{2},
\end{equation}
which for pure states, $\rho=\psi$, simplifies further to the variance $\left.\Delta^2H\right|_{\psi}$ of $H$, that is,
\begin{equation} \label{eq:QFI_formula_pure}
\QFI\left(\psi,H\right)=4\left[\tr\left(\psi H^2\right)-\tr\left(\psi H\right)^2\right] \eqqcolon \left.4\Delta^2H\right|_{\psi} .
\end{equation}
Let us also recall that the QFI is a convex function on the space of quantum states.
This reflects that mixing states can never increase their parameter sensitivity above their average sensitivity.
This, together with the fact that the QFI is also additive on product states \cite{Toth2014,Demkowicz2015}, directly proves that only entangled states can lead to resolutions beyond the SQL.

\section{Concentration of measure phenomenon} \label{sec:notat}
Before presenting our main results, we briefly discuss a key ingredient for their proofs---the \emph{concentration of measure} phenomenon \cite{Ledoux2005,Anderson2010,Collins2015}.
For a more detailed discussion we point the reader to \appref{sec:concentration}.
For any finite dimensional Hilbert space we denote by $\mu\left(\H\right)$ the Haar measure on the special unitary group $\SU\left(\H\right)$.
The Haar measure can be thought of as the uniform distribution over unitary transformations.
We denote by $\Pr_{U \sim \mu\left(\H\right)}(A(U))$ the probability that a statement $A$ holds for unitaries $U$ drawn from the measure $\mu\left(\H\right)$ and by
\begin{equation}\label{Average}
\Expect_{U \sim \mu\left(\H\right)} f(U) \coloneqq \int_{\SU\left(\H\right)}d\mu\left(U\right) f(U)
\end{equation}
the expectation value of a function $f:\SU\left(\H\right) \to \R$.
Our findings concern the typical value of such functions.
For example, $f(U)$ could be the QFI of some family of so-called isospectral states, i.e., states of the form $U\,\rho\,U^\dagger$, with $\rho$ some fixed state on $\H$ and $U \sim \mu\left(\H\right)$ a unitary drawn from the Haar measure on $\SU(\H)$ \cite{Oszmaniec2014}.
Note that as $\QFI(U\,\rho\,U^\dagger,H) = \QFI(\rho,U^\dagger\,H\,U)$ (this follows directly 
from \eqnref{eq:QFI_formula}), all our results can also be interpreted as being about random Hamiltonians instead of random states.

To show that for almost all $U$ the value of such a function is close to the typical value and that this typical value is close to the average, we repeatedly employ the following concentration of measure inequality \cite{Anderson2010}:
\begin{equation}\label{eq:concentrsu1}
 \Pr_{U \sim \mu\left(\H\right)}\left( \left|f\left(U\right)-\Expect_{U \sim \mu\left(\mathcal{\mathcal{H}}\right)}f\right|\geq\epsilon\right)
 \leq 
2\, \mathrm{exp}\left(-\frac{|\H|\epsilon^{2}}{4L^{2}}\right)\ 
\end{equation}
It holds for every $\epsilon\geq0$ and every function $f:\mathrm{SU}\left(\H\right) \to \R$ that is \emph{Lipschitz continuous} (with respect to the geodesic distance on $\mathrm{SU}\left(\H\right)$) and thus possesses its corresponding Lipschitz constant $L$.
Recall that the Lipschitz constant gives the bound on how fast the value of a function can change under a change of its argument.
For a formal definition of $L$ see \eqnref{eq:lipschitz constant formula} in \appref{sec:concentration}, where we explicitly prove bounds on Lipschitz constants of all the functions relevant for our considerations.
Let us here only note that as both the FI \eref{eq:FI} and the QFI \eref{eq:QFI_geom} are non-trivial (in particular non-linear) functions 
of quantum states, we need to resort to advanced techniques of differential geometry.

Before we move on to our results, we introduce a minimal amount of additional notation:
Given two functions $f,g$ we write $f(N)\in\Theta(g(N))$ if both functions have the same behaviour in the limit of large $N$ (up to a positive multiplicative constant) and write $f(N)\in\landauO(g(N))$ if there exists a constant $C$ such that $f(N) \leq C\,g(N)$ in the limit of large $N$.
Slightly abusing notation, we sometimes also use the symbols $\Theta(f(N))$ and $\landauO(f(N))$ to denote an arbitrary function with the same asymptotic behavior as $f$.
For any operator $X$ we denote its operator norm by
\begin{equation}
\|X\|\coloneqq \sup_{|\psi\rangle\neq0}\frac{\left\|X\ket{\psi}\right\|}{\|\ket{\psi}\|},
\end{equation}
where $\|\ket{\psi}\|=\sqrt{\langle\psi|\psi\rangle}$ stands for the standard vector norm. 
Then, the trace and the Hilbert-Schmidt norms of $X$ are defined 
as $\|X\|_1\coloneqq \tr\sqrt{X^\dagger X}$ and 
$\left\|X\right\|_\t{HS} \coloneqq \sqrt{\tr\left(X^\dagger X\right)}$, respectively.
These generally 
obey the relation
$\|X\|\!\le\!\|X\|_\t{HS}\!\le\!\|X\|_1$.

\section{Futility of general random states} \label{sec:disting}
First, we show that Haar-random isospectral states of distinguishable particles are typically not useful for quantum metrology even if they are pure and hence typically highly entangled \cite{Page1993,Foong1994, Sen1996,Hayden2006,Hayden2006,Gross2009,Oszmaniec2014,moPHD}.
This remains true even if one allows for local unitary (LU) optimization  before the parameter is encoded (see \figref{fig:protocol}). 
Note that in the special case $d=2$ the LU-optimization of the input state is equivalent to an optimization over 
all unitary parameter encodings.
The maximal achievable QFI with LU-optimization is given by
\begin{equation} \label{eq:QFI_LU}
\QFI^\mathrm{LU}\left(\rho,H\right) \coloneqq \sup_{V \in \mathrm{LU}} \QFI\left(V\,\rho\,V^\dagger,H\right),
\end{equation}
where $\mathrm{LU}$ denotes the local unitary group, i.e., a group consisting of unitaries
of the form $V = V_1 \otimes V_2 \otimes \dots \otimes V_N$ with $V_j$ acting on the $j$-th particle of the system (see \figref{fig:protocol}).
Despite the fact that for other states this sometimes boosts their QFI \cite{Hyllus2010a}, we have the following no-go theorem for random states from the full Hilbert space $\H_N$ of distinguishable particles:
\begin{thm}[Most random states of distinguishable particles are not useful for metrology even after LU-optimization]
\label{thm:moststatesareuseless}
Fix a local dimension $d$, single-particle Hamiltonian $h$, and a pure state $\psi_N$ on $\H_N$.
Let $\QFI^{\mathrm{LU}}(U) \coloneqq \QFI^{\mathrm{LU}}(U\,\psi_N\,U^\dagger, H)$, then
 \begin{equation}\label{eq:distineq}
  \Pr_{U \sim \mu\left(\H_N\right)}\left( \QFI^{\mathrm{LU}}(U) \notin \Theta\left(N\right) \right) \leq \exp\left(-\Theta\left(\frac{d^N}{N^2}\right)\right)
 \end{equation}
\end{thm}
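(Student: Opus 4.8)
My plan is to establish $\Expect_{U\sim\mu(\H_N)}\QFI^{\mathrm{LU}}(U)\in\Theta(N)$ and then transfer this to a high-probability statement through the concentration inequality~\eqref{eq:concentrsu1}. This requires two inputs: a bound on the Lipschitz constant of $U\mapsto\QFI^{\mathrm{LU}}(U)$ on $\SU(\H_N)$, and the estimate of the mean. For the first I would use the identity $\QFI(VU\psi_N U^\dagger V^\dagger,H)=\QFI(U\psi_N U^\dagger,V^\dagger HV)$ to write $\QFI^{\mathrm{LU}}(U)=\sup_{V\in\mathrm{LU}}\QFI(U\psi_N U^\dagger,V^\dagger HV)$, and then observe that $\|V^\dagger HV\|=\|H\|\le N\|h\|$ for every $V$. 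Combining the pure-state formula~\eqref{eq:QFI_formula_pure} with the differential-geometric estimates collected in \appref{sec:concentration}, each function $U\mapsto\QFI(U\psi_N U^\dagger,V^\dagger HV)$ is then $\landauO(N^2)$-Lipschitz with a constant that does not depend on $V$ (only $V^\dagger HV$, through its norm, enters the bound); since a supremum of uniformly Lipschitz functions retains that Lipschitz constant, $U\mapsto\QFI^{\mathrm{LU}}(U)$ is $\landauO(N^2)$-Lipschitz on $\SU(\H_N)$, and it is bounded pointwise by $4\|H\|^2\in\landauO(N^2)$.

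For the mean, the lower bound is easy: taking $V=\mathbb{I}$ gives $\QFI^{\mathrm{LU}}(U)\ge\QFI(U\psi_N U^\dagger,H)$, and since $U\ket{\psi_N}$ is a Haar-random unit vector of $\H_N$, the second-moment (unitary $2$-design) average of~\eqref{eq:QFI_formula_pure} yields $\Expect_U\QFI(U\psi_N U^\dagger,H)=4\tr(H^2)/(d^N+1)$. Because $h$ is traceless the cross terms of $H^2=\big(\sum_j h^{(j)}\big)^2$ vanish and $\tr(H^2)=N\tr(h^2)\,d^{N-1}$, so $\Expect_U\QFI^{\mathrm{LU}}(U)\ge 4N\tr(h^2)d^{N-1}/(d^N+1)\in\Theta(N)$ (here $\tr(h^2)>0$ because $h$ is Hermitian and nonzero; if $h=0$ the statement is vacuous).

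The hard part will be the matching upper bound $\Expect_U\QFI^{\mathrm{LU}}(U)\in\landauO(N)$, i.e., ruling out that the local-unitary optimization lifts a random state past linear scaling even though, a priori, it could reach $\landauO(N^2)$. I would control the supremum over $\mathrm{LU}$ with a net argument that uses the fact that $\mathrm{LU}$ is a compact Lie group of dimension only $\landauO(N)$, hence carries an $\eta$-net $\mathcal{N}_\eta$ with $\log|\mathcal{N}_\eta|=\landauO(N\log(1/\eta))$. Since $V\mapsto V^\dagger HV$ is $\landauO(N)$-Lipschitz and $A\mapsto\QFI(U\psi_N U^\dagger,A)$ is $\landauO(N)$-Lipschitz on the ball $\{\|A\|\le N\|h\|\}$, one gets $\QFI^{\mathrm{LU}}(U)\le\max_{V_0\in\mathcal{N}_\eta}\QFI(U\psi_N U^\dagger,V_0^\dagger HV_0)+\landauO(N^2\eta)$. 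For each fixed $V_0$ the same $2$-design computation gives mean $4\tr(H^2)/(d^N+1)\in\Theta(N)$, and~\eqref{eq:concentrsu1} with the Lipschitz constant above makes $\QFI(U\psi_N U^\dagger,V_0^\dagger HV_0)$ sub-Gaussian around that mean with variance proxy $\landauO(N^4/d^N)$; the standard bound on the expected maximum of $|\mathcal{N}_\eta|$ sub-Gaussian variables then yields $\Expect_U\max_{V_0}\QFI(U\psi_N U^\dagger,V_0^\dagger HV_0)\le\Theta(N)+\landauO\!\big(N^2\sqrt{\log|\mathcal{N}_\eta|/d^N}\,\big)$. Choosing $\eta=\Theta(1/N)$ (so that $\log|\mathcal{N}_\eta|=\landauO(N\log N)$) makes every correction term $\landauO(N)$, so $\Expect_U\QFI^{\mathrm{LU}}(U)\in\landauO(N)$; together with the previous paragraph, $\Expect_U\QFI^{\mathrm{LU}}(U)\in\Theta(N)$.

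It remains to conclude: applying~\eqref{eq:concentrsu1} to $\QFI^{\mathrm{LU}}$ itself with deviation $\epsilon$ a fixed small multiple of $N$ and Lipschitz constant $\landauO(N^2)$, the exponent becomes $d^N\epsilon^2/(4L^2)=\Theta(d^N N^2/N^4)=\Theta(d^N/N^2)$, giving $\Pr_U\big(|\QFI^{\mathrm{LU}}(U)-\Expect_U\QFI^{\mathrm{LU}}(U)|\ge\epsilon\big)\le\exp(-\Theta(d^N/N^2))$; with the constants in the target $\Theta(N)$ chosen suitably, the complement of $\{\QFI^{\mathrm{LU}}(U)\in\Theta(N)\}$ is contained in this deviation event, which is~\eqref{eq:distineq}. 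The genuinely delicate step is the net argument of the third paragraph---in particular the need to balance the net resolution $\eta$ against the (polynomial) dimension of $\mathrm{LU}$ and the various Lipschitz constants---whereas everything else is a direct Haar average or a routine use of~\eqref{eq:concentrsu1}.
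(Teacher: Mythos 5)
Your argument is correct, and it reaches the conclusion by a genuinely different route at the one place where the proof is nontrivial, namely the upper bound on $\Expect_U \QFI^{\mathrm{LU}}(U)$. The paper never discretizes the local-unitary group: it starts from $\QFI(\psi,H)\le 4\tr(\psi H^2)$, splits $H^2=\sum_{j,k}V_j h^{(j)}V_j^\dagger V_k h^{(k)}V_k^\dagger$ into diagonal terms (which contribute at most $4N\|h\|^2$ for \emph{every} $U$ and every $V$) and off-diagonal terms, and bounds each off-diagonal term \emph{uniformly in} $V_j,V_k$ by $\|h\|^2\left\|\tr_{\neg j,k}(U\psi_N U^\dagger)-\1/d^2\right\|_1$, whose Haar average is exponentially small by the Popescu--Short--Winter estimate; the supremum over $\mathrm{LU}$ thus disappears for free, and the mechanism behind the theorem (near-maximally-mixed two-particle marginals) is made explicit, with the clean constant $4N\|h\|^2(1+o(1))$. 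You instead control the supremum by an $\eta$-net on $\mathrm{LU}$, using that this group has only $\landauO(N)$ real dimensions while the per-point concentration exponent is $\Theta(d^N/N^4)$, and then take an expected maximum of sub-Gaussian variables; the parameter balance you describe ($\eta=\Theta(1/N)$, $\log|\mathcal{N}_\eta|=\landauO(N\log N)$, variance proxy $\landauO(N^4/d^N)$) does close, since the correction terms are respectively $\landauO(N)$ and $o(1)$. What your approach buys is generality --- it would handle a supremum over any compact, polynomial-dimensional family of encodings without needing $H^2$ to decompose into few-body terms --- at the price of a looser constant in the $\landauO(N)$ bound and substantially more bookkeeping; the paper's marginal-based argument is sharper and is what underlies its later remarks connecting (non-)usefulness to two-particle reduced states. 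The remaining ingredients of your proof (the Lipschitz constant $\landauO(N^2)$ for the supremum via $\sup$ of uniformly Lipschitz functions, the lower bound $4\tr(H^2)/(d^N+1)\in\Theta(N)$ on the mean, and the final two-sided application of \eqref{eq:concentrsu1} with $\epsilon\propto N$ giving the exponent $\Theta(d^N/N^2)$) coincide with the paper's.
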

\begin{proof}[Proof sketch.]
 
 From \eqnref{eq:QFI_formula_pure} we have that $\QFI(\psi,H)\leq 4\,\tr(\psi\,H)$, which implies
 \begin{equation}
  \begin{split}
   &\QFI^{\mathrm{LU}}\left(U\,\psi_N\,U^\dagger,H\right) \leq 4\sup_{V\in\mathrm{LU}} \tr\left(U\,\psi_N\,U^\dagger V\,H^2\,V^\dagger \right) \\
   &= 4\sup_{V_1,\dots,V_N} \sum_{j,k}\tr\left(U\,\psi_N\,U^\dagger V_j\,h^{(j)}\,V_j^\dagger \, V_k\,h^{(k)}\,V_k^\dagger\right).
  \end{split}
 \end{equation}
 The terms with $j=k$ give a contribution of at most $4\,N\,\| h \|^2$.
 In the remaining terms, however, the operator $V_j\,h^{(j)}\,V_j^\dagger \, V_k\,h^{(k)}\,V_k^\dagger$ is traceless, so that
 \begin{equation}
  \begin{split}
   &\sup_{V_1,\dots,V_N}\sum_{j \neq k} \tr\left(U\,\psi_N\,U^\dagger V_j\,h^{(j)}\,V_j^\dagger \, V_k\,h^{(k)}\,V_k^\dagger\right)\\
   &\quad\leq \| h\|^2 \sum_{j\neq k} \left\|\tr_{\neg j,k}(U\,\psi_N\,U^\dagger) - \1/d^2 \right\|_1 \ .
  \end{split}
 \end{equation}
 But, the average of $\|\tr_{\neg j,k}(U\,\psi_N\,U^\dagger) - \1/d^2 \|_1$ can be upper bounded exponentially \cite{Popescu2006} as
 \begin{equation}
   \Expect_{U \sim \mu\left(\H_N\right)} \left\|\tr_{\neg j,k}(U\,\psi_N\,U^\dagger) - \1/d^2 \right\|_1 \leq \frac{d^2}{\sqrt{d^N}} ,
 \end{equation}
 so that
 \begin{equation}\label{eq:ineq1}
  \Expect_{U \sim \mu\left(\H_N\right)} \QFI^{\mathrm{LU}}(U) \leq 4N\|h\|^2 \left[1+\frac{(N-1)\,d^2}{\sqrt{d^N}}\right].
 \end{equation}
 Conversely, a direct computation of the average non LU-optimized QFI yields a lower bound of order $N\|h\|^2$.
 Applying a concentration inequality of the type given in \eqnref{eq:concentrsu1} yields the result (see \appref{sec:finproofs} for details).
\end{proof}

Due to the convexity of QFI the typical behavior of the QFI on any unitary-invariant ensembles \cite{Bengtsson2006} of mixed density matrices in $\H_N$ can only be worse than that of pure states predicted by the above theorem. Furthermore, a bound similar to \eqnref{eq:ineq1} can also be derived for Hamiltonians $H$ with few body terms, like for example such with finite or short range interactions on regular lattices or those considered in \refcite{Boixo2007}.
Lastly, let us remark that, as we consider the most optimistic phase sensing protocol, 
\thmref{thm:moststatesareuseless} also disproves possibility of super-classical scalings 
of precision when considering random states in any general phase estimation 
protocol \cite{Gill1995}, e.g., the Bayesian single-shot scenarios \cite{Berry2000,Bagan2001b,Bartlett2007}.

The above proof relies on the fact that most random states on $\H_N$ have very mixed two-particle marginals. 
Thus, high entanglement entropy is enough to make random pure states on $\H_N$ useless for quantum metrology.
Complementing this, in \cite{Augusiak2016} it has been proven that non-vanishing geometric measure of entanglement 
$E_g\left(\psi\right)\in\Theta\left(1\right)$ is at the same time necessary for Heisenberg scaling 
(recall that the geometric measure of entanglement for a pure state $\psi$ is defined 
as $E_g(\psi)\coloneqq 1- \mathrm{max}_{\sigma\in\mathrm{SEP} } \tr(\psi \sigma)$, where $\mathrm{SEP}$
denotes the set of separable states in $\mathcal{D}\left(\H_N\right)$).
Interestingly, pure random states of $N$ particles do typically satisfy $E_g\left(\psi\right) \approx 1$ \cite{Gross2009}.
This, {together with \thmref{thm:moststatesareuseless} shows that, contrary to a recent
conjecture of \refcite{Rezakhani2015}, high geometric measure of entanglement is not sufficient for states to exhibit 
super-classical precision scaling in quantum metrology. Let us however note that this is consistent with numerical findings of \refcite{Tichy2015}}. 
That the presence or absence of super-classical scaling of the QFI arises solely from 
the two-particle marginals has recently also been noted in \refcite{Baumgratz2016}.

\section{Usefulness of random symmetric states} \label{sec:symuseful}
We now turn to the study of random states from the symmetric (bosonic) subspace of $N$ qudits, $\Hsym_N \coloneqq \lspan\{\ket{\psi}^{\otimes N}: \ket{\psi} \in \Hloc\}$, which is of dimension $|\Hsym_N| = \binom{N+d-1}{N}\in\Theta\left(N^{d-1}\right)$.
This subspace of states contains metrologically useful states such as the GHZ state or the Dicke states \cite{Apellaniz2015} and naturally appears in experimental setups employing photons and bosonic atoms \cite{Demkowicz2015}.  For the special case $d=2$ it was proven in \cite{Hyllus2010a} that with LU optimization almost all pure symmetric states exhibit $\QFI^{\mathrm{LU}}>4\,N\,\|h\|^2$.

In what follows we consider random isospectral symmetric states, i.e., states of the form $U\,\sigma_N\,U^\dagger$ with $\sigma_N$ being a fixed state on $\Hsym_N$ and $U \sim \mu\left(\Hsym_N\right)$.
By $\sigma_{\mathrm{mix}}=\1_{\Hsym_N}/|\Hsym_N|$ we denote the maximally mixed state in $\Hsym_N$.
In particular, we prove that typically such symmetric states achieve a Heisenberg-like scaling, provided that the spectrum of $\sigma_N$ differs sufficiently from the spectrum of  $\sigma_{\mathrm{mix}}$.
Interestingly, this holds even without LU-optimization:
\begin{thm}[Most random isospectral symmetric states are useful for quantum sensing]
\label{thm:symmetricstatesareuseful}
Fix a single-particle Hamiltonian $h$, local dimension $d$ and a state $\sigma_N$ from the symmetric subspace $\Hsym_N$ with eigenvalues $\{p_j\}_j$.
Let $\QFI(U) \coloneqq \QFI(U\,\sigma_N\,U^\dagger, H)$, then
 \begin{equation}\label{isospectralbound}
  \begin{split}
   &\Pr_{U \sim \mu\left(\Hsym_N\right)} \left( \QFI\left(U\right) < \dB\left(\sigma_N,\sigma_{\mathrm{mix}}\right)^2 \Theta\left(N^2\right) \right) \\
   &\qquad\leq \mathrm{exp}\left( - \dB\left(\sigma_N,\sigma_{\mathrm{mix}}\right)^3 \cdot \Theta\left(N^{d-1}\right) \right).
  \end{split}
 \end{equation}
\end{thm}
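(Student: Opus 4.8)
The plan is to follow the template of \thmref{thm:moststatesareuseless}, but now \emph{lower}-bounding: first compute the Haar average of $\QFI(U)$ over $\SU(\Hsym_N)$ and show it is $\dB(\sigma_N,\sigma_{\mathrm{mix}})^2\,\Theta(N^2)$, then bound the Lipschitz constant of $U\mapsto\QFI(U)$, and finally feed both into the concentration inequality \eqnref{eq:concentrsu1} with $|\Hsym_N|=\binom{N+d-1}{N}\in\Theta(N^{d-1})$.

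For the average I would use that $H$ commutes with all permutations, hence preserves $\Hsym_N$, so in the QFI formula \eqnref{eq:QFI_formula} for $\rho=U\sigma_NU^\dagger$ one may replace $H$ by its compression $H_{\mathrm{sym}}=P_{\mathrm{sym}}HP_{\mathrm{sym}}$. Knowing that the one- and two-particle marginals of $\sigma_{\mathrm{mix}}$ are $\1/d$ and $\tfrac2{d(d+1)}P_{\mathrm{sym},2}$, a short computation (using $\tr h=0$) gives $\tr H_{\mathrm{sym}}=0$ and $\tr(H_{\mathrm{sym}}^2)=|\Hsym_N|\,\tr(\sigma_{\mathrm{mix}}H^2)$ with $\tr(\sigma_{\mathrm{mix}}H^2)=\tfrac{N(N+d)}{d(d+1)}\tr h^2\in\Theta(N^2)$. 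A standard second-moment Haar integral over a random orthonormal pair then gives $\Expect_U|\braket{e_j}{U^\dagger H_{\mathrm{sym}}U|e_k}|^2=\tr(H_{\mathrm{sym}}^2)/(|\Hsym_N|^2-1)$ for $j\neq k$ (the $\SU$-versus-$U$ discrepancy being $\landauO(1/|\Hsym_N|)$), so that
\begin{equation}
\Expect_U\QFI(U)=\frac{2\,|\Hsym_N|\,\tr(\sigma_{\mathrm{mix}}H^2)}{|\Hsym_N|^2-1}\;\sum_{j\neq k}\frac{(p_j-p_k)^2}{p_j+p_k}.
\end{equation}
The elementary fact I then need is $\sum_{j\neq k}\tfrac{(p_j-p_k)^2}{p_j+p_k}\ge|\Hsym_N|\,\dB(\sigma_N,\sigma_{\mathrm{mix}})^2$: with $r_j=\sqrt{p_j}$ and $T=\sum_jr_j\le\sqrt{|\Hsym_N|}$, the bound $(r_j+r_k)^2\ge r_j^2+r_k^2$ gives $\sum_{j\neq k}\tfrac{(p_j-p_k)^2}{p_j+p_k}\ge\sum_{j\neq k}(r_j-r_k)^2=2(|\Hsym_N|-T^2)$, while $\dB(\sigma_N,\sigma_{\mathrm{mix}})^2=2(1-T/\sqrt{|\Hsym_N|})$ since $\sigma_{\mathrm{mix}}\propto\1$; dividing the two yields the claim, so $\Expect_U\QFI(U)\ge2\,\tr(\sigma_{\mathrm{mix}}H^2)\,\dB(\sigma_N,\sigma_{\mathrm{mix}})^2=\dB(\sigma_N,\sigma_{\mathrm{mix}})^2\,\Theta(N^2)$.

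For the Lipschitz bound I would use the complementary estimates $(r_j+r_k)^2\le2(r_j^2+r_k^2)$ and $(r_j-r_k)^2\le2(r_j-1/\sqrt{|\Hsym_N|})^2+2(r_k-1/\sqrt{|\Hsym_N|})^2$ in \eqnref{eq:QFI_formula}, which show that for \emph{any} operator $A$ on $\Hsym_N$ and any orthonormal basis one has $2\sum_{j\neq k}\tfrac{(p_j-p_k)^2}{p_j+p_k}|\braket{e_j}{A|e_k}|^2\le16\,\dB(\sigma_N,\sigma_{\mathrm{mix}})^2\,\|A\|^2$, in particular the a priori bound $\QFI(U)\le16\,\dB(\sigma_N,\sigma_{\mathrm{mix}})^2\,\|H_{\mathrm{sym}}\|^2$ uniformly in $U$. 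Differentiating $\QFI(U)$ along a tangent vector $X$ replaces, in one factor of each summand, $H_{\mathrm{sym}}$ by $[H_{\mathrm{sym}},X]$; a Cauchy--Schwarz over the pairs $(j,k)$ together with this a priori bound (applied with $A=H_{\mathrm{sym}}$ and with $A=[H_{\mathrm{sym}},X]$, and $\|[H_{\mathrm{sym}},X]\|\le2\|H_{\mathrm{sym}}\|\,\|X\|$, $\|H_{\mathrm{sym}}\|\le N\|h\|$) then bounds the Lipschitz constant as $L\le\dB(\sigma_N,\sigma_{\mathrm{mix}})^2\,\Theta(N^2)$ for the geodesic metric on $\SU(\Hsym_N)$. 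This is the step I expect to be the main obstacle: the QFI is genuinely non-Lipschitz on the full state space (it jumps when an eigenvalue reaches $0$), so one has to argue on a \emph{fixed} isospectral orbit and track the spectral dependence carefully; the detailed differential-geometric verification, valid also when $\sigma_N$ has degenerate or vanishing eigenvalues, I would carry out in \appref{sec:concentration}, leaning on the QFI-continuity tools of \refcite{Augusiak2016}.

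Finally I would apply \eqnref{eq:concentrsu1} with the threshold set to $\tfrac12\Expect_U\QFI(U)$: if $\QFI(U)$ drops below it, then $|\QFI(U)-\Expect_U\QFI(U)|\ge\tfrac12\Expect_U\QFI(U)\ge\dB(\sigma_N,\sigma_{\mathrm{mix}})^2\,\Theta(N^2)$, hence $\Pr_{U\sim\mu(\Hsym_N)}(\QFI(U)<\dB(\sigma_N,\sigma_{\mathrm{mix}})^2\Theta(N^2))\le2\exp(-|\Hsym_N|(\dB(\sigma_N,\sigma_{\mathrm{mix}})^2\Theta(N^2))^2/(4L^2))$. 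Because $L$ and $\Expect_U\QFI$ each carry exactly two powers of $\dB(\sigma_N,\sigma_{\mathrm{mix}})$ (and of $N^2$), those powers cancel in the exponent, which is then of order $|\Hsym_N|\in\Theta(N^{d-1})$; bounding one surviving factor by $\dB(\sigma_N,\sigma_{\mathrm{mix}})^2\le2$ turns this into the (formally weaker) form $\dB(\sigma_N,\sigma_{\mathrm{mix}})^3\,\Theta(N^{d-1})$ of \eqnref{isospectralbound}. A minor loose end is taming the difference between the Haar measures on $\SU(\Hsym_N)$ and $U(\Hsym_N)$ in the second-moment integrals, which only produces subleading corrections absorbed into the $\Theta(\cdot)$'s; I would defer the full bookkeeping to \appref{sec:finproofs}.
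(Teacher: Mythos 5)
Your proposal is correct and follows the same overall skeleton as the paper's proof (Haar average bounded below by $\dB(\sigma_N,\sigma_{\mathrm{mix}})^2\,\Theta(N^2)$, a Lipschitz bound on the isospectral orbit, then the concentration inequality \eqref{eq:concentrsu1}), and your two main ingredients check out: the average formula agrees with the paper's \lemref{main result integration} combined with \eqref{symmetric trace} (your marginal-based computation of $\tr(\sigma_{\mathrm{mix}}H^2)=\tfrac{N(N+d)}{d(d+1)}\tr h^2$ is in fact slicker than the paper's combinatorial evaluation of $\tr_{\Hsym_N}(H^2)$), and your spectral lower bound $\sum_{j\neq k}(p_j-p_k)^2/(p_j+p_k)\geq|\Hsym_N|\,\dB^2$ is exactly the content of \factref{christian bound} together with \eqref{eq:funnyineq}. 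The genuine divergence is in the Lipschitz step. The paper obtains $L\leq 32\|H\|^2\min\{1,2\sqrt{2}\sqrt{\dB}\,\}$ by combining the QFI-continuity inequality of \refcite{Augusiak2016} with the geometric identity $\lim_{\phi\to0}\dB(\rho,\e^{-\ii\phi X}\rho\,\e^{\ii\phi X})/|\phi|=\tfrac12\sqrt{\QFI(\rho,X)}$ and the a priori bound $\QFI\leq 32\,\dB\,\|X\|^2$ (\lemref{important Lipschitz}); you instead differentiate the explicit spectral formula directly and use the sharper spectrum-weighted estimate $2\sum_{j\neq k}w_{jk}|A_{jk}|^2\leq 16\,\dB^2\|A\|^2$ (which follows from $w_{jk}\leq 2(\sqrt{p_j}-\sqrt{p_k})^2$ and $\sum_j(\sqrt{p_j}-1/\sqrt{|\Hsym_N|})^2=\dB^2$), giving $L\leq 64\,\dB^2\|H\|^2$. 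Since $\QFI(U)$ restricted to a fixed isospectral orbit is a balanced polynomial in the entries of $U$ and $\bar U$ (the weights $w_{jk}$ vanish on degenerate pairs, so degeneracies cause no trouble), the differentiation you defer to an appendix is routine and the "non-Lipschitz on the full state space" worry does not materialize. Your route buys a strictly better $\dB$-dependence: the powers of $\dB$ cancel exactly in the exponent, yielding $\exp(-\Theta(N^{d-1}))$ with no $\dB$ penalty, which you then only weaken to the stated $\exp(-\dB^3\,\Theta(N^{d-1}))$ via $\dB\leq\sqrt2$; the paper's $\sqrt{\dB}$-scaling Lipschitz constant is what forces the residual $\dB^3$ in \eqref{isospectralbound}. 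Your fretting over the $\SU$-versus-$\mathrm{U}$ second moments is moot, since balanced degree-$(2,2)$ monomials average identically over both groups (cf.\ \factref{integration}, stated directly for $\SU$).
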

\begin{proof}[Proof sketch.]
 We use the standard integration techniques (see \appref{sec:averages} for details) on the unitary group to show that
 \begin{equation} \label{eq:result integration}
  \Expect_{U \sim \mu\left(\mathcal{\Hsym}_N\right)} \QFI(U) = \tr(h^2)\; \mathcal{G}(N,d)\;\Lambda\!\left(\{p_j\}_j\right) \, ,
 \end{equation}
 where
 \begin{equation}
  \mathcal{G}(N,d) \coloneqq \frac{4\, N\,(N+d) }{d\,(d+1)} \frac{|\Hsym_N|}{|\Hsym_N|+1},
 \end{equation}
 and $\Lambda\left(\{p_j\}_j\right)$ is a function of the eigenvalues $\{p_j\}_j$ which for pure states
 attains $\Lambda=1$. 
Therefore, for the case of pure states we have
 \begin{equation}\label{eq:av_pure}
   \Expect_{U \sim \mu\left(\mathcal{\Hsym}_N\right)} \QFI(U\psi_N U^\dagger,H ) 
= 
\tr(h^2)
\;\mathcal{G}(N,d)
 \end{equation}
 where $\psi_N$ is an arbitrary pure state on $\Hsym_N$.
 From this it clearly follows that the average QFI of random pure symmetric states exhibits Heisenberg scaling in the limit $N\rightarrow\infty$.
 Moreover, it turns out that $\Lambda\left(\{p_j\}_j\right)$ satisfies the inequality
 \begin{equation}\label{eq:funnyineq}
  \Lambda(\{p_j\}_j) \geq \frac{|\Hsym_N|}{|\Hsym_N|-1} \frac{\dB\left(\sigma_N,\sigma_{\mathrm{mix}}\right)^2}{2} \ .
 \end{equation}
 The inequality \eqref{isospectralbound} follows now from concentration of measure inequalities on $\SU\left(\Hsym_N\right)$
 (see \appref{sec:finproofs} for a detailed proof).
\end{proof}
As the Bures distance to the maximally mixed state $\dB\left(\sigma_N,\sigma_{\mathrm{mix}}\right)$ enters the theorem in a non-trivial way, we illustrate the power of the result by showing that even states that asymptotically move arbitrary close to $\sigma_{\mathrm{mix}}$ still typically achieve a super-classical scaling:
\begin{corr}[Sufficient condition for usefulness of random isospectral symmetric states]
\label{corr:sym_useful}
Let $U\,\sigma_N\,U^\dagger$ be an ensemble of isospectral states from the symmetric subspace $\Hsym_N$ with eigenvalues $\{p_j\}_j$.
\thmref{thm:symmetricstatesareuseful} implies that random states drawn from such a ensemble are typically useful for sensing as for any $\alpha<\min\left\{1/2,(d-1)/3\right\}$ they yield a precision scaling $1/N^{2(1-\alpha)}$ provided that $\dB\left(\sigma_N,\sigma_{\mathrm{mix}}\right)\geq 1/N^\alpha$.
\end{corr}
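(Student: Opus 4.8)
The plan is to read off Corollary~\ref{corr:sym_useful} as a direct specialization of Theorem~\ref{thm:symmetricstatesareuseful} to the sub-family of states satisfying $\dB(\sigma_N,\sigma_{\mathrm{mix}})\geq 1/N^\alpha$, and then to keep track of exponents. The first step is to rewrite \eqnref{isospectralbound} with explicit positive constants $c,c'$ in place of the two $\Theta(\cdot)$ symbols, valid for all sufficiently large $N$:
\[
\Pr_{U\sim\mu(\Hsym_N)}\!\Big(\QFI(U) < c\,\dB(\sigma_N,\sigma_{\mathrm{mix}})^2\,N^2\Big)\;\leq\;\exp\!\Big(-c'\,\dB(\sigma_N,\sigma_{\mathrm{mix}})^3\,N^{d-1}\Big).
\]
Here one should observe that the bound of Theorem~\ref{thm:symmetricstatesareuseful} depends on $\sigma_N$ \emph{only} through the scalar $\dB(\sigma_N,\sigma_{\mathrm{mix}})$ — the dependence on the spectrum $\{p_j\}_j$ having been absorbed into this distance via the inequality \eqnref{eq:funnyineq} — so $c,c'$ can be chosen uniformly over the whole family of admissible $\sigma_N$, which depends on $N$ only through the constraint on its Bures distance.

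Next I would insert the hypothesis $\dB(\sigma_N,\sigma_{\mathrm{mix}})\geq N^{-\alpha}$ and use monotonicity in both places. On the one side, $c\,\dB(\sigma_N,\sigma_{\mathrm{mix}})^2 N^2\geq c\,N^{2(1-\alpha)}$, so the event $\{\QFI(U) < c\,N^{2(1-\alpha)}\}$ is contained in the event above; on the other side, since $x\mapsto e^{-x}$ is decreasing and $\dB(\sigma_N,\sigma_{\mathrm{mix}})^3 N^{d-1}\geq N^{d-1-3\alpha}$, the right-hand side is at most $\exp(-c'\,N^{d-1-3\alpha})$. Hence
\[
\Pr_{U\sim\mu(\Hsym_N)}\!\Big(\QFI(U) < c\,N^{2(1-\alpha)}\Big)\;\leq\;\exp\!\Big(-c'\,N^{d-1-3\alpha}\Big).
\]
For this to say that a Haar-random rotation is \emph{typically} useful, the right-hand side must vanish as $N\to\infty$, which happens exactly when the exponent $d-1-3\alpha$ is strictly positive, i.e.\ $\alpha<(d-1)/3$. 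On the complementary (overwhelmingly likely) event one has $\QFI(U)\geq c\,N^{2(1-\alpha)}$, and, since $\QFI=\sup_{\mathrm{POVM}}\FI$ and the Cram\'er--Rao bound \eqnref{eq:CRB} is attainable, this translates into an achievable mean-squared-error scaling like $1/N^{2(1-\alpha)}$.

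Finally I would check that this scaling is genuinely super-classical: $N^{2(1-\alpha)}\notin\landauO(N)$ iff $2(1-\alpha)>1$, i.e.\ $\alpha<1/2$. Imposing both requirements gives the stated range $\alpha<\min\{1/2,(d-1)/3\}$ and finishes the argument. I do not expect any real obstacle here, since all the work is already in Theorem~\ref{thm:symmetricstatesareuseful}; the only points needing a moment's care are (i) that the implicit $\Theta$-constants can be taken independent of the particular $\sigma_N$ in the family, and (ii) that "precision $1/N^{2(1-\alpha)}$ is super-classical" corresponds to $\alpha<1/2$ rather than, say, $\alpha<1$. One may also note that $\dB(\sigma_N,\sigma_{\mathrm{mix}})\leq\sqrt{2}$ always, so allowing the Bures distance to exceed $N^{-\alpha}$ only strengthens both bounds and never causes a problem.
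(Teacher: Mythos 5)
Your proposal is correct and is exactly the specialization the paper intends: the corollary is stated without a separate proof precisely because it follows from \thmref{thm:symmetricstatesareuseful} by substituting $\dB(\sigma_N,\sigma_{\mathrm{mix}})\geq N^{-\alpha}$, requiring $d-1-3\alpha>0$ so the failure probability vanishes, and $2(1-\alpha)>1$ so the scaling beats the SQL. Your two points of care — uniformity of the $\Theta$-constants over the admissible $\sigma_N$ (which one can check against the technical version, \thmref{thm:symmetricstatesareusefultech}, where the constants depend on $\sigma_N$ only through $\dB$ and only favourably) and the origin of the $\alpha<1/2$ condition — are the right ones.
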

Let us remark that Theorem \eqref{thm:symmetricstatesareuseful} constitutes a fairly complete description of the typical properties of QFI on various ensembles of  isospectral density matrices. Typical properties of entanglement and its generalizations on sets of isospectral density matrices were analyzed in \refcite{Oszmaniec2014}.

\section{Robustness to imperfections} \label{sec:robust}
Next we underline the practical importance of the above results by showing that the usefulness of random symmetric states is robust against dephasing noise and particle loss.

\subsection{Depolarising noise}
We first investigate how mixed $\sigma_N$ may become while still providing a quantum advantage for metrology.
To this aim, we consider a concrete ensemble of depolarized states:

\begin{exa}[Depolarized random symmetric states]
\label{thm:depolensamble}
Fix a local dimension $d$, single particle Hamiltonian $h$, and $p\in\left[0,1\right]$.
Let $\psi_N$ be a pure state on $\Hsym_N$ and set
\begin{equation}\label{depol}
 \sigma_N(p) = (1-p)\,\psi_N + p\, \sigma_{\mathrm{mix}} \ .
\end{equation}
Let $F_p\left(U\right) \coloneqq \QFI(U\,\sigma_N(p)\,U^\dagger, H)$, then for every $\epsilon>0$
\begin{equation}\label{depolarconc}
  \Pr_{U \sim \mu\left(\Hsym_N\right)}\left( |\QFI_p(U) - \Expect F_p| \geq \epsilon F_p \right) 
  \leq \exp\left(-\epsilon^2\cdot\Theta\left(N^{d-1}\right)\right),
%
\end{equation}
where $\Expect F_p \coloneqq \Expect_{U \sim \mu\left(\mathcal{\Hsym}_N\right)} F_p$ is given by \eqnref{eq:result integration} with
\begin{equation}
\label{eq:lambda_dep}
 \Lambda = \Lambda_p\coloneqq\frac{(1-p)^2} {1-p+2\,p/|\Hsym_N|} .
\end{equation}
\end{exa}
The above example shows that for all values of $p<1$, the Heisenberg scaling of the QFI is typically retained.
The QFI then still concentrates around its average, which is of order $N^2$.
Moreover, we observe that for large $N$, the average value of QFI of random symmetric depolarized states decreases essentially linearly with $p$ as $\left|\Lambda_p-(1-p) \right|\leq 2/|\Hsym_N|$.
Finally, \eqnref{depolarconc} is a large deviation inequality for QFI  on the ensamble of depolarized pure symmetric states, with the mean $\Expect F_p$. The average $\Expect F_p$ for the special case $p=0$ is  given by \eqref{eq:av_pure}.

\subsection{Finite loss of particles}
\label{sec:part_losses}
Next we investigate whether the Heisenberg scaling of random symmetric states
is robust under the loss of particles.
We model the particle loss by the partial trace over $k \leq N$ 
particles, i.e., $\sigma_N\mapsto\tr_k\left(\sigma_N\right)$ for a given state $\sigma_N$.
Note that due to the permutation symmetry of state $\sigma_N$ considered, it does not matter which particles are lost.
In particular, such mechanism is equivalent to the situation in which one is capable of measuring only (as if distinguishable) $N-k$ of the particles.
We are therefore interested in typical properties of $F\left(\tr_k(U\,\sigma_N\,U^\dagger),H_{N-k}\right)$, where $H_{N-k}=\sum_{j=1}^{N-k} h^{(j)}$ and $U\,\sigma_N\,U^\dagger$ is a random isospectral state on $\Hsym_N$.

For comparison, let us recall that the GHZ-state, which is known to be optimal in quantum sensing \cite{Giovannetti2004}, becomes completely useless upon the loss of just a single particle as the remaining reduced state is separable.
In contrast, sufficiently pure random bosonic states typically remain useful for sensing even when a constant number of particles has been lost.
Even the Heisenberg scaling $\sim1/N^2$ of the QFI is preserved:
\begin{thm}[Random isospectral symmetric states are typically useful under finite particle loss] \label{thm:symmetriclosses}
 Fix a local dimension $d$, single particle Hamiltonian $h$, and a state $\sigma_N$ on $\Hsym_N$ with eigenvalues $\{p_j\}_j$.
 Let $\QFI_k\left(U\right) \coloneqq \QFI(\tr_k\left(U\,\sigma_N\,U^\dagger\right), H_{N-k})$, then
 \begin{equation}\label{losspar}
  \begin{split}
  &\Pr_{U \sim \mu\left(\Hsym_N\right)}\left( \QFI_k\left(U\right) < \|\sigma_N -\sigma_{\mathrm{mix}}\|_{\mathrm{HS}}^2 \, \Theta\left(\frac{N^2}{k^d}\right) \right) \\
  &\qquad\leq \exp\left(-\|\sigma_N -\sigma_{\mathrm{mix}}\|_{\mathrm{HS}}^4 \,\Theta\left(\frac{N^{d-1}}{k^{2d}}\right)\right) \ .
  \end{split}
 \end{equation}
\end{thm}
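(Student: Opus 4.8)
The plan is to reduce the statement about the loss-affected state $\tr_k(U\,\sigma_N\,U^\dagger)$ back to the loss-free machinery of \thmref{thm:symmetricstatesareuseful}, and then apply a concentration inequality of the form \eqnref{eq:concentrsu1} on $\SU(\Hsym_N)$. The first observation is that $\tr_k(U\,\sigma_N\,U^\dagger)$ is a state on $\Hsym_{N-k}$ (the partial trace of a symmetric state over some particles is again symmetric), so its QFI with respect to $H_{N-k}$ is a legitimate object to which the formulas \eqnref{eq:QFI_formula}--\eqnref{eq:QFI_formula_pure} apply. I would first compute the average $\Expect_{U\sim\mu(\Hsym_N)}\QFI_k(U)$ using the same Weingarten/Haar-integration techniques invoked for \eqnref{eq:result integration}; the key point is that $\tr_k(U\,\sigma_N\,U^\dagger)$ is linear in $U\,\sigma_N\,U^\dagger$, so one still only needs second moments of $U$ on $\Hsym_N$, and one obtains an expression of the schematic form $\tr(h^2)\cdot\mathcal{G}'(N,k,d)\cdot\Lambda'(\{p_j\}_j)$. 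Tracking the dimensional factors, $\mathcal{G}'$ should scale like $N^2/k^{d-1}$ or similar (the loss of $k$ particles costs roughly a factor $(|\Hsym_{N-k}|/|\Hsym_N|)\sim k^{-(d-1)}$-ish in the weight of the reduced state away from maximally mixed — this is exactly the source of the $k^d$ in the theorem), and the eigenvalue-dependent factor should again be lower-bounded by a quantity controlled by the distance of $\sigma_N$ from $\sigma_{\mathrm{mix}}$, here naturally in Hilbert–Schmidt norm $\|\sigma_N-\sigma_{\mathrm{mix}}\|_{\mathrm{HS}}$ rather than Bures distance, because the partial trace is an $\|\cdot\|_{\mathrm{HS}}$-contraction and the relevant second-moment computation produces $\tr(\sigma_N^2)$-type terms directly.

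Next I would establish the analogue of \eqnref{eq:funnyineq}: a lower bound on $\Lambda'$ (equivalently on $\Expect\QFI_k(U)$) in terms of $\|\sigma_N-\sigma_{\mathrm{mix}}\|_{\mathrm{HS}}^2$. The clean way is: $\Expect_U \QFI_k(U) \ge \Expect_U \QFI(\,\tr_k(U\sigma_N U^\dagger)\, ,\,H_{N-k})$ and, using convexity of QFI together with the fact that the uniform average over $U$ of $\tr_k(U\sigma_N U^\dagger)$ is $\sigma_{\mathrm{mix}}$ on $\Hsym_{N-k}$, one writes $U\sigma_N U^\dagger = \sigma_{\mathrm{mix}}^{(N)} + (U\sigma_N U^\dagger - \sigma_{\mathrm{mix}}^{(N)})$ and extracts the "variance" contribution; the traceless fluctuation $U\sigma_N U^\dagger-\sigma_{\mathrm{mix}}$ has $\|\cdot\|_{\mathrm{HS}}$ independent of $U$ and equal to $\|\sigma_N-\sigma_{\mathrm{mix}}\|_{\mathrm{HS}}$, and after partial trace and averaging it survives with a dimensional discount of order $k^{-(d-1)}$. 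Combined with the $N^2$ from the Hamiltonian norm $\|H_{N-k}\|\sim N$, this yields $\Expect\QFI_k(U)\ge \|\sigma_N-\sigma_{\mathrm{mix}}\|_{\mathrm{HS}}^2\,\Theta(N^2/k^{d-1})$, which is the bound inside the probability in \eqnref{losspar} (up to adjusting whether the exponent on $k$ is $d-1$ or $d$, to be pinned down by the honest computation — the theorem states $k^d$, presumably after also accounting for the Lipschitz constant below).

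Finally, the concentration step. The function $U\mapsto \QFI_k(U) = \QFI(\tr_k(U\sigma_N U^\dagger),H_{N-k})$ must be shown Lipschitz on $\SU(\Hsym_N)$ with a controlled constant $L$. I would bound $L$ by composing three Lipschitz estimates: $U\mapsto U\sigma_N U^\dagger$ is Lipschitz with constant $\sim\|\sigma_N\|_{\mathrm{HS}}\le 1$ in trace norm (standard, as in \appref{sec:concentration}); the partial trace $\tr_k$ is a contraction in trace norm; and the QFI, viewed through its Bures-geometric definition \eqnref{eq:QFI_geom}, is Lipschitz in the state with a constant that scales like $\|H_{N-k}\|^2\sim N^2$ — but crucially, for \emph{nearly} maximally mixed states the QFI flattens, and the relevant local Lipschitz constant around the trajectory picks up a factor $\|\sigma_N-\sigma_{\mathrm{mix}}\|_{\mathrm{HS}}$ as well, which is what promotes the second and fourth powers of the HS-distance in \eqnref{losspar}. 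Plugging $|\Hsym_N|\in\Theta(N^{d-1})$ and this $L$ into \eqnref{eq:concentrsu1} with $\epsilon$ a constant fraction of $\Expect\QFI_k(U)$ then gives the stated exponential bound $\exp(-\|\sigma_N-\sigma_{\mathrm{mix}}\|_{\mathrm{HS}}^4\,\Theta(N^{d-1}/k^{2d}))$. The main obstacle I anticipate is getting the Lipschitz constant of $\QFI_k$ right — in particular, obtaining the correct power of $k$ and the correct power of $\|\sigma_N-\sigma_{\mathrm{mix}}\|_{\mathrm{HS}}$ — because the naive bound $L\sim N^2$ would wash out the fine structure in $k$ that the theorem claims; one has to use the geometric (Bures) formulation and exploit that near $\sigma_{\mathrm{mix}}$ the Bures metric is comparable to the (rescaled) Hilbert–Schmidt metric, so that the loss-induced shrinkage of the fluctuation directly enters the effective Lipschitz constant. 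Everything else is bookkeeping on top of the already-developed tools in \appref{sec:averages} and \appref{sec:finproofs}.
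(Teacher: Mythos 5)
Your overall architecture --- lower-bound the Haar average of $\QFI_k$, bound a Lipschitz constant on $\SU(\Hsym_N)$, and feed both into the concentration inequality \eqnref{eq:concentrsu1} --- is the same as the paper's. The genuine gap is in the first step. You propose to compute $\Expect_{U}\QFI_k(U)$ by direct second-moment Haar integration, on the grounds that $\tr_k(U\,\sigma_N\,U^\dagger)$ is linear in $U\,\sigma_N\,U^\dagger$. But the QFI is not a polynomial functional of the state: formula \eqnref{eq:QFI_formula} requires the spectral decomposition of the state, and while for the loss-free isospectral ensemble the eigenvalues of $U\,\sigma_N\,U^\dagger$ are fixed (so that $\QFI(U)$ collapses to a balanced degree-two polynomial in $U$ and \eqnref{eq:result integration} follows), the eigenvalues of $\tr_k(U\,\sigma_N\,U^\dagger)$ depend on $U$ in a non-polynomial way. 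The direct integration you invoke is therefore not available --- the paper explicitly identifies this as the main obstacle. The way around it is to first lower-bound the QFI by the asymmetry measure, $\QFI(\rho,H)\geq\|[\rho,H]\|_1^2\geq\|[\rho,H]\|_{\mathrm{HS}}^2$ (see \eqnref{eq:lowerBOUND}), whose right-hand side \emph{is} a balanced quadratic polynomial in $U$ after inserting $\rho=\tr_k(U\,\sigma_N\,U^\dagger)$, and only then integrate (this is \lemref{boundavloss}; it is also where the Hilbert--Schmidt distance enters, via the $\tr\sigma_N^2$ term produced by the second moment). Your fallback argument --- convexity of the QFI plus ``extracting the variance contribution'' of $U\,\sigma_N\,U^\dagger-\sigma_{\mathrm{mix}}$ --- does not close this gap: convexity only gives $\Expect_U\QFI_k(U)\geq \QFI\left(\Expect_U\tr_k(U\,\sigma_N\,U^\dagger),H_{N-k}\right)=\QFI(\sigma_{\mathrm{mix}},H_{N-k})=0$, and no mechanism is supplied for turning the fluctuation into a quantitative lower bound.

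Two further misattributions would derail the bookkeeping even if the average were granted. First, the $k^{d}$ in the denominator comes entirely from the average, through the factor $(d+k)\,|\Hsym_k|\in\Theta(k\cdot k^{d-1})$ appearing in \lemref{boundavloss}, not from the Lipschitz constant. Second, the fourth power of $\|\sigma_N-\sigma_{\mathrm{mix}}\|_{\mathrm{HS}}$ in the exponent arises from squaring the deviation $\epsilon\propto\|\sigma_N-\sigma_{\mathrm{mix}}\|_{\mathrm{HS}}^2\,N^2/k^{d}$ inside the concentration inequality; the crude Lipschitz bound $L\in\landauO(\|H_{N-k}\|^2)=\landauO(N^2)$ --- obtained, essentially as you describe, from Bures-continuity of the QFI and contractivity of the partial trace --- already yields the stated exponent $N^{d-1}/k^{2d}$ and does not ``wash out'' any structure. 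The refinement of the Lipschitz constant near $\sigma_{\mathrm{mix}}$ that you deem necessary is an optional improvement, not the source of the claimed powers.
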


The main difficulty in the proof of the above theorem is that the random matrix ensemble induced by the partial trace of random isospectral states form $\Hsym_N$ is not well-studied.
Hence we cannot use standard techniques to compute the average of $\QFI_{k}\left(U\right)$.
We circumvent this problem by lower-bounding the
QFI by the asymmetry measure \cite{Marvian2014} and then the HS norm
\begin{equation}\label{eq:lowerBOUND}
F\left(\rho,H\right)\geq ||\rho,H||_1^2 \geq \left\|\left[\rho,H \right]\right\|_{\mathrm{HS}}^2 
\end{equation}
and then computing the average of the right hand side instead.
We provide all the technical details of the proof in the \appref{sec:finproofs}; let us here only consider the two mode case as an example:
\begin{exa}
For $d=2$ and after fixing $\tr (h^2)=1/2$, the following inequality holds
\begin{equation}\label{eq:qubitsloss}
 \begin{split}
&\Expect_{U \sim \mu\left(\Hsym_N\right)} \QFI\left(\tr_k\left(U\,\sigma_N\,U^\dagger\right), H_{N-k}\right)\\
&\qquad \geq
\frac{1}{3}\frac{(N-k)(N+1)}{(k+1)(k+2)}\frac{(N+1)\tr \rho_N^2-1}{N},
\end{split}
\end{equation}
(see \lemref{boundavloss} in \appref{sec:averages}) which for the pure states further simplifies to
\begin{equation}
\Expect_{U \sim \mu\left(\Hsym_N\right)}\QFI\left(\tr_k\left(U\,\psi_N\,U^\dagger\right), H_{N-k}\right)\geq
\frac{1}{3}\frac{(N-k)(N+1)}{(k+1)(k+2)} \ .
\end{equation}
It is interesting to note that without particle losses, i.e., $k=0$,
this formula gives
\begin{equation}
\Expect_{U \sim \mu\left(\Hsym_N\right)} \QFI\left(U\,\psi_N\,U^\dagger, H_N\right)\geq
\frac{1}{6}N(N+1)\ ,
\end{equation}
a result that differs from the exact expression \eqref{eq:av_pure} only by a factor of $1/2$.
\end{exa}

The above result is most relevant for atomic interferometry experiments \cite{Schumm2005,Sebby2007,Zhang2012,Stroescu2015}, in which unit detection efficiencies can be achieved and it is hence reasonably possible to limit the loss of particles to a small number.
In contrast, current optical implementations are limited by inefficiencies of photon detectors that are adequately modeled with a fictions beam-splitter model \cite{Demkowicz2015}.
It is known in noisy quantum metrology that generic uncorrelated noises (in particular the noise described by the beam-splitter model) constrain the ultimate precision to a constant factor beyond the SQL \cite{Escher2011,Demkowicz2012}.
The beam-splitter effectively fixes the loss-rate per particle allowing \emph{all} the particles to be lost with some probability.
In fact modeling losses with a beam-splitter is equivalent to tracing-out $k$ particles with probability $\binom{N}{k}(1-\eta)^k\eta^{N-k}$, 
where $\eta$ is the fictitious beam-splitter transitivity (see \appref{sec:partloss} for the proof).
Thus, the number $k$ of particles lost fluctuates according to a binomial distribution.
As a result, the lower bound on the average QFI utilized in \secref{sec:part_losses} must also be averaged over 
the fluctuations of $k$ and the super-classical scaling is lost.
Our results on the robustness of random bosonic states against \emph{finite} particle losses are hence fully consistent with the no-go theorems of \cite{Escher2011,Demkowicz2012}.
Moreover, the fact that random states are much more robust against particle loss than for example N00N states, which loose their metrological usefulness upon the loss of a single particle, raises the hope that for finite $N$ they might still perform comparably well even under uncorrelated noise.

\section{Attaining the Heisenberg Limit with a simple measurement} \label{sec:phys}
We have demonstrated that random bosonic states lead in a robust manner to super-classical scaling of the QFI.
This proves that, in principle, they must allow to locally sense the phase around any value 
with resolutions beyond the SQL. However, as previously explained in \secref{sec:setting}, the phase sensing 
scenario allows the measurement to be optimised for the particular parameter value considered. Moreover, such 
measurement may also strongly depend on the state utilised in the protocol, so one may question 
whether it could be potentially implemented in a realistic experiment, as theoretically it then must be
adjusted depending on the state drawn at random. Thus, it is a priori not clear if metrological usefulness of 
random symmetric states can be actually exploited in practice. Here we show that this is indeed the case. For random 
symmetric states of two-mode bosons, a standard measurement in optical and atomic interferometry suffices
to attain the Heisenberg scaling of precision when sensing the phase around any value.

In particular, we consider the detection of the distribution of the $N$ bosons
between two modes (interferometer arms) after a balanced beam-splitter transformation \cite{Demkowicz2015}.
As depicted in \figref{fig:interferometer}, this corresponds in optics to the photon-number detection at two output ports of a Mach-Zehnder (MZ) interferometer \cite{Bachor2004}.
Yet, such a setup also applies to experiments with atoms in double-well potentials, in which the beam-splitter 
transformation can be implemented via trap-engineering and atomic interactions \cite{Schumm2005,Sebby2007}, 
while number-resolving detection has recently been achieved via cavity-coupling \cite{Zhang2012} and fluoresence \cite{Stroescu2015}.

One may directly relate the general protocol of distinguishable qubits (see \figref{fig:protocol}) to the optical setup of photons in two modes (see \figref{fig:interferometer}) after acknowledging that the Dicke basis of general pure symmetric qubit states is nothing but their two-mode picture, in which a qubit in a state $\ket{0}$ ($\ket{1}$) describes a photon travelling in arm $a$ ($b$) of the interferometer.
In particular, a general pure bosonic state of $N$ qubits may then be written as a superposition of Dicke states $\{\ket{n,N-n}\}_{n=0}^N$, where each $\ket{n,N-n}$ 
represents the situation in which $n$ and $N-n$ photons are travelling in arm $a$ and $b$, respectively.
In \figref{fig:interferometer}, the estimated phase $\param$ is acquired in between the interferometer arms, i.e., by the transformation $\exp(-\ii {\hat J}_z \param)$ \footnote{${\hat J}_x=\tfrac{1}{2}(\hat a^\dagger\hat b+\hat b^\dagger\hat a)$, ${\hat J}_y=\tfrac{1}{2}(\hat b^\dagger\hat a-\hat b^\dagger\hat a)$, ${\hat J}_z=\tfrac{1}{2}(\hat a^\dagger\hat a-\hat b^\dagger\hat b)$ are the standard two-mode angular momentum operators defined via the Jordan-Schwinger map \cite{Demkowicz2015}}.
In the qubit picture this corresponds to a single-particle unitary $\exp(-\ii h\param)$ with $h=\sigma_{z}/2$.
Moreover, the unitary balanced beam-splitter transformation of \figref{fig:interferometer}, commonly defined in the modal picture as $\hat{B}\coloneqq\exp(-\ii \pi{\hat J}_x/2)$, is then equivalent to a local rotation $\exp(-\ii\pi\sigma_{x}/4)$ of each particle in the qubit picture applied after the $\param$-encoding.

\begin{figure}[!t]
 \centering
 \includegraphics[width=\columnwidth]{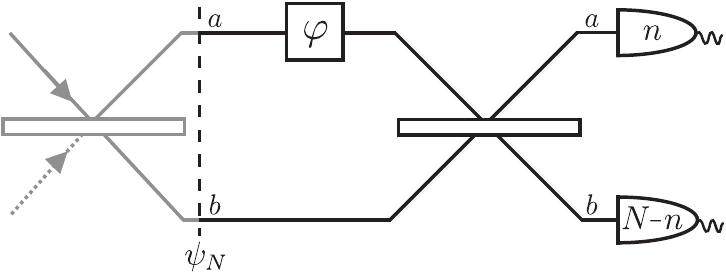}
 \caption{Mach-Zehnder interferometer:~%
   We consider a bosonic pure state $\psi_N$ of $N$ photons
   in modes $a$ and $b$ inside the interferometer. The estimated
   phase $\param$ is acquired due to path difference in the arms.
   After a balanced beam-splitter the number of photons $n$ and $N-n$ are
   measured in arms $a$ and $b$ respectively.
 }
\label{fig:interferometer}
\end{figure}

Hence, the measurement of \figref{fig:interferometer} with outcomes labeled by $n$ (the number of photons detected in mode $a$) corresponds to a POVM $\{\Pi_n^N\}_{n=0}^{N}$ with elements $\Pi_n^N=\hat B^\dagger D_n^N\hat B$, where $D_n^N$
are the projections onto Dicke states, $\ket{D_n^N}\coloneqq\ket{n,N-n}$.
Given a general pure state $\psi_N$ inside the interferometer (see \figref{fig:interferometer}), 
the state after acquiring the estimated phase reads $\psi_N(\param)\coloneqq \e^{-\ii {\hat J}_z \param} \psi_N \e^{\ii {\hat J}_z \param}$.
Then, the probability of outcome $n$ given that the unknown parameter was $\param$ is just
\begin{equation}
p_{n|\param}(\psi_N) = \tr\!\left[\Pi^{N}_n \,\psi_N(\param)\right].
\label{eq:p_n}
\end{equation}
Before we proceed, let us note that due to the identity $\hat B\,\e^{-\ii {\hat J}_z \param}\hat B^\dagger =\e^{\ii {\hat J}_y \param}$ 
it is possible to effectively map the above measurement scheme to the situation in which the 
initial state is already propagated thorough a beam-splitter:~$\tilde \psi_N= \hat{B}\psi_N\hat{B}^\dagger$, yet 
the parameter is encoded via a Hamiltonian in the $y$ direction (via $\tilde h=-\sigma_y/2$).
As a result, the measurement (POVM) elements then simplify to just projections onto the Dicke states $D^N_n$ (see \appref{sec:averagefifortwomodes} for details).

Having the explicit form of the measurement-outcome probability, we can compute the corresponding (classical) FI 
\begin{equation}
\FI(\left\{p_{n|\param}(\psi_N)\right\})
=
\sum_{n=0}^N
\frac{\tr\left(\ii\,[\Pi^{N}_n,\hat J_z] \,\psi_N(\param)\right)^2}
{\tr\left[\Pi^{N}_n \,\psi_N(\param)\right]} \ .
\label{eq:FI_p_n}
\end{equation}
The unitary rotation $\e^{-\ii {\hat J}_z \param}$ entering the definition of 
$\psi_N(\param)$ is responsible for the strong dependence (see also the numeric results 
in \secref{sec:numerics}) of the FI on the value of $\param$.
However, let us note that when averaging (with respect to Haar measure) the FI \eref{eq:FI_p_n} 
over all bosonic states $\psi_N\in\Hsym_N$, any unitary transformation of the state becomes irrelevant.
In particular, owing to the parameter being unitary encoded, $\psi_N(\param)$ may then be simply 
replaced by $\psi_N$, so that the average of \eqnref{eq:FI_p_n} manifestly seizes to depend on $\param$.

This observation alone is not sufficient to deduce that the concentration behavior of FI \eref{eq:FI_p_n} is independent of the value of the parameter $\param$. However, with the following theorem we show not only this but actually a significantly stronger statement.
We prove that the FI given in \eqnref{eq:FI_p_n} evaluated on random symmetric states not only typically attains the Heisenberg scaling for a certain value of $\param$, but typically does so for all values of $\param$ \emph{at the same time}.
\begin{thm}[Pure random symmetric qubit states typically attain the HL for all values of $\param$ in the setup of \figref{fig:interferometer}]
\label{thm:inter_meas_fixed_phi}
Let $\psi_N$ be a fixed pure state on $\Hsym_N$ for $d\!=\!2$ modes and $p_{n|\param}(U\psi_NU^\dagger)$ the probability to obtain outcome $n$ given that the value of the unknown phase parameter is $\param$ and the interferometer state is $U\psi_NU^\dagger$ (see also \eqref{eq:p_n}).
Let $\FI(U,\param) \coloneqq \FI(\left\{p_{n|\param}(U\psi_N U^\dagger)\right\})$ be the corresponding FI defined
in \eqnref{eq:FI_p_n}, then
\begin{equation}\label{eq:classfishtyp}
  \Pr_{U \sim \mu\left(\Hsym_N\right)}\left( \exists_{\param\in[0,2\pi]} \FI(U,\param) \leq \Theta(N^2) \right) \leq \mathrm{exp}\left(-\Theta(N) \right) .
\end{equation}
\end{thm}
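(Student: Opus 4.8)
The strategy is to first handle a single fixed $\param$ via concentration of measure, and then upgrade to the ``for all $\param$'' statement by a union bound over a fine net combined with a uniform Lipschitz-type bound in $\param$. Concretely, for fixed $\param$ I would treat $U \mapsto \FI(U,\param)$ as a function on $\SU(\Hsym_N)$, establish a lower bound on its average, bound its Lipschitz constant, and invoke the concentration inequality \eqref{eq:concentrsu1}. The key input for the average is that, as remarked after \eqnref{eq:FI_p_n}, averaging over $U$ kills the $\param$-dependence (the rotation $\e^{-\ii \hat J_z \param}$ can be absorbed into the Haar measure), so $\Expect_U \FI(U,\param)$ is independent of $\param$ and equals $\Expect_U \FI(\{p_{n|0}(U\psi_N U^\dagger)\})$. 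This average must be shown to be $\Theta(N^2)$: one way is to relate it to (a lower bound on) the QFI average from \eqnref{eq:av_pure} via the fact that the MZ photon-counting measurement is known to be (near-)optimal for symmetric states, or more directly to compute $\Expect_U \sum_n \tr(\ii[\Pi_n^N,\hat J_z]U\psi_N U^\dagger)^2 / \tr(\Pi_n^N U\psi_N U^\dagger)$ using the explicit Dicke-basis form of $\Pi_n^N$ (after the $\hat B$-conjugation trick reducing $\Pi_n^N$ to Dicke projectors $D_n^N$ and shifting the Hamiltonian to $\tilde h = -\sigma_y/2$).

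\textbf{Concentration for fixed $\param$.} Having the average, I would show $\FI(U,\param)$ concentrates. The function $U \mapsto \FI(U,\param)$ is a ratio-sum of smooth functions of the state $U\psi_N U^\dagger$; the denominators $\tr(\Pi_n^N U\psi_N U^\dagger)$ can be small, which is the usual danger for Lipschitz bounds on classical Fisher information. I would handle this exactly as the paper does elsewhere (cf.\ the Lipschitz-constant estimates promised in \appref{sec:concentration}): either restrict to the event where all relevant probabilities are bounded below by a polynomially small quantity (which itself holds with probability $1-\exp(-\Theta(N))$ by a separate concentration/anti-concentration argument on the Dicke-weight distribution), or regularize. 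On that high-probability event the Lipschitz constant of $\FI(\cdot,\param)$ with respect to the geodesic distance on $\SU(\Hsym_N)$ is at most $\mathrm{poly}(N)$, so \eqref{eq:concentrsu1} with $|\H| = |\Hsym_N| \in \Theta(N^{d-1}) = \Theta(N)$ for $d=2$ gives
\begin{equation}
 \Pr_{U}\left( \FI(U,\param) \leq \tfrac12 \Expect_U \FI(U,\param) \right) \leq \exp\left(-\Theta(N/\mathrm{poly}(N))\right).
\end{equation}
Here I expect the main obstacle: obtaining a Lipschitz bound (or a suitable replacement) that is only polynomially large in $N$ so that, after dividing by $4L^2$, the exponent $-|\Hsym_N|\epsilon^2/(4L^2)$ with $\epsilon \in \Theta(N^2)$ still diverges — equivalently, controlling the small-denominator contributions and the number of relevant outcomes $n$ simultaneously. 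This is the genuinely technical heart and is what \appref{sec:finproofs} must carry out.

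\textbf{From fixed $\param$ to all $\param$.} Finally I would promote the pointwise statement to the uniform one. Fix a net $\{\param_m\}_{m=1}^{M}$ of $[0,2\pi]$ with spacing $\delta = 1/\mathrm{poly}(N)$, so $M \in \mathrm{poly}(N)$. A union bound over the $M$ net points, using the fixed-$\param$ estimate above, shows that with probability $1-\mathrm{poly}(N)\exp(-\Theta(N)) = 1-\exp(-\Theta(N))$ we have $\FI(U,\param_m) \geq \tfrac12\Expect_U\FI \in \Theta(N^2)$ for \emph{all} net points simultaneously. To fill the gaps I need that $\param \mapsto \FI(U,\param)$ does not vary too fast: since $\psi_N(\param)$ depends on $\param$ only through the unitary $\e^{-\ii\hat J_z\param}$ with $\|\hat J_z\| \leq N/2$, and $\FI$ is again a ratio-sum, on the same good event one gets $|\partial_\param \FI(U,\param)| \leq \mathrm{poly}(N)$; choosing $\delta$ smaller than $\Theta(N^2)/\mathrm{poly}(N)$ ensures $\FI(U,\param) \geq \tfrac14 \Expect_U\FI \in \Theta(N^2)$ for \emph{all} $\param \in [0,2\pi]$. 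Combining the two bad events (the anti-concentration event for the denominators and the net union bound) still gives total failure probability $\exp(-\Theta(N))$, which is \eqref{eq:classfishtyp}. The details of both the $U$-Lipschitz and the $\param$-derivative bounds, together with the denominator control, are deferred to \appref{sec:finproofs}.
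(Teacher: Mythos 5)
Your overall architecture — fixed-$\param$ concentration from an average lower bound plus a Lipschitz constant in $U$, then a net over $[0,2\pi]$ with a union bound and a $\param$-derivative bound — is exactly the paper's (see \thmref{thm:inter_meas_fixed_phi1} in \appref{sec:finproofs}). However, the step you yourself flag as the ``technical heart'' is resolved incorrectly. You propose to control the small denominators $\tr(\Pi_n^N U\psi_N U^\dagger)$ by restricting to an event where all of them are at least $1/\mathrm{poly}(N)$, claiming this event has probability $1-\exp(-\Theta(N))$. That claim is false: for Haar-random $U$ on $\Hsym_N$ with $d=2$, each overlap $X_n=\tr(D_n^N\,\tilde U\psi_N\tilde U^\dagger)$ has density $N(1-X)^{N-1}$, so $\Pr(X_n\leq\delta)\approx N\delta$ for small $\delta$; keeping all $N+1$ overlaps above any inverse-polynomial threshold fails with probability that is only polynomially small, not exponentially small. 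Your ``good event'' therefore destroys the $\exp(-\Theta(N))$ failure probability you need, and in any case \eqnref{eq:concentrsu1} applies to functions on all of $\SU(\Hsym_N)$, so conditioning requires extra work you do not supply.

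The paper avoids conditioning entirely. \lemref{classicalFishlip} proves an \emph{unconditional} Lipschitz bound $L\leq 24\|H\|^2=6N^2$ by showing, via Cauchy--Schwarz-type operator inequalities such as $|\tr([H,\Pi_n]\rho)|\leq 2\tr(\Pi_n\rho)\|H\|$, that every numerator in the derivative of $\FI$ vanishes at least as fast as the corresponding power of its denominator; the potentially singular ratios cancel and the sum over $n$ is controlled by $\sum_n\tr(\Pi_n\rho)=1$. The same estimates give the $\param$-derivative bound $3N^3$ used for the net. Likewise, your two suggestions for the average are not carried through: ``near-optimality of the MZ measurement'' for generic symmetric states is not a theorem you can cite (and the true average FI is $\approx N^2/6$ versus $N^2/3$ for the QFI, so the measurement is not optimal), while the direct computation of $\Expect_U\sum_n f_n/g_n$ requires handling the expectation of a ratio; \lemref{avclassical} does this by splitting each term at a threshold $\alpha=\Delta/N$, using $f_n\leq N^2g_n^2$ on the small-denominator part and the explicit density $N(1-X)^{N-1}$ of the overlaps, yielding $c_-N^2\leq\Expect_U\FI(U,\param)$ with $c_->0$. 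Without these two ingredients your argument does not close.
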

In other words, the probability that for a random state there exists a value of the 
parameter $\param$ for which $\FI(U,\param)$ does not achieve Heisenberg scaling is 
exponentially small in $N$. Hence, when dealing with typical two-mode bosonic 
states one does not have to resort to LU-optimization 
(similarly, as in \thmref{thm:symmetricstatesareuseful}) 
in order to reveal their 
metrological usefulness \emph{even for a fixed measurement}.
Note that such an optimization would make the problem $\param$-independent. As the interferometric scheme  of \figref{fig:interferometer} 
is restricted to the symmetric subspace $\Hsym_N$, one is allowed to perform only LU operations
of the form $V^{\otimes N}$. However, setting $V=\exp(- \ii\, \theta\, \sigma_z /2)$ 
one may then always shift $\param \to \param +\theta$ to any desired value.

We provide a detailed proof of \thmref{thm:inter_meas_fixed_phi} in \appref{sec:finproofs}, where we also present its more precise and technical version.
One of its constituents---the analysis of the average value of the FI \eref{eq:FI_p_n}---%
may be found in \appref{sec:averagefifortwomodes}, where by rigorously showing that
\begin{equation}
c_-\, N^2\leq \Expect_{U \sim \mu\left(\mathcal{\Hsym}_N\right)} \FI(U,\param) \leq c_+\, N^2+N, 
\end{equation}
with constants $0<c_-<c_+<\infty$, we prove that the average FI indeed asymptotically 
follows the HL-like scaling. Although our derivation allows us only to bound the actual
asymptotic constant factor, we conjecture that $\Expect_{U \sim \mu\left(\mathcal{\Hsym}_N\right)} \FI(U,\param) \to N^2/6$.
In particular, we realise that such behaviour is recovered after replacing the denominators of 
all terms in the sum of \eqnref{eq:FI_p_n} by their average values and also verify our conjecture 
numerically in \secref{sec:numerics} below.

The fact that random symmetric states typically lead to Heisenberg scaling \emph{for all values} of $\param$ in the simple setup of \figref{fig:interferometer} has important consequences.
If this were not the case, it could be possible that for typical symmetric state $\psi_N$  there are values of $\param$ for which the sensitivity was low. Our stronger result however is directly useful for realizable setups:
In real interferometry experiments one typically starts the phase estimation protocol by calibrating the device \cite{Mitchell2004,Nagata2007}.
This is done by taking control of $\param$ and reconstructing the $p_{n|\param}$ \emph{empirically} form measurements with different known values of $\param$.
A tomography of the state $\psi_N$ is then not necessary and an efficient estimator (e.g., max-likelihood \cite{Lehmann1998}) can always 
be constructed that saturates the CRB \eref{eq:CRB} after sufficiently many protocol repetitions.
Crucially, this implies that one might randomly generate (for instance, following the protocol we present in \secref{sec:numerics}) many copies of some fixed random symmetric state $\psi_N$ and, even \emph{without being aware of its exact form}, one can still typically construct an estimator that attains the Heisenberg scaling,
while sensing small fluctuations of the parameter around any given value of $\param$.

\section{Efficient generation of random symmetric states.} \label{sec:numerics}
We have shown that random symmetric states have very promising properties for 
quantum sensing scenarios, but have so far not addressed the question of how to efficiently generate such states.
In this section, we demonstrate that the random symmetric states
can be simulated with help of short random circuits whose outputs
indeed yield, on average, the Heisenberg scaling not only of the QFI but
also of the FI for the measurement scheme depicted in \figref{fig:interferometer}.

Concretely, we consider random circuits over a set of gates that is universal on the special unitary group of the symmetric subspace and consists of four different gates: three beam splitters and a cross-Kerr non-linearity.
A set of gates is said to be universal on a certain unitary group if by taking products of its elements one can obtain 
\emph{arbitrary good} approximations (in trace norm) to any unitary operation in this group. We emphasize that the universality of in the symmetric subspace $\Hsym_N$ is not connected to the notion of universal quantum computation. This follows from the fact that the dimension of $\Hsym_N$  scales polynomially (in the case of $d=2$ modes linearly) in the number of particles $N$ and consequently this space is not sufficient for universal quantum computation.
We first construct a universal set of unitary gates on $\Hsym_N$ for $d=2$, which is inspired 
by operations commonly available when dealing with bosonic (optical and atomic) systems. 
We present our results in the language of two-mode interferometry (see \figref{fig:interferometer}).

We start with the following set of gates:
\begin{equation} \label{eq:linouticuni}
\begin{split}
&V_{1} \coloneqq \frac{1}{\sqrt{5}}\begin{pmatrix}1 & 2\,\ii\\
2\,\ii & 1
\end{pmatrix}, \quad V_2 \coloneqq \frac{1}{\sqrt{5}}\begin{pmatrix}1 & 2\\
-2 & 1
\end{pmatrix}, \\
&V_3 \coloneqq \frac{1}{\sqrt{5}}\begin{pmatrix}1+2\,\ii & 0\\
0 & 1-2\,\ii
\end{pmatrix};
\end{split}
\end{equation}
known to be a ``fast'' universal gate-set for linear optics \cite{Lubotzky1986,Harrow2002,Bouland2014,Carolan2015,Sawicki2015}.
The above matrices reflect how gates act on a single particle (which can be either in mode $a$ or in mode $b$).
The action on $\Hsym_N$ is then given by $\hat{V}_j=V_j^{\otimes N}$ for $j\in\{1,2,3\}$.
We now supplement the above collection by a two-mode gate corresponding to a cross-Kerr nonlinearity (with effective action time $t=\pi/3$) \cite{Imoto1985}.
Concretely, we take 
\begin{equation}
\hat V_{\mathrm{XK}} \coloneqq \exp\left( \frac{-\ii\,\pi\,\hat n_a \hat n_b}{3} \right),
\end{equation}
where $\hat n_{a/b}$ are the particle number operators of modes $a/b$ marked in \figref{fig:interferometer}. For the general method method for checking if a given gate promotes linear optics to universality in $\Hsym_N$ see \cite{moUniv}.

In atom optics, large cross-Kerr nonlinearities and phase shifts (XKPS) can be achieved in ultracold two-component Bose gases in the so-called two-mode approximation \cite{Albiez2005,Mele2011,Julia2012} (see also \cite{Stringari2003,Lewenstein2012}).
In optics, reaching large XKPS is more challenging \cite{Kok2010,Furusawa2011,Wang2012}, but there has recently been a spectacular progress in this area, both on weak \cite{Nemoto2004,He2011} and strong nonlinearities \cite{Chang2014,Lodahl2015,Javadi2015}.
From the theoretical perspective, using the methods of geometric control theory \cite{Zimboras2014,Zimboras2015} and ideas from representation theory of Lie algebras \cite{moUniv} it is possible to prove that the gates $\{\hat V_1,\hat V_2,\hat V_3,\hat V_{\mathrm{XK}} \}$ are universal for $\SU\left(\Hsym_N\right)$. The gate $\hat V_{\mathrm{XK}}$ is not the only gate yielding universality when supplemented with gates universal for linear optics. The comprehensive characterization of (non-linear) gates having this property will be presented in \cite{moUniv}.

A random circuit of depth $K$ over this gate set is now obtained by picking at random (according to a uniform distribution) $K$ gates from the set $\{\hat V_1,\hat V^\dagger_1,\hat V_2,\hat V^\dagger_2,\hat V_3,\hat V^\dagger_3,\hat V_{\mathrm{XK}},{\hat V_{\mathrm{XK}}^\dagger} \}$.
We call states generated by applying such a circuit to some fixed symmetric state random circuit states.

\begin{figure}[!t]
\centering
\includegraphics[width=\columnwidth]{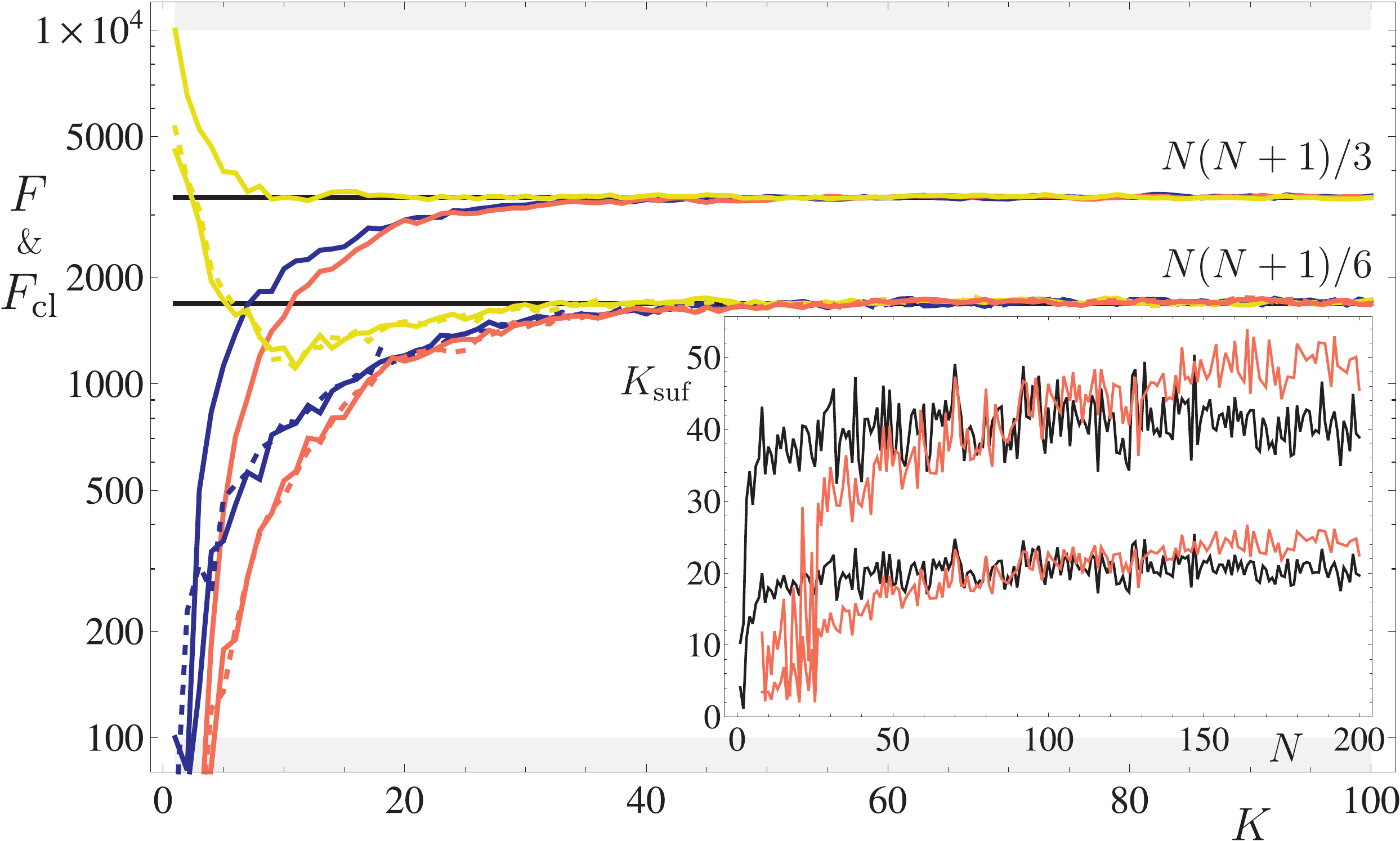}
\caption{(color online)
Convergence of QFI and FI of random circuits states for increasing depth $K$:~%
Main plot shows $\QFI$ and $\FI$ of $N=100$ two-mode photons (indistinguishable qubits) for the measurement from \figref{fig:interferometer} averaged over 150 realizations (sufficient to make the finite sample size irrelevant) of random circuits for different depths $K$.
The starting states before the random circuit are of the form $\ket{\psi_N}=\sum_{n=0}^N\sqrt{x_n}\ket{n,N-n}$ 
with:~$x_0=1$ polarised (red), $x_n=\binom{N}{n}/2^N$ balanced (blue), or $x_{0} = x_{N} =1/2$ N00N (yellow).
Fast convergence to the values $N\,(N+1)/3$ and $N\,(N+1)/6$ (black horizontal lines) is evident in all cases.
The shaded regions mark ``worse than SQL'' and ``better than HL'' precisions.
The $\FI$ curves are plotted for $\param=\pi/2$ (solid) and $\param=\pi/3$ (dotted).
The inset depicts the sufficient circuit depth $K_{\t{suf}}$ as a function of $N$, such that the 
corresponding sample-averaged QFI (black curves) or FI (red curves)
is at most $1\%$ (top curves) or $10\%$ (bottom curves) from its typical value.
As $K_{\t{suf}}$ grows at most mildly with $N$, for realistically achievable photon 
numbers \cite{Carolan2015}, $K \approx 20$ may be considered sufficient.}
\label{fig:saturation}
\end{figure}

Our intuition that the above scheme should generate unitaries distributed approximately according to $\mu\left(\Hsym_N\right)$ comes from the theory of the so-called $\epsilon$-approximate unitary $t$-designs \cite{Low2010a,Brandao2012,Mozrzymas2013}.
There are several essentially equivalent ways to define unitary designs \cite{Low2010a}.
One of them, introduced in \refcite{Low09}, implies that an $\epsilon$-approximate unitary $t$-design $\mu_{\epsilon,t}(\H)$ is a distribution over the unitary group $\SU(\H)$ acting on a Hilbert space $\H$ (of dimension $|\H|$) that efficiently approximates the Haar measure $\mu(\H)$ in a way such that for all balanced monomials $f:\SU(\H) \to \R$ of degree $t$, it holds that
\begin{equation}
 \left|\Expect_{U \sim \mu(\H)} f(U) - \Expect_{U \sim \mu_{\epsilon,t}(\H)}f(U)\right| \leq \epsilon/|\H|^t \,.
\end{equation}
Moreover, if such a function satisfies a concentration inequality of the form given in \eqnref{eq:concentrsu1} with respect to the Haar measure, then an (albeit weaker) concentration also holds with respect the design \cite{Low09}.
All these statements carry over in a similar form to balanced polynomials.
Their corresponding difference may always be bounded by the weighted sum of the differences of their constituting monomials.
The iso-spectral QFI, $\QFI(U) = \QFI(U\,\sigma_N\,U^\dagger, H)$, introduced 
in \thmref{thm:symmetricstatesareuseful} as a function of unitary rotations is precisely such a balanced polynomial of order two, which can be seen directly from \eqnref{eq:QFI_formula}.

For distinguishable qudits there are efficient methods to generate approximate designs by using random circuits over local universal gate-sets on $\H_N$ \cite{Low2010a,Brandao2012}.
These constructions unfortunately do not immediately carry over to the symmetric subspace $\Hsym_N$ of $N$ qubits.
However, one can use the fact proven in \refcite{Brandao2012} (based on results of \refcite{Bourgain2011}) that in any Hilbert space $\mathcal{H}$ sufficiently long random circuits over a set of universal gates form an $\epsilon$-approximate unitary $t$-design.
More precisely, this holds whenever the gates employed in the circuit are non-trivial and have algebraic entries.
The set of gates universal in $\Hsym_N$ given above satisfies this condition.
For this to hold, it would actually be sufficient to replace all the three gates $\hat V_1, \hat V_2, \hat V_3$ by a single non-trivial beam splitter \cite{Bouland2014} and the gate $\hat V_{\mathrm{XK}}$ by essentialy non-trivial non-linear gate.
The latter would not even have to be a reproducibly implementable non-linear operation, but its strength could even be allowed to vary from invocation to invocation.
However, it is mathematically very difficult to analytically bound the depth $K$ of such circuits that is sufficient to achieve a given $\epsilon$ for a given $t$.

\begin{figure}[!t]
\centering
\includegraphics[width=\columnwidth]{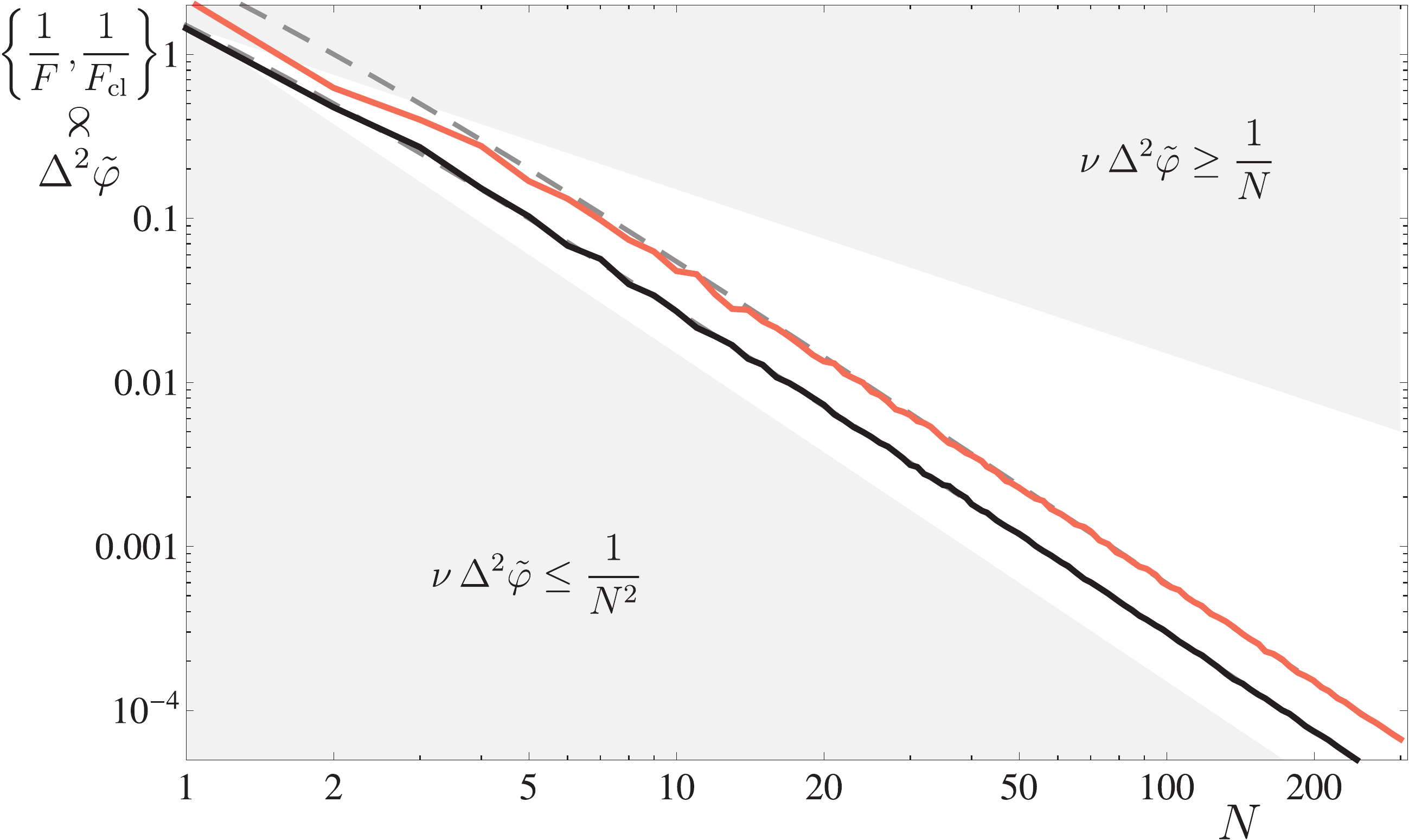}
\caption{(color online)
Mean squared error attained by random bosonic states generated by sufficiently deep 
random circuits:~We depict the ultimate limit of the resolution $\nu\,\Delta^2\tilde\param$ 
attainable with random circuit states generated by applying a deep random circuit ($K=60$) 
onto two-mode balanced ($x_n=\binom{N}{n}\tfrac{1}{2^N}$ in \figref{fig:saturation}) 
state for both the interferometric measurement of \figref{fig:interferometer} (red)
and the theoretically optimal one yielding the QFI (black). The corresponding sample-averaged
FI and QFI quickly concentrate around the typical values $N^2/3$ and $N^2/6$ (dashed lines) 
respectively. The shaded regions mark the ``worse than SQL'' and ``better than HL'' precisions.}
\label{fig:scaling}
\end{figure}

For this reason we resort to a numerical analysis to verify how rapidly with increasing $K$ the average QFI and the FI of random circuit states converge to the respective averages for random symmetric states.
We consider the scenario of \figref{fig:interferometer}.
In this two-mode case $d=2$ it holds that $|\Hsym_N|=N+1$ and from \eqnref{eq:FI_p_n} we obtain concentration of the QFI around the value $\Expect\QFI \coloneqq \Expect_{U \sim \mu\left(\H\right)}\QFI(U) = N(N+1)/3$ 
\footnote{%
This confirms the scaling obtained numerically in \refcite{Tichy2015} and is consistent with $2N(N+2)/3$ found in \refcite{Froewis2014}, in which 
the states were taken to moreover be mode-symmetric.%
}. 
For the FI, we expect to find $\Expect\FI \coloneqq \Expect_{U \sim \mu\left(\H\right)}\FI(U,\param) = N\,(N+1)/6$ 
(for details see the discussion after \thmref{thm:inter_meas_fixed_phi} and \appref{sec:averagefifortwomodes}).
In \figref{fig:saturation}, we show explicitly that indeed random circuit states generated according to our recipe allow to reach these values already for moderate $K$. 

In \figref{fig:scaling}, we verify further the behavior of the ultimate limits on the attainable precision via the relevant CRB (see \eqnref{eq:CRB}) dictated by the attained values for $1/\QFI$ and $1/\FI$.
We observe that the ultimate bounds predicted by the average $\Expect\QFI$ and $\Expect\FI$ for random symmetric states are indeed saturated quickly (here, $K=60$) and, crucially, both reach the predicted Heisenberg scaling.
This demonstrates that it is possible to generate states that share the favorable metrological properties of Haar-random symmetric states via the \emph{physical} processes of applying randomly selected optical gates.

\section{Conclusions}\label{sec:conclusions}
In this work we present a systematic study of the usefulness of random states for quantum metrology.
We show that random states, sampled according to the Haar measure from the full space of states of distinguishable particles, are typically not useful for quantum enhanced metrology.
In stark contrast, we prove that states from the symmetric subspace have many very promising properties for quantum metrology:~%
They typically achieve Heisenberg scaling of the quantum Fisher information and this scaling is robust against particle loss 
and equally holds for very mixed isospectral states. Moreover, we show that the high quantum Fisher information of 
such random states can actually be exploited with a single fixed measurement that 
is implementable with a beam splitter and particle-number detectors.
Finally, we also demonstrate that states generated with short random circuits can be 
used as a resource to achieve a classical Fisher information with the same scaling as the Heisenberg limit.
Our results on random symmetric states open up new possibilities for quantum enhanced metrology. 

Our work, being a study initiating a new research direction, naturally leads raises a number of interesting questions:
From the physical perspective it would be important to investigate the impact of more realistic noise types, 
such as:~local (and correlated) dephasing, depolarisation \cite{Froewis2014} and particle loss on the classical Fisher information in the interferometric scenario considered in \secref{sec:phys}, as well as the quantum Fisher information in general for finite $N$.
Further, it would be interesting to see whether bosonic random states are also useful for multi-parameter sensing problems with non-commuting generators \cite{Baumgratz2016}.
An important part of the quantum metrology research is devoted to infinite dimensional optical 
systems with the \emph{mean} number of particles---corresponding to the power of a light beam---being fixed \cite{Demkowicz2015} e.g.,
in squeezing-enhanced interforometry with strong laser
beams of constant power \cite{Lang2013}. 
Here one could ask whether states prepared via random Gaussian transformations \cite{Monras2010,Pinel2012,*Pinel2013} or random circuits of gates universal for linear optics are typically useful for metrology. 
Another relevant problem beyond our analysis is the speed of 
convergence to the approximate designs when considering performance of the states prepared with 
random bosonic circuits discussed in \secref{sec:numerics}. 
Further it is interesting to further study properties of the ensembles of random states generated from the Haar-random pure states and possibly particle loss, of both bosons and fermions.
Lastly, a natural question to be asked is whether the typical metrological usefulness of random 
bosonic states remains valid if one considers general phase estimation scenarios, e.g., 
single-shot protocols with no prior knowledge assumed about the parameter value \cite{Berry2000}, 
for which Bayesian inference methods must be employed to quantify the attainable 
precision \cite{Bagan2001b,Bartlett2007}.

\begin{acknowledgments}
We would like to thank Martin Kliesch for inspiring discussions. We acknowledge support from the European Research Council
(ERC AdG OSYRIS and CoG QITBOX), Axa Chair in Quantum Information Science, 
John Templeton Foundation, EU (IP SIQS and QUIC), Spanish National Plan FOQUS 
(FIS2013-46768), MINECO (Severo Ochoa Grant No.~SEV-2015-0522),
Fundaci\'{o} Privada Cellex, and Generalitat de Catalunya (Grant No.~SGR 874 and 875). We are greatful to Centro de Ciencias de Benasque Pedro Pascual fo the hospitality during the "Quantum information" conference in 2015.
M.\ O.~acknowledges the START scholarship granted by Foundation for Polish Science 
and NCN Grant No.~DEC-2013/09/N/ST1/02772, J.~K. and C.\ G. acknowledge funding from the European Union’s Horizon 2020 research 
and innovation programme under the Marie Sk\l{}odowska-Curie Q-METAPP and NIMBqUS Grants No.~655161 and 700140.
C.~G. further acknowledges support by MPQ-ICFO and ICFOnest+ (FP7-PEOPLE-2013-COFUND).
R. A. acknowledges funding from the European Union's Horizon 2020 research and
innovation programme under the Marie Sk\l{}odowska-Curie N-MuQuaS Grant No. 705109.
\end{acknowledgments}

\bibliography{concmetro}

%
\onecolumngrid
\newpage
\appendix
\part*{Appendices}

Here we give the details that are needed to obtain the main results given in the main text.
In the \secref{sec:concentration} we discuss concentration of measure on the special unitary group, and give bounds on the Lipschitz constants of the relevant functions on this group.
In \secref{sec:lowerbound} we prove a lower bound on the QFI that are useful when studying particle losses.
In \secref{sec:averages} we give bounds for averages of FI and QFI on the relevant ensembles of density matrices that we consider---isospectral density matrices of distinguishable particles, random symmetric (bosonic) states of identical particles, and random bosonic states that underwent particle loss (\secref{sec:average_qfi_for_bosons_with_particle_losses}).
In \secref{sec:finproofs} we use the previously derived technical results to prove Theorems \ref{thm:moststatesareuseless}, \ref{thm:symmetricstatesareuseful}, \ref{thm:symmetriclosses}, and \ref{thm:inter_meas_fixed_phi} in the main text.
In \secref{sec:partloss} we prove the equivalence of the beam-splitter model of particle losses and the operation of taking partial traces over particles contained in two-mode bosonic systems.

\begin{table}[h]
\begin{centering}
\begin{tabular}{|c|c|}
\hline
\textbf{Symbol} & \textbf{Explanation}\tabularnewline
\hline
$\Hloc$ & local Hilbert space\tabularnewline
\hline
$d = |\Hloc|$ & dimension of the local Hilbert spac\tabularnewline
\hline
$\H_N = \H^{\otimes N}$ & Hilbert space of $N$ distinguishable particles \tabularnewline
\hline
$D=\left|\H_N\right|$ & dimension of the space of $N$ distinguishable particles\tabularnewline
\hline
$\Hsym_N = \lspan\{\ket{\psi}^{\otimes N}: \ket{\psi} \in \H\}$ & Hilbert space of $N$ bosons\tabularnewline
\hline
$\H$ & general Hilbert space\tabularnewline
\hline
 $|\H|$ & dimension of the general Hilbert spac\tabularnewline
\hline
$h$ & local Hamiltonian encoding the phase $\param$\tabularnewline
\hline
$H = H_N = \sum_{j=1}^N h^{(j)}$ & Hamiltonian acting on $N$ particles\tabularnewline
\hline
$\mathcal{D}\left(\mathcal{H}\right)$ & set of states on the Hilbert space $\mathcal{H}$\tabularnewline
\hline
$\1$ & identity operator on the relevant Hilbert space\tabularnewline
\hline
$\rho,\sigma,\ldots$ & symbols denoting (in general) mixed states\tabularnewline
\hline
$\psi,\phi,\ldots$ & symbols denoting pure states\tabularnewline
\hline
$\FI(\left\{p_{n|\param}\right\})$ & Classical Fisher information associated to the family of probability distributions $\left\{p_{n|\param}\right\}$ \tabularnewline
\hline
$\QFI\left(\rho,H\right)$ & Fisher information computed for the state $\rho$ with respect to the
Hamiltonian $H$\tabularnewline
\hline
$\dB\left(\rho,\sigma\right)$ & Bures distance between states $\rho$ and $\sigma$\tabularnewline
\hline
$\mathcal{F}\left(\rho,\sigma\right)$ & Uhlmann fidelity between states $\rho$ and $\sigma$\tabularnewline
\hline
$\mathbb{P}^{\mathrm{sym}}$ & orthogonal projector onto $\mathrm{Sym}^{2}\left(\mathcal{H}\right)$\tabularnewline
\hline
$\mathbb{P}^{\mathrm{asym}}$ & orthogonal projector onto $\bigwedge^{2}\left(\H\right)$\tabularnewline
\hline
$\mathrm{End}\left(\mathcal{H}\right)$ & set of linear operators on $\H$\tabularnewline
\hline
$\mathrm{Herm}\left(\mathcal{H}\right)$ & set of Hermitian operators on $\H$\tabularnewline
\hline
$\SU\left(\mathcal{H}\right)$ & special unitary group on $\H$\tabularnewline
\hline
$\mu(\H)$ & Haar measure on $\SU\left(\H\right)$\tabularnewline
\hline
$\Expect_{U \sim \mu\left(\mathcal{\mathcal{H}}\right)}$ & expectation value (average) with respect to $\mu(\H)$\tabularnewline
\hline
$\Omega$ & set of isospectral density matrices in $\mathcal{H}$ (for the specified
ordered spectrum) \tabularnewline
\hline
\hline
\textbf{Acronym} & \textbf{Explanation}\tabularnewline
\hline
QFI & Quantum Fisher Information \tabularnewline
\hline
FI & classical Fisher Information \tabularnewline
\hline
GHZ & Greenberger–Horne–Zeilinger state \tabularnewline
\hline
SQL & Standard Quantum Limit \tabularnewline
\hline
HL & Heisenberg Limit \tabularnewline
\hline
POVM & Positive Operator-Valued Measure \tabularnewline
\hline
\end{tabular}

\par\end{centering}

\caption{Notation used throughout the paper, unless indicated differently.}
\end{table}

\section{Concentration of measure on special unitary group and Lipschitz constants for QFI and FI} \label{sec:concentration}
In this section we first present basic concentration of measure inequalities on the special unitary group $\SU\left(\mathcal{H}\right)$. Then, we give bounds on the Lipschitz constants of various functions based on QFI and FI that appear naturally, while studying different statistical ensembles of states on $\mathcal{H}$---isospectral density matrices, partially traced isospectral density matirices etc.

\subsection{Concentration of measure on unitary group}
We will make an extensive use of the concentration of measure phenomenon on the special unitary group $\SU\left(\H\right)$. It will be convenient to use a metric tensor $g_{\mathrm{HS}}$
induced on $\SU\left(\H\right)$ from the embedding of $\SU\left(\H\right)$ in the set of all linear operators, $\mathrm{End}\left(\H\right)$, equipped with the Hilbert-Schmidt
inner product $\left\langle A,B\right\rangle =\tr\left(A^{\dagger}B\right)$.
Let us write the formula for $g_{\mathrm{HS}}$ explicitly. The special
unitary group is a Lie group and thus for every $U\in\SU\left(\H\right)$ we
have an isomorphism $T_{U}\SU\left(\H\right)\approx\mathfrak{su}\left(\H\right)$,
where $T_U$ is the tangent space to $\SU$ at $U$ and $\mathfrak{su}\left(\H\right)$ is the Lie algebra of the group consisting of Hermitian traceless operators on $\H$.
The linear isomorphism is given by the following mapping
\begin{equation}
\mathfrak{su}\left(\H\right)\ni X\longmapsto\hat{X}=\left.\frac{d}{d\param}\right|_{\param=0}\exp\left(-\ii\param X\right)U\in T_{U}\SU\left(\H\right)\,.
\label{eq:identification special unitary}
\end{equation}
Using the identification \eqnref{eq:identification special unitary}
and treating the operator $\hat{X}=\left.\frac{d}{d\param}\right|_{\param=0}\exp\left(-\ii\param X\right)U=-\ii XU$
as an element of $\mathrm{End}\left(\H\right)$ we get
\begin{equation}
g_{\mathrm{HS}}\left(\hat{X},\hat{Y}\right)=\left\langle \ii XU,\,\ii YU\right\rangle =\tr\left(\left[\ii XU\right]^{\dagger}\ii YU\right)=\tr\left(XY\right)\,,\label{eq:explicit formula metric unitary}
\end{equation}
where we have used $X^{\dagger}=X$, the identity $UU^{\dagger}=\1$,
and the cyclic property of the trace.  The gradient of a smooth function $f:\mathrm{SU}\left(\H\right)\rightarrow\mathbb{R}$ at point $U\in\mathrm{SU}\left(\H\right)$ is defined by the condition
\begin{equation}\label{gradientdef}
g_{\mathrm{HS}}\left(\hat{\left.\nabla f\right|_{U}},\hat{X}\right)\ =
 \left.\frac{d}{d\phi}\right|_{\phi=0}f\left(\exp\left(-\ii \phi X\right)U\right) \ , 
\end{equation}
that has to be satisfied for all $X\in\mathfrak{su}\left(\H\right)$.

\begin{fact}[Concentration of measure on $\SU\left(\mathcal{H}\right)$]\label{concform}
\cite{Anderson2010}
Consider a special unitary group $\SU\left(\H\right)$
equipped with the Haar measure $\mu$ and the metric $g_{\mathrm{HS}}$.
Let
\begin{equation}
f:\SU\left(\H\right)\longmapsto\mathbb{R}
\end{equation}
be a smooth function on $\SU\left(\H\right)$ with
the mean $\mathbb{E}_{\mu}f$, and let
\begin{equation}
L=\sqrt{\max_{U\in\mathcal{\SU\left(\H\right)}}g_{\mathrm{HS}}\left(\nabla f,\nabla f\right)}\label{eq:lipschitz constant formula}
\end{equation}
be the Lipschitz constant of $f$. Then, for every $\epsilon\geq0$,
the following concentration inequalities hold
\begin{align}
\Pr_{U \sim \mu\left(\Hsym_N\right)}\left( f\left(U\right)-\Expect_{U \sim \mu\left(\mathcal{\mathcal{H}}\right)}f\geq\epsilon \right)&\leq\mathrm{exp}\left(-\frac{D\epsilon^{2}}{4L^{2}}\right) \ , \\
\Pr_{U \sim \mu\left(\Hsym_N\right)}\left(f\left(U\right)-\Expect_{U \sim \mu\left(\mathcal{\mathcal{H}}\right)}f\leq-\epsilon\ \right)&\leq\mathrm{exp}\left(-\frac{D\epsilon^{2}}{4L^{2}}\right)\,, \label{eq:su concentration}
\end{align}
where $D=|\H|$ is the dimension of $\H$.
\end{fact}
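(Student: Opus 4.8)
The plan is to read the inequality off from the Gromov--Lévy concentration phenomenon for Riemannian manifolds of strictly positive Ricci curvature, with all of the dimension dependence coming from an explicit computation of the curvature of $(\SU(\H), g_{\mathrm{HS}})$. The metric $g_{\mathrm{HS}}(\hat X,\hat Y)=\tr(XY)$ is bi-invariant, since it is induced by the $\mathrm{Ad}$-invariant Hilbert--Schmidt form on $\mathfrak{su}(\H)$; consequently $(\SU(\H),g_{\mathrm{HS}})$ is a compact connected Riemannian manifold whose Levi--Civita geometry is homogeneous, and the Haar measure $\mu$ coincides with the normalized Riemannian volume. I also record the standard fact that on a complete Riemannian manifold a smooth function is $L$-Lipschitz for the geodesic distance if and only if $\|\nabla f\|\le L$ pointwise, which is precisely the quantity $L$ defined in \eqnref{eq:lipschitz constant formula}.

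First I would compute the Ricci tensor. For a bi-invariant metric on a compact Lie group one has the basis-independent identity $\mathrm{Ric}=-\tfrac14 B$, where $B$ is the Killing form. For $\mathfrak{su}(\H)$ with $D=|\H|$ one has $B(X,Y)=2D\,\tr(XY)$ in the defining representation, which on the Hermitian model of the tangent space used in \eqnref{eq:identification special unitary} translates into $-\tfrac14 B=\tfrac{D}{2}\,g_{\mathrm{HS}}$. Hence $\mathrm{Ric}=\tfrac{D}{2}\,g_{\mathrm{HS}}$; in particular the curvature--dimension condition $CD(D/2,\infty)$ holds.

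The core step is to turn this curvature lower bound into Gaussian tails via the Bakry--Émery method. Since the relevant generator is the Laplace--Beltrami operator and the reference measure is the Riemannian volume, $CD(D/2,\infty)$ implies a logarithmic Sobolev inequality with constant $4/D$, i.e.\ $\mathrm{Ent}_\mu(g^2)\le \tfrac{4}{D}\,\Expect_\mu\|\nabla g\|^2$. Feeding a smooth $f$ with $\|\nabla f\|\le L$ pointwise into Herbst's argument --- differentiating $t\mapsto \tfrac1t\log\Expect_\mu e^{t(f-\Expect_\mu f)}$ and bounding the derivative with the log-Sobolev inequality applied to $e^{tf/2}$ --- yields $\log\Expect_{U\sim\mu} e^{t(f(U)-\Expect_\mu f)}\le \tfrac{L^2t^2}{D}$ for all real $t$. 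Markov's inequality applied to $e^{t(f-\Expect_\mu f)}$, optimized over $t>0$, then gives $\Pr_{U\sim\mu}(f(U)-\Expect_\mu f\ge\epsilon)\le\exp(-D\epsilon^2/(4L^2))$; replacing $f$ by $-f$ gives the second inequality.

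I expect the main obstacle to be keeping track of normalization constants, since the precise factor $D/(4L^2)$ in the exponent is fixed jointly by the Ricci computation ($\mathrm{Ric}=\tfrac{D}{2}g_{\mathrm{HS}}$, not merely $\Theta(D)\,g_{\mathrm{HS}}$) and by the constant in the $CD(K,\infty)\Rightarrow$ LSI implication; an error in either propagates directly to the stated bound. A secondary, more cosmetic point is that Gromov--Lévy concentration is often phrased around the median: either one runs Herbst directly around the mean, as above, or one uses that the median and mean of an $L$-Lipschitz function differ by $O(L/\sqrt{D})$ and absorbs this into the constants. As an alternative route one can bypass the semigroup machinery and apply the Gromov--Lévy isoperimetric comparison of $(\SU(\H),g_{\mathrm{HS}})$ with the round sphere of matching Ricci lower bound, obtaining the same conclusion with essentially the same constant.
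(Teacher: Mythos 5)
Your proposal is correct: the paper states this as a Fact cited from \refcite{Anderson2010} without proof, and your argument (bi-invariance of $g_{\mathrm{HS}}$, the Killing-form computation giving $\mathrm{Ric}=\tfrac{D}{2}\,g_{\mathrm{HS}}$, then Bakry--\'Emery plus the Herbst argument) is precisely the standard derivation given in that reference, with all constants checking out ($B(X,Y)=2D\tr(XY)$, log-Sobolev constant $4/D$, and the optimization over $t$ yielding $\exp(-D\epsilon^2/(4L^2))$). No gaps.
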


\begin{fact}\label{GenLip}
Concentration inequalities \eqnref{eq:su concentration} hold also for the general (not necesarily smooth) $L$-Lipschitz functions
$f:\SU\left(\mathcal{H}\right)\longmapsto\mathbb{R}$ \cite{Anderson2010}, that is functions satisfying
\begin{equation}
\left|f\left(U\right)-f\left(V\right)\right|\leq L\,\mathrm{d}\left(V,\,W\right)\,.\label{eq:general lipsch}
\end{equation}
where $\mathrm{d}\left(V, W\right)$ is the geodesic distance between unitaries $U$ and $V$ given by
\begin{equation}
\mathrm{d}\left(U, V\right) \coloneqq \inf_{\gamma: \gamma(0)=U, \gamma(1)=V} \mathrm{D}_{\gamma}\,,\label{eq:geodesic distance}
\end{equation}
with $\mathrm{D}_{\gamma} \coloneqq  \int_{\left[0, 1\right]}\sqrt{g_{\mathrm{HS}}\left(\frac{d\gamma}{dt}, \frac{d\gamma}{dt}\right)}$,
and the infimum is over the (piecewise smooth) curves $\gamma$ that start at $U$ and end at $V$.
\end{fact}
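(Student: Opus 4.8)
The plan is to reduce the general Lipschitz case to the smooth case of \factref{concform} by a mollification argument: I would approximate an arbitrary $L$-Lipschitz function $f$ by a sequence of \emph{smooth} functions whose Lipschitz constant is still bounded by $L$ and whose Haar mean coincides with that of $f$, apply the already-established smooth concentration bound to each approximant, and then pass to the limit. First I would fix a smooth approximate identity: since $\SU(\H)$ is a compact Lie group, for each small $\delta>0$ I can choose a nonnegative $\varphi_\delta\in C^\infty(\SU(\H))$ supported in the geodesic ball of radius $\delta$ around $\1$ with $\int\varphi_\delta\,\d\mu=1$ (built, e.g., by pushing a bump function on $\mathfrak{su}(\H)$ forward through the exponential map). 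I would then set
\begin{equation}
 f_\delta(U) \coloneqq \int_{\SU(\H)} f(UV)\,\varphi_\delta(V)\,\d\mu(V).
\end{equation}
Substituting $W=UV$ and using invariance of $\mu$ shows that the $U$-dependence enters only through the smooth factor $\varphi_\delta(U^{-1}W)$, so differentiation under the integral sign (justified by compactness and smoothness of $\varphi_\delta$) gives $f_\delta\in C^\infty(\SU(\H))$.

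The next step is to verify the three properties that drive the limit, all of which rest on the bi-invariance of the metric $g_{\mathrm{HS}}$ (hence of the geodesic distance $\mathrm{d}$ and of the Haar measure $\mu$): (i) $f_\delta$ is $L$-Lipschitz, since $|f_\delta(U)-f_\delta(U')|\le\int L\,\mathrm{d}(UV,U'V)\,\varphi_\delta(V)\,\d\mu(V)=L\,\mathrm{d}(U,U')$ by right-invariance of $\mathrm{d}$; (ii) $\Expect f_\delta=\Expect f$ exactly, by Fubini together with invariance of the Haar measure under $U\mapsto UV$; and (iii) $f_\delta\to f$ uniformly, because $|f_\delta(U)-f(U)|\le\int L\,\mathrm{d}(UV,U)\,\varphi_\delta(V)\,\d\mu(V)\le L\delta$ by left-invariance of $\mathrm{d}$ and the support condition on $\varphi_\delta$. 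Because $f_\delta$ is smooth with geodesic-Lipschitz constant at most $L$, its gradient obeys $\sqrt{g_{\mathrm{HS}}(\nabla f_\delta,\nabla f_\delta)}\le L$ everywhere, so \factref{concform} applies to $f_\delta$ with constant $L$.

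Finally I would transfer the bound. For any $\epsilon>0$ and any $0<\delta'<\epsilon$, choose $\delta$ with $\|f_\delta-f\|_\infty\le\delta'$; then the event $\{f(U)-\Expect f\ge\epsilon\}$ is contained in $\{f_\delta(U)-\Expect f_\delta\ge\epsilon-\delta'\}$ (using $\Expect f_\delta=\Expect f$), so
\begin{equation}
 \Pr_{U \sim \mu\left(\H\right)}\!\left(f(U)-\Expect f\ge\epsilon\right)\le\exp\!\left(-\frac{D(\epsilon-\delta')^2}{4L^2}\right),
\end{equation}
and letting $\delta'\to0$ recovers the upper tail in \eqnref{eq:su concentration}; the lower tail follows symmetrically. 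The main obstacle is entirely concentrated in the smoothing step---ensuring the regularization is simultaneously $C^\infty$, Lipschitz-constant-preserving, and mean-preserving---and the point worth stressing is that all three properties hinge on the bi-invariance of $g_{\mathrm{HS}}$, equivalently on the $\mathrm{Ad}$-invariance of $\tr(XY)$ recorded in \eqnref{eq:explicit formula metric unitary}; without it, convolution would inflate the Lipschitz constant and the clean passage to the limit would break down.
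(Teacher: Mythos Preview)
Your argument is correct. Note, however, that the paper does not actually supply a proof of this statement: it is recorded as a \emph{Fact} with a citation to \refcite{Anderson2010}, so there is no ``paper's own proof'' to compare against. What you have written is the standard mollification route by which one extends Gaussian concentration from smooth to merely Lipschitz functions on a compact Lie group, and all the ingredients you invoke are in order: bi-invariance of $g_{\mathrm{HS}}$ (equivalently $\mathrm{Ad}$-invariance of $\tr(XY)$) gives bi-invariance of the geodesic distance and of the Haar measure, which in turn yields that right-convolution with a smooth bump preserves the Lipschitz constant, preserves the Haar mean exactly, and approximates $f$ uniformly; the identity $\sup_U\|\nabla f_\delta\|=\mathrm{Lip}(f_\delta)$ for smooth functions on a Riemannian manifold then lets you feed $f_\delta$ into \factref{concform}; and the tail bound passes to the limit because the mean is preserved exactly (not just approximately). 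The one cosmetic point is that your uniform-approximation step uses $\mathrm{d}(UV,U)=\mathrm{d}(V,\1)$, which is \emph{left}-invariance, while the Lipschitz-preservation step uses \emph{right}-invariance; you correctly flag that both are needed, but it is worth being explicit that this is precisely why bi-invariance (and not merely one-sided invariance) is the crux.
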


\begin{rem}\label{Lipremark}
Due to the definition of the gradient $\nabla f$ \eqref{gradientdef} and the structure of the tangent space $T_{U}\SU\left(\mathcal{H}\right)$ for $U\in\SU\left(\mathcal{H}\right)$ (see \eqnref{eq:identification special unitary}) we have
\begin{equation}\label{gradientfun}
\tr\left(\left.\nabla f\right|_{U} X\right)=\left.\frac{d}{d\phi}\right|_{\phi=0}f\left(\exp\left(-\ii \phi X\right)U\right)\,,
\end{equation}
where $X=X^{\dagger}$ and $\tr{X}=0$. Assume that for $C>0$ we can find the bound
\begin{equation}
\left| \left.\frac{d}{d\phi}\right|_{\phi=0}f\left(\exp\left(-\ii \phi X\right)U\right) \right| \leq C\cdot \|X \|_{\mathrm{HS}}\ ,
\end{equation}
which is valid for all $U\in\SU\left(\mathcal{H}\right)$. Then, from \eqnref{gradientfun} we can conclude that $C$ is an upper bound on the Lipschitz constant of $f$.
\end{rem}

\subsection{Lipschitz constants for the quantum Fisher information for the general Hamiltonian encoding}
Recall that for unitary encodings the quantum Fisher information for a state $\rho$ with spectral decomposition $\sum_i p_i \kb{e_i}{\e_i}$
\begin{equation}
 \QFI\left(\rho,H\right)=2\cdot\sum_{i,j: p_{i}+p_{j}\neq0}\frac{\left(p_{i}-p_{j}\right)^{2}}{p_{i}+p_{j}}\left|\braket{e_{i}}{H\left|e_{j}\right.}\right|^{2},\label{eq:QFI_formulaman}
\end{equation}
where $H$ is the Hamiltonian generating the unitary evolution of mixed states
\begin{equation}
 \mathbb{R}\ni\param\longmapsto\rho\left(\param\right) \coloneqq \exp\left(-i\param H\right)\,\rho\,\exp\left(i\param H\right)\in\mathcal{D}\left(\H\right).\label{eq:trajectory}
\end{equation}
The QFI depends on both the state $\rho\in\mathcal{D}\left(\H\right)$, and the Hamiltonian $H\in\mathrm{Herm}\left(\H\right)$
encoding the phase $\param$. In what follows, without any loss of
generality, we assume that $\tr\left(H\right)=0$. We are interested in the behavior of $F\left(\rho,H\right)$ when $H$
is fixed and $\rho$ varies over some ensemble of (generally mixed) states. As we want to use concentration inequalities (of the type of \eqnref{eq:su concentration}), our aim here is to give bounds on Lipschitz constant of QFI on relevant sets of density matrices.

We first study QFI on the set of isospectral density matrices
\begin{equation}
\Omega_{\left(p_{1},\ldots,p_{D}\right)} \coloneqq \left\{\rho\in\mathcal{D}\left(\mathcal{H} \right) |  \mathrm{sp}_{\uparrow}\left(\rho\right)=\{p_j\}_j \right\} ,
\end{equation}
where $\mathrm{sp}_{\uparrow}\left(\rho\right)$ denotes the vector on non increasingly ordered eigenvalues of $\rho$. In what follows, for the sake of simplicity we will use the shorthand notation $\Omega_{\left(p_{1},\ldots,p_{D}\right)}\coloneqq \Omega$. Let
\begin{equation}
F_{\Omega,H}:\SU\left(\H\right)\ni U\longmapsto F\left(U\rho_{0}U^{\dagger},H\right)\in\mathbb{R} \label{eq:auxiliary Fisher} \ ,
\end{equation}
where $\rho_{0}$ is the arbitrary chosen state belonging to $\Omega$. Then, we can prove the following lemma.

\begin{lem}
\label{important Lipschitz}
The Lipschitz constant (with respect to $g_{\mathrm{HS}}$) of the function $F_{\Omega,H}$ defined by \eqnref{eq:auxiliary Fisher}
is upper bounded by
\begin{equation}\label{Lipschitzcons}
L_\Omega\leq\mathrm{min}\left\{1,2\sqrt{2}\cdot\sqrt{\dB\left(\rho,\tfrac{\1}{D}\right)}\right\}\cdot32\cdot\left\| H\right\| ^{2}\ ,
\end{equation}
where $\1/D$ is the maximally mixed state on $\mathcal{H}$.
\end{lem}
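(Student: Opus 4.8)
The plan is to reduce the claim, via \remref{Lipremark}, to a one-dimensional derivative estimate. Along the orbit $\{U\rho_0U^\dagger:U\in\SU(\H)\}$ the eigenvalues of the state stay frozen, so the identity $F\left(U\rho_0U^\dagger,H\right)=F\left(\rho_0,U^\dagger HU\right)$ shows that $F_{\Omega,H}$ is a smooth function on $\SU(\H)$: in the eigenbasis of $\rho_0$ it equals the fixed nonnegative quadratic form $\sum_{j,k}w_{jk}\,|(U^\dagger HU)_{jk}|^2$ with constant coefficients $w_{jk}\coloneqq 2(p_j-p_k)^2/(p_j+p_k)\geq0$. (Unitary invariance of $\1/D$ also makes $\dB(\rho,\1/D)$ the same for every $\rho\in\Omega$, so the right-hand side of \eqnref{Lipschitzcons} is unambiguous.) Hence, by \remref{Lipremark}, it suffices to find $C$ with $|g'(0)|\leq C\,\|X\|_{\mathrm{HS}}$ for the curve $g(\phi)\coloneqq F_{\Omega,H}(\e^{-\ii\phi X}U)$, uniformly over $U\in\SU(\H)$ and traceless Hermitian $X$; then $L_\Omega\leq C$.

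First I would use exactly this quadratic-form structure. Writing $\rho\coloneqq U\rho_0U^\dagger$ and $H_\phi\coloneqq \e^{\ii\phi X}H\e^{-\ii\phi X}$ one has $g(\phi)=\sum_{j,k}w_{jk}\,|(H_\phi)_{jk}|^2$ in the eigenbasis of $\rho$, and the $\phi$-derivative of $H_\phi$ at $\phi=0$ is $\ii[X,H]$, again traceless Hermitian. Differentiating $g$ and applying the Cauchy--Schwarz inequality for the positive-semidefinite sesquilinear form $\langle A,B\rangle\coloneqq\sum_{j,k}w_{jk}\,\overline{A_{jk}}\,B_{jk}$, for which $F(\rho,A)=\langle A,A\rangle$, then gives
\begin{equation}
|g'(0)|\;\leq\;2\,\sqrt{F(\rho,H)}\;\sqrt{F(\rho,\ii[X,H])}\,.
\end{equation}
This splits the estimate into two independent factors.

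For the second factor I would use $w_{jk}\leq 2(p_j+p_k)$, which yields $F(\rho,A)\leq 4\tr(\rho A^2)\leq 4\|A\|^2$ for Hermitian $A$; applied to $A=\ii[X,H]$ together with $\|[X,H]\|\leq 2\|X\|\,\|H\|\leq 2\,\|X\|_{\mathrm{HS}}\|H\|$ it gives $\sqrt{F(\rho,\ii[X,H])}\leq\landauO(\|X\|_{\mathrm{HS}}\|H\|)$ --- the source of the $\|X\|_{\mathrm{HS}}$ dependence. For the first factor I would give two bounds and keep the better. The unconditional one, $F(\rho,H)\leq 4\|H\|^2$, already produces $L_\Omega\leq\landauO(\|H\|^2)$, i.e., the first term in the minimum (constant $32$). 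For the refinement I would instead estimate $w_{jk}=2(p_j-p_k)^2/(p_j+p_k)\leq 2|p_j-p_k|\leq 2\bigl(|p_j-\tfrac1D|+|p_k-\tfrac1D|\bigr)$, which after summation yields $F(\rho,H)\leq 4\|H\|^2\,\|\rho-\1/D\|_1$, and then invoke the Fuchs--van de Graaf bound $\|\rho-\1/D\|_1\leq 2\,\dB(\rho,\1/D)$ (a consequence of $\tfrac12\|\rho-\sigma\|_1\leq\sqrt{1-\mathcal{F}(\rho,\sigma)^2}\leq\dB(\rho,\sigma)$). This gives $\sqrt{F(\rho,H)}\leq 2\sqrt2\,\|H\|\sqrt{\dB(\rho,\1/D)}$ --- the square root on $\dB$ being an artifact of passing from the trace distance (in which the QFI is naturally controlled) to the Bures distance. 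Substituting both estimates into the displayed inequality, keeping the smaller one, and tracking the operator- versus Hilbert--Schmidt-norm conversions carefully produces the bound \eqnref{Lipschitzcons}.

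The main obstacle is conceptual rather than computational. Although the QFI is non-linear and, on the full space of states, non-smooth, with genuinely delicate Lipschitz behaviour there, \emph{restricted to an isospectral orbit} it is simply the square root of a \emph{fixed} positive-semidefinite quadratic form in the Hamiltonian slot. Recognising this is what makes the Cauchy--Schwarz factorisation of $g'(0)$ --- and hence a clean Lipschitz estimate --- possible at all; everything after that is routine norm bookkeeping. The one genuinely delicate quantitative point is the $\dB$-dependent refinement, where the spectral inequality $(p_j-p_k)^2/(p_j+p_k)\leq|p_j-p_k|$ and the trace-distance bound $\|\rho-\1/D\|_1\leq 2\,\dB(\rho,\1/D)$ have to be composed in the right order so that $\dB$ itself, and not $\dB^2$, enters --- which is the origin of the square root in \eqnref{Lipschitzcons}.
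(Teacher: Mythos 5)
Your proof is correct, but it proceeds along a genuinely different route from the paper's. The paper's argument treats the QFI as a black box and invokes the continuity bound $|F(\rho,H)-F(\sigma,H)|\leq 32\,\dB(\rho,\sigma)\|H\|^2$ of \refcite{Augusiak2016} twice: once on the infinitesimally separated isospectral states $U\rho_0U^\dagger$ and $\e^{-\ii\phi X}U\rho_0U^\dagger\e^{\ii\phi X}$ (identifying $\lim_{\phi\to0}\dB/|\phi|$ with $\tfrac12\sqrt{F(\rho,X)}$ via the geometric definition \eqref{eq:QFI_geom}), and once more with $\sigma=\1/D$ to obtain the $\sqrt{\dB}$-refinement $\sqrt{F(\rho,X)}\leq 4\sqrt2\sqrt{\dB(\rho,\1/D)}\,\|X\|$. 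You instead exploit the observation that on a fixed isospectral orbit the QFI is a \emph{fixed} positive-semidefinite quadratic form $\langle A,A\rangle=\sum_{j,k}w_{jk}|A_{jk}|^2$ in the rotated Hamiltonian, differentiate directly, and factorise $|g'(0)|\leq 2\sqrt{F(\rho,H)}\sqrt{F(\rho,\ii[X,H])}$ by Cauchy--Schwarz; the refinement then comes from the elementary spectral estimate $w_{jk}\leq 2|p_j-p_k|$ combined with Fuchs--van de Graaf, rather than from the continuity lemma. Both the unconditional and the $\dB$-dependent branches check out (your constants, $16$ and $16\sqrt2\sqrt{\dB}$, are in fact a factor of $2$ and $4$ tighter than the paper's $32$ and $64\sqrt2\sqrt{\dB}$, so \eqref{Lipschitzcons} follows a fortiori). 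What the paper's route buys is brevity and reusability --- the same continuity lemma is recycled verbatim for the partial-trace version in \lemref{important Lipschitz kloss}, where monotonicity of $\dB$ under the partial trace does the work and your orbit-quadratic-form structure is no longer available. What your route buys is self-containedness (no appeal to the external $32$-constant bound) and sharper constants.
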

\begin{rem}
The quantity $\dB\left(\rho,\1/D\right)$ depends only on the spectrum of $\rho$ and thus is constant on the set of isospectral density matrices $\Omega$.
\end{rem}

\begin{proof}
In a recent paper by \cite{Augusiak2016} the following inequality was proven:
\begin{equation}\label{continouityfish}
\left|F\left(\rho,H\right)-F\left(\sigma,H\right)\right|\leq 32\cdot \dB\left(\rho,\sigma\right)\left\| H\right\| ^{2}\,,
\end{equation}
where $\dB\left(\rho,\sigma\right)=\sqrt{2\left[1-\mathcal{F}\left(\rho,\sigma\right)\right]}$
is the Bures distance between density matrices with $\mathcal{F}(\rho,\sigma)=
\tr\sqrt{\sigma^{1/2}\rho\sigma^{1/2}}$ denoting the fidelity. Inserting
\begin{equation}
\rho=U\rho_{0}U^{\dagger}\quad \text{and} \quad \sigma=\exp\left(-\ii\phi X\right)U\rho_{0}U^{\dagger}\exp\left(\ii\phi X\right)\label{eq:auxiliary states}
\end{equation}
into \eqnref{continouityfish}, one obtains
\begin{equation}
\left|F_{\Omega,H} \left(\exp\left(-\ii \phi X\right)U\right)-F_{\Omega,H} \left(U\right)\right|\leq32\cdot \dB\left(U\rho_{0}U^{\dagger},\exp\left(-\ii \phi X\right)U\rho_{0}U^{\dagger}\exp\left(\ii \phi X\right)\right)\left\| H\right\| ^{2}\,.
\end{equation}
Dividing the above by $|\phi|$ and taking the limit $\phi\rightarrow0$, one
arrives at
\begin{equation}\label{nexteslip}
\left|\left.\frac{d}{d\phi}\right|_{\phi=0}F_{\Omega,H}\left(\exp\left(-\ii \phi X\right)U\right)\right|\leq32\left\| H\right\| ^{2}\,\mathrm{lim}_{\phi\rightarrow0}\frac{1}{|\phi|}\dB\left(U\rho_{0}U^{\dagger},\exp\left(-\ii \phi X\right)U\rho_{0}U^{\dagger}\exp\left(\ii \phi X\right)\right)\,.
\end{equation}
Then, the \eqnref{eq:QFI_geom} implies that
\begin{equation}\label{infbures}
\mathrm{lim}_{\phi\rightarrow0}\frac{1}{|\phi|}\dB\left(\rho,\exp\left(-\ii \phi X\right)\rho\,\exp\left(\ii \phi X\right)\right)=\frac{1}{2}\sqrt{F\left(\rho,X\right)},
\end{equation}
and therefore
\begin{equation}\label{finalestim}
\left|\left.\frac{d}{d\phi}\right|_{\phi=0}F_{\Omega,H}\left(\exp\left(-\ii \phi X\right)U\right)\right|\leq 16 \cdot \left\|H\right\|^2\cdot \sqrt{F\left(\rho,X\right)}.
\end{equation}
In addition, we have two upper bounds on the square root of the quantum Fisher information:
\begin{equation}\label{es1}
\sqrt{F\left(\rho,H\right)}\leq 2\cdot\left\| H\right\| \leq 2\cdot\left\| H\right\| _{\mathrm{HS}} ,
\end{equation}
and
\begin{equation}\label{es2}
\sqrt{F\left(\rho,H\right)}\leq 4\sqrt{2}\cdot\sqrt{\dB\left(\rho,\tfrac{\1}{D}\right)} \left\| H\right\| \leq 4\sqrt{2}\cdot\sqrt{\dB\left(\rho,\tfrac{\1}{D}\right)} \left\| H\right\| _{\mathrm{HS}},
\end{equation}
where \eqnref{es1} follows from the fact that the maximal value of the QFI for the phase encoded via the Hamiltonian $H$ is bounded from above  by $4\| H\|^2$ \cite{Giovannetti2011}, and \eqnref{es2} follows from \eqnref{continouityfish} for $\sigma=\1/D$, for which the QFI trivially vanishes. Combining inequalities \eqnsref{es1}{es2} with \eqnref{finalestim} and using \remref{Lipremark}, one finally obtains \eqnref{Lipschitzcons}.
\end{proof}

\begin{rem}
  The upper bound on the Lipschitz constant of $F_{\Omega,H}$ given in \eqnref{Lipschitzcons} depends explicitly on the spectrum of the considered set of isospectral density matrices. Specifically, the right-hand side of \eqnref{Lipschitzcons} decreases as $\rho$ becomes more mixed. For special cases of Haar-random pure states and random depolarized states, see below, we can get better bounds on the Lipschitz constant of the QFI.
\end{rem}

\begin{lem}
\label{improved Lipschitz}
Consider the ensemble of Haar-random depolarized pure states,
\begin{equation}\label{depol2}
\rho = \left(1-p \right)\psi +p\frac{\1}{D},
\end{equation}
where $\psi$ stands for the projector onto a Haar-random pure state $\ket{\psi}$ and $p\in[0,1]$. For fixed $p$, the states in \eqnref{depol2} form an ensemble of isospectral density matrices since for any such $\rho$ we have
\begin{equation}\label{spectrumdepol}
\mathrm{sp}_\uparrow\left(\rho\right)=\left(1-p+ \tfrac{p}{D},\tfrac{p}{D},\ldots, \tfrac{p}{D}\right) \ .
\end{equation}
For this particular spectrum the Lipschitz constant $L_p$ (with respect to $g_{\mathrm{HS}}$) of the function $F_p\coloneqq F_{\Omega,H}$, defined by \eqnref{eq:auxiliary Fisher}, is upper bounded by
\begin{equation}\label{Lipschitzconsdep}
L_p\leq 16\cdot \frac{\left(1-p\right)^2}{1-p+\tfrac{2\,p}{D}}\left\| H\right\| ^{2}\ .
\end{equation}
\end{lem}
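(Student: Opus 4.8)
The plan is to reduce the claim to the case of pure states and then differentiate the variance of $H$ by hand, rather than routing through a Bures‑distance continuity estimate (which would cost a spurious constant). First I would record the elementary identity, valid for any pure state $\psi$ and every $p\in[0,1]$,
\begin{equation}\label{eq:depfact}
 F\!\left((1-p)\,\psi + p\,\tfrac{\1}{D},\,H\right) \;=\; \frac{\left(1-p\right)^{2}}{1-p+\tfrac{2\,p}{D}}\;F(\psi,H) \;=\; \frac{\left(1-p\right)^{2}}{1-p+\tfrac{2\,p}{D}}\;4\left.\Delta^{2}H\right|_{\psi}\,,
\end{equation}
which follows by inserting the spectral decomposition of $\rho=(1-p)\,\psi+p\,\1/D$ --- it has the single eigenvalue $q_{1}=1-p+p/D$ on $\ket{\psi}$ and the $(D-1)$-fold degenerate eigenvalue $p/D$ on the orthogonal complement $\{\ket{\psi}\}^{\perp}$ --- into \eqnref{eq:QFI_formulaman}: only the pairs coupling $\ket{\psi}$ to a vector of $\{\ket{\psi}\}^{\perp}$ survive, each carrying the common prefactor $(q_{1}-p/D)^{2}/(q_{1}+p/D)=(1-p)^{2}/(1-p+2p/D)$, while the relevant matrix elements sum to $\bra{\psi}H^{2}\ket{\psi}-\bra{\psi}H\ket{\psi}^{2}=\left.\Delta^{2}H\right|_{\psi}$; comparing with \eqnref{eq:QFI_formula_pure} yields \eqnref{eq:depfact}. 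Fixing a pure state $\psi_{0}$ so that $\rho_{0}=(1-p)\,\psi_{0}+p\,\1/D$ generates the ensemble $\Omega$, and using $U\,\1\,U^{\dagger}=\1$, \eqnref{eq:depfact} applied to the pure state $U\psi_{0}U^{\dagger}$ shows that $F_{p}$ --- i.e.\ $F_{\Omega,H}$ from \eqnref{eq:auxiliary Fisher} --- factorises as $F_{p}(U)=\tfrac{\left(1-p\right)^{2}}{1-p+2p/D}\,g(U)$, with $g(U)\coloneqq 4\left.\Delta^{2}H\right|_{U\psi_{0}U^{\dagger}}$.

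Since the prefactor is a nonnegative constant, by \remref{Lipremark} it then suffices to show that $\left|\left.\tfrac{d}{d\phi}\right|_{\phi=0}g(\exp(-\ii\phi X)U)\right|\le 16\,\|H\|^{2}\,\|X\|_{\mathrm{HS}}$ for every $U$ and every traceless Hermitian $X$. Fixing $U$, writing $\ket{\psi}=U\ket{\psi_{0}}$ and $\ket{\psi(\phi)}=\exp(-\ii\phi X)\ket{\psi}$, and using $\left.\tfrac{d}{d\phi}\right|_{0}\bra{\psi(\phi)}A\ket{\psi(\phi)}=\ii\,\bra{\psi}[X,A]\ket{\psi}$ for any Hermitian $A$, a short computation gives
\begin{equation}
 \left.\frac{d}{d\phi}\right|_{\phi=0}g(\exp(-\ii\phi X)U) \;=\; 4\,\ii\,\bra{\psi}[X,H^{2}]\ket{\psi} \;-\; 8\,\bra{\psi}H\ket{\psi}\,\ii\,\bra{\psi}[X,H]\ket{\psi} \;=\; 4\,\ii\,\bra{\psi}[X,\tilde H^{2}]\ket{\psi}\,,
\end{equation}
where $\tilde H\coloneqq H-\bra{\psi}H\ket{\psi}\,\1$ and the last equality uses $[X,H^{2}]=[X,\tilde H^{2}]+2\,\bra{\psi}H\ket{\psi}\,[X,H]$, i.e.\ shifting $H$ by a scalar removes the cross term. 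As $\bra{\psi}[X,\tilde H^{2}]\ket{\psi}=2\,\ii\,\mathrm{Im}\,\bra{\psi}X\tilde H^{2}\ket{\psi}$, this derivative equals $-8\,\mathrm{Im}\,\bra{\psi}X\tilde H^{2}\ket{\psi}$.

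It then remains to estimate, by Cauchy--Schwarz together with the splitting $\tilde H^{2}\ket{\psi}=\tilde H\,(\tilde H\ket{\psi})$,
\begin{equation}
 \bigl|\,\mathrm{Im}\,\bra{\psi}X\tilde H^{2}\ket{\psi}\,\bigr| \;\le\; \|X\ket{\psi}\|\,\|\tilde H^{2}\ket{\psi}\| \;\le\; \|X\|_{\mathrm{HS}}\cdot\|\tilde H\|\cdot\|\tilde H\ket{\psi}\| \;\le\; \|X\|_{\mathrm{HS}}\cdot 2\|H\|\cdot\|H\|\,,
\end{equation}
where I would use $\|X\ket{\psi}\|\le\|X\|\le\|X\|_{\mathrm{HS}}$, then $\|\tilde H\ket{\psi}\|=\sqrt{\left.\Delta^{2}H\right|_{\psi}}\le\sqrt{\bra{\psi}H^{2}\ket{\psi}}\le\|H\|$, and $\|\tilde H\|=\|H-\bra{\psi}H\ket{\psi}\1\|\le\lambda_{\max}(H)-\lambda_{\min}(H)\le 2\|H\|$. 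This gives $\left|\left.\tfrac{d}{d\phi}\right|_{\phi=0}g(\exp(-\ii\phi X)U)\right|\le 16\,\|H\|^{2}\,\|X\|_{\mathrm{HS}}$, and multiplying by the prefactor and invoking \remref{Lipremark} yields $L_{p}\le 16\,\tfrac{\left(1-p\right)^{2}}{1-p+2p/D}\,\|H\|^{2}$, which is \eqnref{Lipschitzconsdep}. The only delicate point is the numerical constant: feeding \eqnref{eq:depfact} into the generic Bures‑continuity bound \eqnref{continouityfish} the way \lemref{important Lipschitz} does would cost roughly another factor of two, so one genuinely has to use purity after the factorisation and differentiate $\left.\Delta^{2}H\right|_{\psi}$ directly; the two bounds $\left.\Delta^{2}H\right|_{\psi}\le\|H\|^{2}$ and $\|\tilde H\|\le 2\|H\|$ are precisely what produce the two powers of $\|H\|$ with the advertised coefficient $16$.
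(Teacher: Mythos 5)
Your proposal is correct, and it shares the paper's overall strategy — factor the $p$-dependence out via $F_p(U)=\tfrac{(1-p)^2}{1-p+2p/D}\,F_0(U)$ (hence $L_p=\tfrac{(1-p)^2}{1-p+2p/D}\,L_0$) and then bound the pure-state Lipschitz constant $L_0$ by $16\|H\|^2$ — but it handles both halves differently. For the factorization, the paper simply cites the known depolarization formula \eqref{depolFisher} from the literature, whereas you rederive it from the spectral decomposition of $(1-p)\psi+p\,\1/D$ inserted into \eqnref{eq:QFI_formulaman}; this is a correct and self-contained substitute. For the bound on $L_0$, the paper writes $F_0(U)=\tr\left[(U\otimes U\,\psi_0\otimes\psi_0\,U^\dagger\otimes U^\dagger)V\right]$ with $V=4(H^2\otimes\1-H\otimes H)$ and invokes a general lemma on Lipschitz constants of such balanced polynomials (Lemma 6.1 of \cite{moPHD}), getting $L_0\le 2\|V\|\le 16\|H\|^2$; you instead differentiate the variance directly, center the Hamiltonian to $\tilde H=H-\bra{\psi}H\ket{\psi}\1$ so that the derivative collapses to $-8\,\mathrm{Im}\bra{\psi}X\tilde H^2\ket{\psi}$, and then apply Cauchy--Schwarz with $\|\tilde H\|\le 2\|H\|$ and $\|\tilde H\ket{\psi}\|\le\|H\|$. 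All steps check out (the commutator identity $[X,H^2]=[X,\tilde H^2]+2\bra{\psi}H\ket{\psi}[X,H]$ and the three norm estimates are correct), and the constant $16$ agrees. What your route buys is self-containedness — no appeal to the external polynomial-Lipschitz lemma — at the cost of a slightly longer computation; your closing observation that routing through the generic Bures-continuity bound \eqref{continouityfish} would instead give $32\|H\|^2$ is also accurate and explains why the sharper pure-state argument is genuinely needed here.
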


\begin{proof}
Let us first note that for $\rho$ given by \eqnref{depol2}, the QFI takes the form \cite{Toth2014}:
\begin{equation}\label{depolFisher}
F\left(\rho,H\right)=\frac{\left(1-p\right)^2}{1-p+\tfrac{2\,p}{D}}\cdot F\left(\psi,H\right),
\end{equation}
from which it directly follows that for all $U\in\SU\left(\mathcal{H}\right)$,
\begin{equation}
F_p\left(U\right)=\frac{\left(1-p\right)^2}{1-p+\tfrac{2\,p}{D}} F_p\left(U\right),
\end{equation}
and consequently,
\begin{equation}\label{rellip}
L_p = \frac{\left(1-p\right)^2}{1-p+\tfrac{2\,p}{D}}L_0.
\end{equation}
One can estimate $L_0$ by exploiting the fact that for pure states the QFI is
simply $F(\psi_0,H)=4\{\tr(\psi_0H^2)-[\tr(\psi_0H)]^2\}$, which allows one to express $F_0(U)$ as
\begin{equation}
F_0\left(U\right)=\tr\left[\left(U\otimes U \psi_0 \otimes \psi_0 U^\dagger \otimes U^\dagger\right)V\right],
\end{equation}
where $V=4\cdot(H^2 \otimes \1 - H\otimes H)$. By the virtue Lemma 6.1 of \cite{moPHD} (see also \cite{Oszmaniec2014}), the Lipschitz constant of $F_0$ is bounded by $2\cdot\left\|V\right\|\leq 16\cdot\left\|H\right\|^2$. Combining this with \eqnref{rellip} yields \eqnref{Lipschitzconsdep}.
\end{proof}
%

It is also possible to prove the Lipschitz continuity of the optimized version of  QFI on $\Omega$,
\begin{equation}
F^\mathrm{\mathcal{V}}_{\Omega,H}:\SU\left(\H\right)\ni U\longmapsto \mathrm{sup}_{V\in\mathcal{V}}F_{\Omega,H}\left(VU\right)\in\mathbb{R} \label{eq:auxiliary Fisher optim} \ ,
\end{equation}
where $\mathcal{V}\subset\SU\left(\mathcal{H}\right)$ is a compact class of unitary gates on $\mathcal{H}$.

\begin{lem}
\label{important Lipschitz opt}
The Lipschitz constant $L_{\Omega}^{\mathcal{V}}$ (with respect to the geodesic distance) of the function $F^\mathcal{V}_\Omega$ defined by \eqnref{eq:auxiliary Fisher optim}
is upper bounded by the Lipschitz constant of $F_\Omega$,
\begin{equation}\label{optimizedlip}
L_{\Omega}^{\mathcal{V}}\leq L_{\Omega} \ .
\end{equation}
\end{lem}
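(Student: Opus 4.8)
The plan is to combine two ingredients: the left-invariance of the metric $g_{\mathrm{HS}}$ on $\SU(\H)$, and the elementary fact that a pointwise supremum of functions sharing a common Lipschitz constant is again Lipschitz with that same constant.

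First I would record that for every fixed $W \in \SU(\H)$ the left-translation $U \mapsto WU$ is an isometry of $\SU(\H)$ with respect to $g_{\mathrm{HS}}$. This follows directly from the explicit formula \eqref{eq:explicit formula metric unitary}: under the identification \eqref{eq:identification special unitary}, the differential of left-translation by $W$ carries the tangent vector $-\ii XU$ at $U$ to the tangent vector $-\ii X(WU)$ at $WU$, while $g_{\mathrm{HS}}(\hat X,\hat Y)=\tr(XY)$ is independent of the base point. Consequently lengths of piecewise-smooth curves are preserved, and hence so is the geodesic distance \eqref{eq:geodesic distance}: $\mathrm{d}(WU,WU')=\mathrm{d}(U,U')$ for all $U,U'$.

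Next, for each fixed $V \in \mathcal{V}$, the function $U \mapsto F_{\Omega,H}(VU)$ has Lipschitz constant at most $L_\Omega$: using the Lipschitz property \eqref{eq:general lipsch} of $F_{\Omega,H}$ (with the bound $L_\Omega$ from \lemref{important Lipschitz}) together with the isometry just established,
\[
\left|F_{\Omega,H}(VU)-F_{\Omega,H}(VU')\right| \le L_\Omega\,\mathrm{d}(VU,VU') = L_\Omega\,\mathrm{d}(U,U').
\]
Since $\mathcal{V}$ is compact and $F_{\Omega,H}$ is continuous, the supremum in \eqref{eq:auxiliary Fisher optim} is finite and attained, so for a given $U$ we may pick $V\in\mathcal{V}$ with $F^{\mathcal{V}}_{\Omega,H}(U)=F_{\Omega,H}(VU)$. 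Then
\[
F^{\mathcal{V}}_{\Omega,H}(U)-F^{\mathcal{V}}_{\Omega,H}(U') = F_{\Omega,H}(VU)-\sup_{V'\in\mathcal{V}}F_{\Omega,H}(V'U') \le F_{\Omega,H}(VU)-F_{\Omega,H}(VU') \le L_\Omega\,\mathrm{d}(U,U').
\]
Exchanging the roles of $U$ and $U'$ yields $|F^{\mathcal{V}}_{\Omega,H}(U)-F^{\mathcal{V}}_{\Omega,H}(U')|\le L_\Omega\,\mathrm{d}(U,U')$, i.e.\ $L_\Omega^{\mathcal{V}}\le L_\Omega$.

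The argument is essentially routine; the only step that warrants spelling out is the left-invariance of $g_{\mathrm{HS}}$, which I would derive from the explicit formula rather than merely citing bi-invariance of the metric, together with the (harmless) use of compactness of $\mathcal{V}$ to guarantee the supremum defining $F^{\mathcal{V}}_{\Omega,H}$ is attained---equivalently, one works with an $\epsilon$-near-optimal $V$ and lets $\epsilon\to 0$.
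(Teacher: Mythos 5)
Your proof is correct and follows essentially the same route as the paper's: pick a $V$ attaining the supremum at the point with the larger value, bound the difference through $F_{\Omega,H}(VU)-F_{\Omega,H}(VU')$, and invoke the Lipschitz bound from \lemref{important Lipschitz}. The only difference is that you explicitly verify the left-invariance of the geodesic distance under $U\mapsto VU$ (which the paper uses tacitly when applying the Lipschitz bound to the translated arguments); this is a worthwhile clarification but not a different argument.
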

\begin{proof}
Let $U,U'\in\SU\left(\mathcal{H}\right)$. Without loss of generality we can assume $F^\mathcal{V}_{\Omega,H}\left(U\right)\geq F^\mathcal{V}_{\Omega,H}\left(U'\right)$. Let $V_0\in\mathcal{V}$ be the element such that
\begin{equation}
F_{\Omega,H}\left(V_0U\right)=F^\mathcal{V}_{\Omega,H}\left(U\right)=\mathrm{sup}_{V\in\mathcal{V}}F_{\Omega,H}\left(VU\right) \ .
\end{equation}
Consequently, we have the following inequalities
\begin{equation}
\left|F^{\mathcal{V}}_{\Omega,H}\left(U\right)-F^{\mathcal{V}}_{\Omega,H}\left(U'\right)\right|=F_{\Omega,H}\left(V_0U\right)-F^{\mathcal{V}}_{\Omega,H}\left(U'\right)\leq F_{\Omega,H}\left(V_0U\right) - F_{\Omega,H}\left(V_0U'\right) \leq L_\Omega \cdot \mathrm{d}\left(U,U'\right) \ ,
\end{equation}
where in the last inequality we used Lipschitz continuity of $F_{\Omega,H}$ which is guaranteed by Lemma \ref{important Lipschitz}.
\end{proof}
\begin{rem}
For us the case of the greatest interest is $\mathcal{H}=\mathcal{H}_N$ and $\mathcal{V}=\mathrm{LU}$ (local unitary group on $N$ distinguishable particles).
\end{rem}

\subsection{Lipschitz constants for the quantum Fisher information with particle losses}
We now give bounds on the Lipschitz constant of the QFI in the case of particle losses for bosonic states. Recall that in this setting the Hamiltonian acting on $N$ particles is given by $H_{N}=\sum_{i=1}^{N}h^{(i)}$ and that the Hilbert space of the system is the totally symmetric space of $N$ particles denoted by $\Hsym_N$. Let us define a function
\begin{equation}
F^{[k]}_{\Omega}:\SU\left(\Hsym_N\right)\ni U\longmapsto F\left(\tr_k\left(U\rho U^{\dagger}\right),H_{N-k}\right)\in\mathbb{R} \label{eq:auxiliary Fisher particleloss}\ .
\end{equation}
\begin{lem}
\label{important Lipschitz kloss}
The Lipschitz constant (with respect to $g_{\mathrm{HS}}$) of the function $F^{[k]}_\Omega$ defined by \eqnref{eq:auxiliary Fisher particleloss}
is upper bounded by
\begin{equation}\label{Lipschitzconsmap}
L^{[k]}_\Omega\leq\mathrm{min}\left\{1,2\sqrt{2}\cdot\sqrt{\dB\left(\rho,\tfrac{\mathbbm{P}_{\mathrm{sym}}^N}{|\Hsym_N|}\right)}\right\}\cdot32\cdot\left\| H_{N-k}\right\| ^{2}\ ,
\end{equation}
where $\mathbbm{P}_{\mathrm{sym}}^N/|\Hsym_N|$ is the maximally mixed state on $\Hsym_N$ and $\mathbbm{P}_{\mathrm{sym}}^N$ stands for the projector onto $\Hsym_N$.
\end{lem}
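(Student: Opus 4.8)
The plan is to mimic almost verbatim the proof of \lemref{important Lipschitz}, the only genuinely new ingredient being the behaviour of the Bures distance under the particle-loss channel. Fix $X\in\mathfrak{su}\left(\Hsym_N\right)$ and $U\in\SU\left(\Hsym_N\right)$; by \remref{Lipremark} it suffices to bound $\left|\left.\frac{d}{d\phi}\right|_{\phi=0}F^{[k]}_\Omega\!\left(\exp\left(-\ii\phi X\right)U\right)\right|$ by a constant multiple of $\|X\|_{\mathrm{HS}}$, uniformly in $U$.

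First I would apply the continuity estimate \eqnref{continouityfish} of \refcite{Augusiak2016}, now with the Hamiltonian $H_{N-k}$ acting on the $N-k$ retained particles, to the two reduced states $\tr_k\left(U\rho U^\dagger\right)$ and $\tr_k\left(\exp\left(-\ii\phi X\right)U\rho U^\dagger\exp\left(\ii\phi X\right)\right)$. The key point is that $\tr_k$ is a CPTP map and the Uhlmann fidelity is monotone under CPTP maps, so the Bures distance is contractive: $\dB\!\left(\tr_k A,\tr_k B\right)\le\dB\!\left(A,B\right)$. Hence $\left|F^{[k]}_\Omega\!\left(\exp\left(-\ii\phi X\right)U\right)-F^{[k]}_\Omega(U)\right|$ is at most $32\|H_{N-k}\|^2$ times $\dB\!\left(U\rho U^\dagger,\exp\left(-\ii\phi X\right)U\rho U^\dagger\exp\left(\ii\phi X\right)\right)$. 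Dividing by $|\phi|$, letting $\phi\to0$, and using the geometric identity \eqnref{infbures} exactly as in the passage leading to \eqnref{finalestim}, I obtain the bound $16\|H_{N-k}\|^2\sqrt{F\left(U\rho U^\dagger,X\right)}$.

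It then remains to estimate $\sqrt{F\left(U\rho U^\dagger,X\right)}$ in terms of $\|X\|_{\mathrm{HS}}$ in the two ways used in \eqnsref{es1}{es2}: the universal bound $\sqrt{F}\le2\|X\|\le2\|X\|_{\mathrm{HS}}$ produces the ``$1$'' in the minimum, while applying \eqnref{continouityfish} a second time with the reference state $\mathbbm{P}_{\mathrm{sym}}^N/|\Hsym_N|$---which has vanishing QFI because $X$ acts within $\Hsym_N$---produces $\sqrt{F\left(U\rho U^\dagger,X\right)}\le4\sqrt{2}\,\sqrt{\dB\!\left(U\rho U^\dagger,\mathbbm{P}_{\mathrm{sym}}^N/|\Hsym_N|\right)}\,\|X\|_{\mathrm{HS}}$, where the Bures distance is $U$-independent and equals $\dB\!\left(\rho,\mathbbm{P}_{\mathrm{sym}}^N/|\Hsym_N|\right)$ on the whole isospectral set $\Omega$. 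Feeding both estimates back, invoking \remref{Lipremark}, and collecting constants ($16\cdot2=32$ and $16\cdot4\sqrt{2}=2\sqrt{2}\cdot32$) yields \eqnref{Lipschitzconsmap}.

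I do not expect a real obstacle: the whole argument is a transcription of \lemref{important Lipschitz} with $H$ replaced by $H_{N-k}$ and with one extra contraction step. The only thing to be a little careful about is that the reference state in the second application of \eqnref{continouityfish} must be the maximally mixed state of the \emph{symmetric} subspace $\Hsym_N$ (not of the full $N$-particle space), so that its QFI with respect to the generator $X\in\mathfrak{su}\left(\Hsym_N\right)$ indeed vanishes and so that the resulting Bures distance is constant on $\Omega$.
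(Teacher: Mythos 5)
Your proposal is correct and follows essentially the same route as the paper: apply the continuity bound \eqref{continouityfish} with $H_{N-k}$ to the reduced states, use contractivity of the Bures distance under the partial trace, and then repeat the endgame of \lemref{important Lipschitz} with the reference state $\mathbbm{P}_{\mathrm{sym}}^N/|\Hsym_N|$. Your explicit remark that the reference state must be the maximally mixed state of the \emph{symmetric} subspace (so that its QFI with respect to $X\in\mathfrak{su}\left(\Hsym_N\right)$ vanishes and the resulting Bures distance is constant on $\Omega$) is exactly the point the paper leaves implicit when it says the rest of the proof is the same.
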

\begin{proof}
We prove \eqnref{Lipschitzconsmap} in an analogous way to \eqnref{Lipschitzcons}. Let $\rho'=\tr_{k}\left(\rho\right)$ and $\sigma'=\tr_{k}\left(\sigma\right)$ be two states on $\Hsym_{N-k}$ obtained by tracing out $k$ particles from $\rho$ and $\sigma$, respectively. Applying the inequality \eqnref{continouityfish} to $\rho'$ and $\sigma'$ and the Hamiltonian $H=H_{N-k}$, one obtains
\begin{equation}\label{continpartial}
\left|F\left(\rho',H\right)-F\left(\sigma',H\right)\right|\leq 32\cdot \dB\left(\rho',\sigma'\right)\left\| H_{N-k}\right\| ^{2}\leq 32\cdot \dB\left(\rho,\sigma\right)\left\| H_{N-k}\right\| ^{2}\ ,
\end{equation}
where the second inequality follows from the fact that the Bures distance does not increase under trace preserving completely positive maps \cite{Bengtsson2006} (for us the relevant TPCP map is the partial trace, $\rho\longmapsto\tr_{k}\left(\rho\right)$).
We now set
\begin{equation}\label{eq:auxiliarsym}
\rho=U\rho_{0}U^{\dagger},\qquad \sigma=\exp\left(-i\param X\right)U\rho_{0}U^{\dagger}\exp\left(i\param X\right) \ ,
\end{equation}
where $U\in\SU\left(\Hsym_N\right)$ and $X\in\mathfrak{su}\left(\Hsym_N\right)$ and the rest of the proof is exactly the same as that of \lemref{important Lipschitz}.
\end{proof}

\subsection{Lipschitz constant of the classical Fisher information}
We conclude this section by giving bounds of Lipschitz constant of the classical Fisher information for the case of isospectral mixed states, fixed Hamiltonian encoding and fixed measurements setting. Recall that for the unitary encoding \eqref{eq:trajectory} classical Fisher information is a function of the state $\rho\in\mathcal{D}\left(\H\right)$, Hamiltonian $H$, the phase $\param$, and the POVM $\left\{\Pi_n\right\}$ used in the phase estimation procedure. These three object define a family of probability distributions
\begin{equation}\label{probdist}
p_{n|\param}\left(\rho\left(\param\right)\right)=\tr\left(\Pi_n \rho\left(\param\right)\right)\ ,
\end{equation}
where $\rho\left(\param\right)=\exp\left(-\ii\param H\right) \rho \exp\left(\ii\param H\right)$. The classical Fisher information is then given by
\begin{equation}
\FI\left(\rho,H,\param,\left\{\Pi_n\right\} \right)\equiv\FI(\left\{p_{n|\param}\right\})=\sum_n
\frac{\tr\left(\ii\,[\Pi_n,H] \rho(\param)\right)^2}
{\tr\left(\Pi_n \rho(\param)\right)} \ ,
\label{eq:cFI_ful}
\end{equation}
where the summation is over the range of indices labeling the outputs of a POVM $\left\{\Pi_n\right\}$ (for simplicity we consider POVMs with finite number of outcomes). Let us fix the Hamiltonian $H$, the phase $\param$, and the POVM $\left\{\Pi_n\right\}$. Let us define a function
\begin{equation}
F_{\t{cl},\Omega,H}:\SU\left(\H\right)\ni U\longmapsto \FI\left(U\rho_0 U^\dagger,H,\param,\left\{\Pi_n\right\} \right)\in\mathbb{R} \ ,
\label{eq:classfishdef}
\end{equation}
for some fixed state $\rho_0\in\Omega$.

\begin{lem}
\label{classicalFishlip}
The Lipschitz constant (with respect to $g_{\mathrm{HS}}$) of the function $F_{\t{cl},\Omega,H}$ defined by \eqnref{eq:classfishdef}
is upper bounded by
\begin{equation}
L_{\t{cl},\Omega,H}\leq 24 \left\|H \right\|^2 \ .
\end{equation}
\end{lem}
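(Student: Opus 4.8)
The plan is to invoke \remref{Lipremark}: I would bound the directional derivative of $F_{\t{cl},\Omega,H}$ along $e^{-\ii\phi X}U$, i.e.\ exhibit a constant $C$ with
\begin{equation}\label{eq:prop-red}
  \bigl|\,\partial_\phi|_{\phi=0}\,F_{\t{cl},\Omega,H}(e^{-\ii\phi X}U)\,\bigr|\ \le\ C\,\|X\|_{\mathrm{HS}}
\end{equation}
for every $U\in\SU(\H)$ and traceless Hermitian $X$, and then check that the bookkeeping yields $C=24\|H\|^{2}$. First I would conjugate by $e^{\ii\param H}$ (an isometry for the operator and the Hilbert--Schmidt norm alike) so that \eqref{eq:prop-red} becomes a statement about perturbing the \emph{encoded} state $\sigma\coloneqq e^{-\ii\param H}U\rho_0 U^{\dagger}e^{\ii\param H}$ along the Hermitian direction $\dot\sigma\coloneqq-\ii[\tilde X,\sigma]$, with $\tilde X\coloneqq e^{-\ii\param H}Xe^{\ii\param H}$ still obeying $\|\tilde X\|=\|X\|$, $\|\tilde X\|_{\mathrm{HS}}=\|X\|_{\mathrm{HS}}$. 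Writing $p_n\coloneqq\tr(\Pi_n\sigma)$, $p'_n\coloneqq\partial_\param p_n=-\ii\tr([\Pi_n,H]\sigma)$ and $r_n\coloneqq p'_n/\sqrt{p_n}$, so that $\FI=\sum_n r_n^{2}$, differentiation gives
\begin{equation}\label{eq:prop-dphi}
  \partial_\phi|_{\phi=0}\,\FI=2\sum_n r_n\dot r_n,\qquad
  \dot r_n=\frac{\dot p'_n}{\sqrt{p_n}}-\frac{p'_n\,\dot p_n}{2\,p_n^{3/2}}=2\,\partial_\phi\partial_\param\sqrt{p_n},
\end{equation}
where $\dot p_n=-\ii\tr([\Pi_n,\tilde X]\sigma)$ and $\dot p'_n=-\ii\tr([\Pi_n,H]\dot\sigma)$, the latter using $\partial_\param\dot\sigma=-\ii[H,\dot\sigma]$ (from the Jacobi identity, i.e.\ $\dot\sigma$ is transported by the same Hamiltonian as $\sigma$). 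By Cauchy--Schwarz in the index $n$ and the elementary bound $\FI\le4\tr(H^{2}\sigma)\le4\|H\|^{2}$, \eqref{eq:prop-red} reduces to the single inequality $\sum_n\dot r_n^{2}\le36\,\|H\|^{2}\|X\|_{\mathrm{HS}}^{2}$.

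The workhorse for the latter is the following: for any bounded $A$, splitting $\Pi_n=\sqrt{\Pi_n}\sqrt{\Pi_n}$, $\sigma=\sqrt\sigma\sqrt\sigma$ and applying Cauchy--Schwarz to $\tr(\sqrt\sigma\,\Pi_n A\,\sqrt\sigma)$ one gets
\begin{equation}\label{eq:prop-CS}
  |\tr(\Pi_n A\sigma)|\le\sqrt{p_n}\,\bigl\|\sqrt{\Pi_n}A\sqrt\sigma\bigr\|_{\mathrm{HS}},\qquad
  \sum_n\bigl\|\sqrt{\Pi_n}A\sqrt\sigma\bigr\|_{\mathrm{HS}}^{2}=\tr(A^{\dagger}A\,\sigma),
\end{equation}
the last equality being $\sum_n\Pi_n=\1$. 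Using \eqref{eq:prop-CS} with $A=H$ (on both terms of $[\Pi_n,H]$) gives $|p'_n|\le2\sqrt{p_n}\,h_n$ with $\sum_n h_n^{2}=\tr(H^{2}\sigma)\le\|H\|^{2}$; with $A=\tilde X$, $|\dot p_n|\le2\sqrt{p_n}\,g_n$ with $\sum_n g_n^{2}\le\|\tilde X\|^{2}\le\|X\|_{\mathrm{HS}}^{2}$; and, expanding $\dot p'_n=-\tr([\Pi_n,H][\tilde X,\sigma])$ into four traces and arranging each so that $\Pi_n$ is paired against a $\sqrt\sigma$, a bound $|\dot p'_n|\le\sqrt{p_n}\,m_n$ with $\sum_n m_n^{2}\lesssim\|H\|^{2}\|X\|_{\mathrm{HS}}^{2}$. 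Feeding these into \eqref{eq:prop-dphi} cancels the surplus powers of $p_n$ in the denominators, and a final Cauchy--Schwarz over $n$ collects the numerical constants. The particle-loss variant would be handled identically after replacing $H$ by $H_{N-k}$ (the partial trace only contracts the relevant norms).

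The step I expect to carry the whole difficulty is the control of $p'_n\dot p_n/(2p_n^{3/2})$ inside $\dot r_n$ (and, symmetrically, of $\dot p'_n/\sqrt{p_n}$): the crude product of $|p'_n|\le2\sqrt{p_n}h_n$ with $|\dot p_n|\le2\sqrt{p_n}g_n$ leaves $|p'_n\dot p_n|/p_n^{3/2}\le4h_ng_n/\sqrt{p_n}$, an \emph{uncompensated} factor $1/\sqrt{p_n}$ which is genuinely dangerous near configurations where some outcome probability $p_n$ is tiny while $h_n,g_n$ are not---e.g.\ for rank-deficient POVM elements such as the Dicke projectors of \secref{sec:phys}. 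The route I would try in order to close the gap with a state-independent constant is to exploit that $p'_n$, $\dot p_n$ and $\dot p'_n$ are all imaginary parts of amplitudes sharing the common factor $\sqrt{\Pi_n}\sqrt\sigma$, so that the would-be singular parts of $\partial_\param\sqrt{p_n}$ and $\partial_\phi\sqrt{p_n}$ recombine in $\dot r_n=2\,\partial_\phi\partial_\param\sqrt{p_n}$ with the $1/\sqrt{p_n}$'s cancelling, and then to bound the remainder through $\sum_n\Pi_n=\1$ together with $\FI\le4\|H\|^{2}$; it is this cancellation, and the accounting of constants through it, that should fix $C=24\|H\|^{2}$. Absent a uniform cancellation, the natural fallback would be to let $C$ depend mildly on the spectrum of $\rho_0$, or to run the concentration argument of \secref{sec:phys} on a slightly depolarised ensemble, for which the cleaner estimate behind \lemref{improved Lipschitz} already applies.
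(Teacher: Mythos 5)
Your reduction via \remref{Lipremark} to bounding $\bigl|\partial_\phi|_{\phi=0}F_{\t{cl},\Omega,H}(e^{-\ii\phi X}U)\bigr|$ by $C\,\|X\|_{\mathrm{HS}}$ is exactly the paper's starting point, and your expression for the derivative matches \eqnref{eq:uglyderr}. But the argument is not closed: everything hinges on the estimate $\sum_n\dot r_n^{2}\lesssim\|H\|^{2}\|X\|_{\mathrm{HS}}^{2}$, and your Cauchy--Schwarz workhorse only delivers bounds of the form $|p'_n|\le2\sqrt{p_n}\,h_n$ and $|\dot p_n|\le2\sqrt{p_n}\,g_n$, which --- as you say yourself --- leave an uncompensated $1/\sqrt{p_n}$ in $p'_n\dot p_n/p_n^{3/2}$ and in $\dot p'_n/\sqrt{p_n}$. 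The cancellation you hope for in $\dot r_n=2\,\partial_\phi\partial_\param\sqrt{p_n}$ is asserted, not demonstrated, and your fallbacks (a spectrum-dependent constant, or a depolarised ensemble) would prove a weaker statement than the lemma. As written, this is a strategy with its central estimate missing.

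The paper closes that gap by a different device: instead of $\sqrt{p_n}$-type control it invokes the pointwise bounds \eqref{eq:posineq}, \eqref{diffbound1} and \eqref{diffbound}, namely $|\tr([H,\Pi_n]\rho)|\le2p_n\|H\|$, $|\tr([X,\Pi_n]\rho)|\le2p_n\|X\|$ and $|\tr([H,\Pi_n][\ii X,\rho])|\le4p_n\|H\|\|X\|$ with $p_n=\tr(\Pi_n\rho)$. Being \emph{linear in $p_n$} rather than in $\sqrt{p_n}$, these make every denominator in \eqnref{eq:uglyderr} cancel term by term, and $\sum_np_n=1$ yields the state-independent constant $2\times8+8=24$ times $\|H\|^{2}\|X\|$; no Cauchy--Schwarz over $n$ and no use of $\FI\le4\|H\|^{2}$ is needed. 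Be aware, though, that your worry about small $p_n$ points at precisely the delicate spot of that route as well: the paper derives its $p_n$-linear bound from iterated Cauchy--Schwarz together with the operator inequality $\Pi_n\le\Pi_n^{2}$, yet for a rank-one projector $\Pi_n=\proj{e}$ and a pure $\rho=\proj{\psi}$ one has $|\tr(H\Pi_n\rho)|=|\braket{e}{\psi}|\,|\bra{\psi}H\ket{e}|$, which exceeds $\tr(\Pi_n\rho)\|H\|=|\braket{e}{\psi}|^{2}\|H\|$ whenever $\ket{\psi}$ is nearly orthogonal to $\ket{e}$ but not to $H\ket{e}$. So if you plan to complete your proof by importing those bounds, verify them in the regime of vanishing $p_n$ first; this is the step on which the uniform constant actually stands or falls.
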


\begin{proof}
The strategy of the proof is analogous to the one presented in the other lemmas in this section. The idea s to find a bound for
\begin{equation}
\left|\left.\frac{d}{d\phi}\right|_{\phi=0}F_{\t{cl},\Omega,H}\left(\exp\left(-\ii \phi X\right)U\right)\right| \ ,
\end{equation} in terms of the Hilbert-Schmidt norm of $X$.  Let us first assume that at $U\in\mathrm{SU}\left(\mathcal{H}\right)$ for all $n$
\begin{equation}\label{nonzeroden}
\tr\left(\Pi_n \rho(\param)\right)\neq0\ .
\end{equation}
Under the above condition we have
\begin{equation}\label{eq:uglyderr}
 \begin{split}
\left.\frac{d}{d\phi}\right|_{\phi=0}F_{\t{cl},\Omega,H}\left(\exp\left(-\ii \phi X\right)U\right) &=
\sum_n \frac{\tr\left\{\left[H,\Pi_n\right] \rho_U\left(\param\right) \right\} \tr\left\{\left[H,\Pi_n\right] \left[\ii X,\rho_U\left(\param\right)\right] \right\}}{\tr\left(\Pi_n \rho_U\left(\param\right)\right) } \\
&+ \sum_n \frac{\tr\left\{\left[H,\Pi_n\right] \rho_U\left(\param\right) \right\}^2 \tr\left\{\left[\ii H,\Pi_n\right] \rho_U\left(\param\right) \right\}}{\tr\left(\Pi_n \rho_U\left(\param\right)\right)^2} \ .
 \end{split}
\end{equation}
where $\rho_U\left(\param\right)=\exp\left(-\ii\param H\right) U \rho_0 U^\dagger \exp\left(\ii\param H\right)$. Let us introduce the auxiliary notation
\begin{align}\label{eq:split}
A&=\sum_n \left|\frac{\tr\left\{\left[H,\Pi_n\right] \rho_U\left(\param\right) \right\} \tr\left\{\left[H,\Pi_n\right] \left[\ii X,\rho_U\left(\param\right)\right] \right\}}{\tr\left(\Pi_n \rho_U\left(\param\right)\right) }\right| \ , \\
B&= \sum_n \left|\frac{\tr\left\{\left[H,\Pi_n\right] \rho_U\left(\param\right) \right\}^2 \tr\left\{\left[\ii H,\Pi_n\right] \rho_U\left(\param\right) \right\}}{\tr\left(\Pi_n \rho_U\left(\param\right)\right)^2}\right| \ .
\end{align}
Clearly, we have the inequality
\begin{equation}\label{simplesplit}
\left|\left.\frac{d}{d\phi}\right|_{\phi=0}F_{\t{cl},\Omega,H}\left(\exp\left(-\ii \phi X\right)U\right)\right|\leq A+B \ .
\end{equation}
In order to bound $A$ and $B$ (from above) we observe that for any state $\rho\in\mathcal{D}\left(\H\right)$ we have
\begin{align}
\left|\tr\left(\left[H,\Pi_n\right]\rho \right) \right|& \leq 2\tr\left(\Pi_n \rho\right) \left\|H\right\| \ ,\label{eq:posineq} \\
\left|\tr\left(\left[X,\Pi_n\right]\rho \right) \right|&\leq 2 \tr\left(\rho \Pi_n\right) \left\|X\right\| \ , \label{diffbound1}\\
\left|\tr\left(\{\left[H,\Pi_n\right] \left[\ii X,\rho\right]\right\}\right)\| &\leq 4 \tr\left(\rho \Pi_n\right) \left\|X\right\| \left\|H\right\| \ \label{diffbound} .
\end{align}
In order to prove \eqref{eq:posineq} we first upper bound $\left|\tr\left(H \Pi_n \rho \right) \right|$ ,
\begin{align}
\left|\tr\left(H \Pi_n \rho \right) \right|&=\left|\tr\left(H \sqrt{\Pi_n}\sqrt{\Pi_n} \sqrt{\rho}\sqrt{\rho} \right) \right| \\ \label{nonneg}
&\leq \sqrt{\tr\left(\rho H^2 \Pi_n\right)}\sqrt{\tr\left(\rho \Pi_n\right)} \\
&\leq \sqrt{\sqrt{\tr\left(\rho \Pi_{n}^2\right)}\sqrt{\tr\left(\rho H^4\right)}}\sqrt{\tr\left(\rho \Pi_n\right)} \label{eq:cs} \\
&\leq \tr\left(\rho \Pi_n\right) \left\|H \right\| \ . \label{last}
\end{align}
where in \eqref{nonneg} we have used the nonnegativity of operators $\Pi_n$ and $\rho$. In \eqref{eq:cs} we have repetitively used the Cauchy-Schwartz inequality, first for $P=\sqrt{\rho} H \sqrt{\Pi_n},Q=\sqrt{M}\sqrt{\rho}$ and then for $P=\sqrt{\rho}H^2,Q=M\sqrt{\rho}$. The final inequality \eqref{last} follows immediately form operator inequalities
\begin{equation}
H^4\leq \left\|H\right\|^4\1\ , \ \Pi_n\leq \Pi_{n}^2 \ .
\end{equation}
Using analogous reasoning it is possible to prove $\left|\tr\left(H \Pi_n \rho \right) \right|\leq tr\left(\rho \Pi_n\right) \left\|H \right\|$. This finishes the proof of \eqref{eq:posineq}. Using essentially the same methodology it is possible to prove the inequalities \eqref{diffbound} and \eqref{diffbound1}. By plugging  inequalities \eqref{eq:posineq}, \eqref{diffbound} and \eqref{diffbound1} into \eqnref{simplesplit} for $\rho=\rho_U \left(\param\right)$ and using the normalization condition
\begin{equation}
\sum_n \tr\left(\Pi_n \rho_U \left(\param\right)\right)=1 \ ,
\end{equation}
we obtain
\begin{align}\label{eq:laststep}
\left|\left.\frac{d}{d\phi}\right|_{\phi=0}F_{\t{cl},\Omega,H}\left(\exp\left(-\ii \phi X\right)U\right)\right| &\leq 24 \left\|H\right\|^2 \left\|X\right\| \\
&\leq 24 \left\|H\right\|^2 \left\|X\right\|_\mathrm{HS} \ .
\end{align}
By the virtue of \remref{Lipremark} we conclude that the Lipschitz constant of $F_{\t{cl},\Omega,H}\left(U\right)$ is upper bounded by $24 \left\|H\right\|$. The above derivation explicitly used the assumption \eqref{nonzeroden} which translates to assuming that denominators appearing in the definition of classical Ficher information do not vanish. However, with the help of inequalities \eqref{eq:posineq}, \eqref{diffbound} and \eqref{diffbound1}, one can easily prove that the possible singularities coming form zeros of some denominators are actually removable and that $F_{\t{cl},\Omega,H}\left(U\right)$ is actually a differentiable function of $U$. Consequently, inequality \eqref{eq:laststep} is acually satisfied for $U\in\mathrm{SU}\left(\H\right)$ for which conditions \eqref{nonzeroden} are not satisfied.
\end{proof}

\section{Lower bounds on the QFI} 
\label{sec:lowerbound}
%
\begin{lem}\label{lower bound fisch}
Let $\rho_\param$ be a one parameter family of states states on aa Hilbert space $\H$. Then, The following lower bound for QFI holds (see also \cite{Jarzyna2016})
\begin{equation}\label{eq:Mahou0}
F\left(\rho_\param,\dot{\rho}_\param\right)\geq \left\|\dot{\rho}\right\|_1^2.
\end{equation}
In particular for $\rho_\param=\exp\left(-iH\param\right)\rho\ \exp\left(iH\param\right)$ we have 
\begin{equation}\label{eq:Mahou}
F\left(\rho,H\right)\geq\|[H,\rho]\|^2_1.
\end{equation}
Recall that right hand side of \eqref{eq:Mahou}, $\|[H,\rho]\|^2_1$, equals the measure of asymmetry introduced in \cite{Marvian2014}.
\end{lem}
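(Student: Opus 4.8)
The plan is to derive the bound from the symmetric logarithmic derivative (SLD) representation of the QFI. Recall that for a smooth family $\rho_\param$ one has $F(\rho_\param,\dot\rho_\param)=\tr(\rho_\param L_\param^2)$, where the Hermitian operator $L_\param$ is defined (on the support of $\rho_\param$) through $\dot\rho_\param=\tfrac12(L_\param\rho_\param+\rho_\param L_\param)$. The first step would be to bound the trace norm of $\dot\rho_\param$ by that of $L_\param\rho_\param$: since both $\rho_\param$ and $L_\param$ are Hermitian we have $(L_\param\rho_\param)^\dagger=\rho_\param L_\param$, so $\|L_\param\rho_\param\|_1=\|\rho_\param L_\param\|_1$, and the triangle inequality gives $\|\dot\rho_\param\|_1\le\tfrac12\big(\|L_\param\rho_\param\|_1+\|\rho_\param L_\param\|_1\big)=\|L_\param\rho_\param\|_1$.

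The second step would be to factor $L_\param\rho_\param=(L_\param\rho_\param^{1/2})\,\rho_\param^{1/2}$ and invoke the H\"older (Cauchy-Schwarz) inequality for Schatten norms, $\|AB\|_1\le\|A\|_{\mathrm{HS}}\|B\|_{\mathrm{HS}}$. This yields $\|L_\param\rho_\param\|_1\le\|L_\param\rho_\param^{1/2}\|_{\mathrm{HS}}\,\|\rho_\param^{1/2}\|_{\mathrm{HS}}$, and a direct computation gives $\|L_\param\rho_\param^{1/2}\|_{\mathrm{HS}}^2=\tr(\rho_\param L_\param^2)=F(\rho_\param,\dot\rho_\param)$ and $\|\rho_\param^{1/2}\|_{\mathrm{HS}}^2=\tr\rho_\param=1$. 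Chaining the two steps gives $\|\dot\rho_\param\|_1\le\sqrt{F(\rho_\param,\dot\rho_\param)}$, and squaring proves \eqref{eq:Mahou0}. For \eqref{eq:Mahou} I would then specialize to $\rho_\param=\e^{-\ii H\param}\rho\,\e^{\ii H\param}$, whose derivative is $\dot\rho_\param=-\ii[H,\rho_\param]$; as the QFI of a unitarily encoded family does not depend on $\param$ and the scalar $-\ii$ does not affect the trace norm, this gives $F(\rho,H)\ge\|[H,\rho]\|_1^2$.

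The one point that needs a little care is the definition of $L_\param$ when $\rho_\param$ is rank-deficient, and I expect this to be the main (mild) obstacle. I would handle it by using instead the variational characterization $F(\rho_\param,\dot\rho_\param)=\sup_{A=A^\dagger}\big[2\tr(\dot\rho_\param A)-\tr(\rho_\param A^2)\big]$ — only the inequality $F(\rho_\param,\dot\rho_\param)\ge 2\tr(\dot\rho_\param A)-\tr(\rho_\param A^2)$ is needed, and it follows from $\tr\big(\rho_\param(L_\param-A)^2\big)\ge0$. Choosing $A=t\,\mathrm{sgn}(\dot\rho_\param)$ for $t>0$, where $\mathrm{sgn}(\dot\rho_\param)$ denotes the Hermitian sign operator of the Hermitian matrix $\dot\rho_\param$, one has $\tr\big(\dot\rho_\param\,\mathrm{sgn}(\dot\rho_\param)\big)=\|\dot\rho_\param\|_1$ and $\tr\big(\rho_\param\,\mathrm{sgn}(\dot\rho_\param)^2\big)\le\tr\rho_\param=1$, so $F(\rho_\param,\dot\rho_\param)\ge 2t\|\dot\rho_\param\|_1-t^2$; optimizing at $t=\|\dot\rho_\param\|_1$ again gives $F(\rho_\param,\dot\rho_\param)\ge\|\dot\rho_\param\|_1^2$, now with no rank assumption.
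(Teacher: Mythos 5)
Your proof is correct, but it follows a genuinely different route from the paper's. The paper works directly from its geometric definition of the QFI, \eqnref{eq:QFI_geom}: it expands the Bures distance to first order, $\dB(\rho_\param,\rho_{\param+\delta\param})=\tfrac12\sqrt{F}\,|\delta\param|+O(\delta\param^2)$, invokes the Fuchs--van de Graaf inequality $\|\rho-\sigma\|_1\le 2\sqrt{1-\mathcal F(\rho,\sigma)^2}\le 2\dB(\rho,\sigma)$, and divides by $|\delta\param|$ before taking the limit, so that the trace norm of $\dot\rho$ is controlled by the local speed in Bures metric. You instead work from the symmetric-logarithmic-derivative representation $F=\tr(\rho L^2)$, $\dot\rho=\tfrac12(L\rho+\rho L)$, and obtain the bound from the triangle inequality plus the Schatten H\"older estimate $\|L\rho^{1/2}\cdot\rho^{1/2}\|_1\le\|L\rho^{1/2}\|_{\mathrm{HS}}\|\rho^{1/2}\|_{\mathrm{HS}}$; your fallback via the variational characterization $F\ge 2\tr(\dot\rho A)-\tr(\rho A^2)$ with $A=t\,\mathrm{sgn}(\dot\rho)$ is also sound and neatly sidesteps the rank-deficiency issue (and in fact recovers the same constant). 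The paper's argument is shorter given that it has already adopted the Bures-distance definition of the QFI and needs the continuity machinery of \refcite{Augusiak2016} elsewhere anyway; your argument is more algebraically self-contained, avoids Fuchs--van de Graaf entirely, and makes the saturation for pure states transparent (equality in H\"older when $L\rho^{1/2}\propto$ a rank-one object). The only implicit step on your side is the equivalence of the SLD-based expression $\tr(\rho L^2)$ with the definition \eqnref{eq:QFI_geom}/\eqnref{eq:QFI_formula} used in the paper, which is standard but worth a one-line remark.
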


\begin{proof}
We give the proof only in the Hamiltonian case \eqref{eq:Mahou}. The proof of the general case is analogous. Recall that the quantum Fisher information is related to the Bures distance $\dB\left(\rho,\sigma\right)=\sqrt{2[1-\mathcal{F}\left(\rho,\sigma\right)]}$ through
\begin{equation}\label{ineq1}
\dB\left(\rho_{\param},\rho_{\param+\delta\param}\right)=\frac{1}{2}\sqrt{F\left(\rho_{\param},H\right)}|\delta\param|+O(\delta\param^{2}) .
\end{equation}
At the same time, from the
Fuchs-van der Graaf inequalities \cite{Fuchs1999} we know that
\begin{equation}\label{ineq2}
\left\| \rho-\sigma\right\| _{1} \leq 2\sqrt{1-\mathcal{F}\left(\rho,\sigma\right)^{2}} \leq 2\sqrt{2[1-\mathcal{F}\left(\rho,\sigma\right)]}=2d_B(\rho,\sigma),
\end{equation}
with the second inequality stemming from that fact that $\mathcal{F}\left(\rho,\sigma\right)\leq 1$. By combining \eqnsref{ineq1}{ineq2}
we obtain
\begin{equation}
 \left\| \rho_{\param}-\rho_{\param+\delta\param}\right\| _{1}\le\sqrt{F\left(\rho_{\param},H\right)}|\delta\param|+O(\delta\param^{2}).
\end{equation}
%
%
Diving it by $|\delta\param|$ and taking then the limit $\delta\param\to 0$, we get
\begin{equation}
\underset{\delta\param\to0}{\lim} \frac{1}{|\delta\param|}\left\| \rho_{\param}-\rho_{\param+\delta\param}\right\| _{1}\le\sqrt{F\left(\rho_{\param},H\right)},
\end{equation}
which, by virtue of the fact that
\begin{equation}
\underset{\delta\param\to0}{\lim} \frac{1}{|\delta\param|}\left\| \rho_{\param}-\rho_{\param+\delta\param}\right\| _{1}=\left\|\dot{\rho}\right\|_1\end{equation}
directly leads us to \eqnref{eq:Mahou}.
\end{proof}

\begin{rem}
Using the standard inequality between trace and Hilbert-Schmidt norms we obtain the weaker versions of inequalities \eqref{eq:Mahou0} and \eqref{eq:Mahou}
\begin{equation}\label{eq:Mahou2}
F\left(\rho_\param,\dot{\rho}_\param\right)\geq \left\|\dot{\rho}\right\|_{\mathrm{HS}}^2 ,
\end{equation}
\begin{equation}\label{eq:Mahou3}
F\left(\rho,H\right)\geq \left\|[H,\rho]\right\|_{\mathrm{HS}}^2.
\end{equation}
\end{rem}

\begin{rem}
For pure states $F\left(\rho,H\right)=\|[H,\rho]\|^2_1$.
\end{rem}

\begin{proof}
Let us prove that for pure states, the right-hand side of \eqnref{eq:Mahou} is simply the QFI of $\rho$.
To this end, let us denote $H\ket{\psi}=\ket{\widetilde{\param}}$ and then
\begin{equation}
\|[H,\rho]\|^2_1=\|\ket{\widetilde{\param}}\bra{\psi}-\ket{\psi}\bra{\widetilde{\param}}\|_1^2=
\langle\psi|H^2|\psi\rangle\|\ket{\param}\bra{\psi}-\ket{\psi}\bra{\param}\|_1^2\ ,
\end{equation}
where $\ket{\param}=\ket{\widetilde{\param}}/\sqrt{\langle\psi|H^2|\psi\rangle}$. The matrix
under the trace norm is manifestly anti-Hermitian and of rank two, and so it is straightforward to compute its norm. To do this let us write $\ket{\param}=\alpha\ket{\psi}+\beta\ket{\psi^{\perp}}$, where $|\alpha|^2+|\beta|^2=1$ and
$\ket{\psi^{\perp}}$ is some normalized vector orthogonal to $\ket{\psi}$.
It also follows that $\alpha\in\mathbbm{R}$ because $\alpha=\langle\psi|\param\rangle=\langle\psi|H|\psi\rangle/\sqrt{\langle\psi|H^2|\psi\rangle}$. All this implies that
\begin{equation}\label{Alhambra}
\ket{\param}\bra{\psi}-\ket{\psi}\bra{\param}=\beta\ket{\psi^{\perp}}\bra{\psi}-\beta^{*}\ket{\psi}\bra{\psi^{\perp}},
\end{equation}
and consequently, the eigenvalues of the above matrix are $\pm\ii|\beta|$. Thus, its trace norm amounts to $2|\beta|$, giving
\begin{align}\label{SanMiguel}
\|[H,\rho]\|^2_1&=4\langle\psi|H^2|\psi\rangle|\beta|^2=4\langle\psi|H^2|\psi\rangle(1-\alpha^2) \\
&=4\langle\psi|H^2|\psi\rangle\left(1-\frac{\langle\psi|H|\psi\rangle^2}{\langle\psi|H^2|\psi\rangle}\right)\\
&=4(\langle\psi|H^2|\psi\rangle-\langle\psi|H|\psi\rangle^2)=F\left(\rho,H\right).
\end{align}
\end{proof}

\section{Averages and bounds on averages of QFI and FI on relevant statistical ensambles}
\label{sec:averages}
In this section we compute and/or bound averages of the FI or QFI on the ensembles of mixed quantum states appearing in the main text.

\subsection{Averages of QFI on ensembles of isospectral density matrices}
We will extensively use the following result concerning the integration on the special unitary group.

\begin{fact}{(Integration of quadratic function on the unitary group)}
\label{integration}
Letting $V\in\mathrm{Herm}\left(\H\otimes\H\right)$, the following equality holds \cite{Popescu2006},

\begin{equation}
\int_{\SU\left(\H\right)}d\mu\left(U\right)\,U^{\otimes2}V\left(U^{\dagger}\right)^{\otimes2}=\alpha\mathbb{P}^{\mathrm{sym}}+\beta\mathbb{P}^{\mathrm{asym}}\ ,\label{eq:integration formula}
\end{equation}
where $\mathbb{P}^{\mathrm{sym}}$ and $\mathbb{P}^{\mathrm{asym}}$ are projectors onto the symmetric and antisymmetric subspaces of $\H\otimes\H$ and can be expressed as
\begin{equation}
\mathbb{P}^{\mathrm{sym}}=\frac{1}{2}(\1\otimes\1+\mathbb{S}),\qquad
\mathbb{P}^{\mathrm{asym}}=\frac{1}{2}(\1\otimes\1-\mathbb{S})
\end{equation}
with $\mathbb{S}$ being the co-called swap operator satisfying $\mathbb{S}\ket{x}\ket{y}=\ket{y}\ket{x}$
for any pair $\ket{x},\ket{y}\in\H$. Finally, the real coefficients $\alpha$ and $\beta$ are given by
\begin{equation}
\alpha=\frac{1}{D_+}\mathrm{tr}\left(\mathbb{P}^{\mathrm{sym}}V\right),\qquad\beta=\frac{1}{D_-}\mathrm{tr}\left(\mathbb{P}^{a\mathrm{sym}}V\right),
\end{equation}
where $D_{\pm}=D(D\pm1)/2$ with $D=|\H|$.
\end{fact}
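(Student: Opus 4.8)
The plan is to recognize the left-hand side as the image of $V$ under the second-order twirl map $\mathcal{T}(V) \coloneqq \int_{\SU(\H)} d\mu(U)\, U^{\otimes 2} V (U^\dagger)^{\otimes 2}$ and to determine it via Schur--Weyl duality. First I would note that since the overall phase of $U$ cancels in the conjugation $U^{\otimes 2} V (U^\dagger)^{\otimes 2}$, the average over $\SU(\H)$ agrees with the average over the full unitary group $\mathrm{U}(\H)$; this allows the use of standard representation-theoretic facts for $\mathrm{U}(\H)$. By left- and right-invariance of the Haar measure, $W^{\otimes 2}\,\mathcal{T}(V)\,(W^\dagger)^{\otimes 2} = \mathcal{T}(V)$ for every unitary $W$, i.e., $\mathcal{T}(V)$ lies in the commutant of the representation $W \mapsto W^{\otimes 2}$ of $\mathrm{U}(\H)$ on $\H \otimes \H$.

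Next I would invoke Schur--Weyl duality for $n=2$: the representation $W \mapsto W^{\otimes 2}$ decomposes into the two inequivalent irreducible subrepresentations carried by the symmetric subspace $\mathrm{Sym}^2(\H)$ and the antisymmetric subspace $\wedge^2(\H)$. By Schur's lemma its commutant is therefore two-dimensional and spanned by the orthogonal projectors $\mathbb{P}^{\mathrm{sym}}$ and $\mathbb{P}^{\mathrm{asym}}$ onto these subspaces --- equivalently by $\1 \otimes \1$ and the swap operator $\mathbb{S}$, through $\mathbb{P}^{\mathrm{sym}} = (\1\otimes\1 + \mathbb{S})/2$ and $\mathbb{P}^{\mathrm{asym}} = (\1\otimes\1 - \mathbb{S})/2$. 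Hence there are scalars $\alpha,\beta$ with $\mathcal{T}(V) = \alpha\,\mathbb{P}^{\mathrm{sym}} + \beta\,\mathbb{P}^{\mathrm{asym}}$, which is exactly the claimed structural form.

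Finally, to pin down the coefficients I would take the Hilbert-Schmidt inner product of both sides with $\mathbb{P}^{\mathrm{sym}}$ and with $\mathbb{P}^{\mathrm{asym}}$. Using orthogonality of the two projectors together with $\tr\,\mathbb{P}^{\mathrm{sym}} = D_+$ and $\tr\,\mathbb{P}^{\mathrm{asym}} = D_-$, this gives $\tr(\mathbb{P}^{\mathrm{sym}}\,\mathcal{T}(V)) = \alpha D_+$ and $\tr(\mathbb{P}^{\mathrm{asym}}\,\mathcal{T}(V)) = \beta D_-$. On the other hand, pulling the Haar average inside the trace and using cyclicity together with the fact that $\mathbb{P}^{\mathrm{sym}}$ commutes with $U^{\otimes 2}$, one obtains $\tr(\mathbb{P}^{\mathrm{sym}}\,\mathcal{T}(V)) = \int d\mu(U)\, \tr\!\big((U^\dagger)^{\otimes 2}\mathbb{P}^{\mathrm{sym}}U^{\otimes 2} V\big) = \tr(\mathbb{P}^{\mathrm{sym}} V)$, and analogously $\tr(\mathbb{P}^{\mathrm{asym}}\,\mathcal{T}(V)) = \tr(\mathbb{P}^{\mathrm{asym}} V)$. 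Equating the two expressions yields $\alpha = \tr(\mathbb{P}^{\mathrm{sym}} V)/D_+$ and $\beta = \tr(\mathbb{P}^{\mathrm{asym}} V)/D_-$, as stated.

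The only genuinely nontrivial ingredient --- and the step I would be careful about --- is the invocation of Schur--Weyl duality: one must check that the two summands $\mathrm{Sym}^2(\H)$ and $\wedge^2(\H)$ are inequivalent irreducible representations of $\mathrm{U}(\H)$ (so that the commutant is exactly two-dimensional rather than larger), and that passing from $\SU(\H)$ to $\mathrm{U}(\H)$ changes neither the twirl nor the commutant. Everything after that is routine bookkeeping with traces of the swap and the projectors.
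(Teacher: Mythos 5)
Your proof is correct and complete. Note, however, that the paper does not prove this statement at all --- it is stated as a Fact with a citation to \refcite{Popescu2006} --- so there is no in-paper argument to compare against; the Schur--Weyl/twirling derivation you give is the standard one, and every step checks out: the reduction from $\mathrm{U}(\H)$ to $\SU(\H)$ is legitimate because the adjoint action $U\mapsto U^{\otimes 2}(\cdot)(U^\dagger)^{\otimes 2}$ kills the overall phase, the two summands $\mathrm{Sym}^2(\H)$ and $\bigwedge^2(\H)$ are inequivalent irreducibles for $D\geq 2$ (they have different dimensions $D_+\neq D_-$), so the commutant is exactly two-dimensional, and the trace pairing against the orthogonal projectors fixes $\alpha$ and $\beta$ as claimed. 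The only implicit assumption worth flagging is $D\geq 2$, since for $D=1$ the antisymmetric subspace vanishes and $D_-=0$.
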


We first consider the case in which both the Hilbert space $\H$ and the Hamiltonian $H$ are fully general.

\begin{lem}
\label{main result integration}
Let $F_{\Omega,H}$ be defined as in \eqnref{eq:auxiliary Fisher}. then, the following equality holds
\begin{equation}
\Expect_{U \sim \mu\left(\mathcal{\mathcal{H}}\right)}F_{\Omega,H}\left(U\right)=\frac{2\cdot\tr\left(H^{2}\right)}{D^{2}-1}\sum_{i,j:\,p_{i}+p_{j}\neq0}\frac{\left(p_{i}-p_{j}\right)^{2}}{p_{i}+p_{j}}\ .\label{eq:compact integration}
\end{equation}
\end{lem}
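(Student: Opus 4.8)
The plan is to reduce the computation to a single Haar integral of a quadratic expression and then invoke \factref{integration}. First I would use that the eigenvectors of $U\rho_0U^\dagger$ are $U\ket{e_i}$, where $\ket{e_i}$ are the fixed eigenvectors of $\rho_0$ with unchanged eigenvalues $p_i$, so that the eigenbasis formula \eqref{eq:QFI_formulaman} becomes
\begin{equation}
F_{\Omega,H}(U)=2\sum_{i,j:\,p_i+p_j\neq0}\frac{(p_i-p_j)^2}{p_i+p_j}\left|\bra{e_i}U^\dagger H U\ket{e_j}\right|^2 .
\end{equation}
Since this is a finite sum of uniformly bounded terms, I may interchange it with $\Expect_{U\sim\mu(\H)}$, so the task reduces to computing $\Expect_{U\sim\mu(\H)}\left|\bra{e_i}U^\dagger H U\ket{e_j}\right|^2$ for a fixed orthonormal pair $\ket{e_i},\ket{e_j}$.

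For this I would write, with $A\coloneqq U^\dagger H U$ (Hermitian),
\begin{equation}
\left|\bra{e_i}A\ket{e_j}\right|^2=\tr\!\left[(A\otimes A)\left(\kb{e_j}{e_i}\otimes\kb{e_i}{e_j}\right)\right],
\end{equation}
so that the average only acts on $A^{\otimes2}=(U^\dagger)^{\otimes2}H^{\otimes2}U^{\otimes2}$. By invariance of the Haar measure under $U\mapsto U^\dagger$ and \factref{integration} applied with $V=H^{\otimes2}$,
\begin{equation}
\Expect_{U\sim\mu(\H)}A^{\otimes2}=\alpha\,\mathbb{P}^{\mathrm{sym}}+\beta\,\mathbb{P}^{\mathrm{asym}},\qquad \alpha=\frac{\tr(\mathbb{P}^{\mathrm{sym}}H^{\otimes2})}{D_+},\quad \beta=\frac{\tr(\mathbb{P}^{\mathrm{asym}}H^{\otimes2})}{D_-}.
\end{equation}
Using $\mathbb{P}^{\mathrm{sym}/\mathrm{asym}}=\tfrac12(\1\otimes\1\pm\mathbb{S})$, the identity $\tr(\mathbb{S}H^{\otimes2})=\tr(H^2)$, and the standing convention $\tr H=0$, this yields $\alpha=\tr(H^2)/[D(D+1)]$ and $\beta=-\tr(H^2)/[D(D-1)]$.

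Finally I would evaluate the overlap with $\kb{e_j}{e_i}\otimes\kb{e_i}{e_j}$. Only the terms with $i\neq j$ matter (the $i=j$ terms carry the factor $(p_i-p_j)^2=0$), and orthogonality of $\ket{e_i},\ket{e_j}$ then gives $\tr[\mathbb{P}^{\mathrm{sym}}(\kb{e_j}{e_i}\otimes\kb{e_i}{e_j})]=\tfrac12$ and $\tr[\mathbb{P}^{\mathrm{asym}}(\cdots)]=-\tfrac12$, so that for $i\neq j$
\begin{equation}
\Expect_{U\sim\mu(\H)}\left|\bra{e_i}U^\dagger H U\ket{e_j}\right|^2=\frac{\alpha-\beta}{2}=\frac{\tr(H^2)}{D^2-1}.
\end{equation}
Substituting this back, and harmlessly re-including the vanishing $i=j$ terms, gives exactly \eqref{eq:compact integration}. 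I do not expect a genuine obstacle here: this is a standard second-moment (Weingarten-type) integration, and the only points needing a little care are the bookkeeping of the $p_i+p_j=0$ and $i=j$ terms and the use of $\tr H=0$ when simplifying $\tr(\mathbb{P}^{\mathrm{sym}/\mathrm{asym}}H^{\otimes2})$ — the latter being legitimate because any nonzero trace of $H$ contributes only an irrelevant global phase and hence does not affect the QFI.
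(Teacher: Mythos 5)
Your proposal is correct and follows essentially the same route as the paper's proof: reduce to the second-moment Haar integral of $\left|\bra{e_i}U^\dagger H U\ket{e_j}\right|^2$, apply \factref{integration} with $V=H\otimes H$, and evaluate the overlaps with $\kb{e_j}{e_i}\otimes\kb{e_i}{e_j}$ using $\tr H=0$. The bookkeeping of the $i=j$ and $p_i+p_j=0$ terms and the final constant $\tfrac{\alpha-\beta}{2}=\tr(H^2)/(D^2-1)$ all check out.
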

\begin{proof}
We have the following sequence of equalities
\begin{align}
\Expect_{U \sim \mu\left(\mathcal{\mathcal{H}}\right)}F_{\Omega,H}\left(U\right) & =2\cdot\sum_{i,j:\,p_{i}+p_{j}\neq0}\frac{\left(p_{i}-p_{j}\right)^{2}}{p_{i}+p_{j}}\left(\int_{\SU\left(\H\right)}\mathrm{d}\mu\left(U\right)\left|\braket{e_{i}}{U^{\dagger}HU\left|e_{j}\right.}\right|^{2}\right) \\
 & =2\cdot\sum_{i,j:\,p_{i}+p_{j}\neq0}\frac{\left(p_{i}-p_{j}\right)^{2}}{p_{i}+p_{j}}\left(\int_{\mathrm{SU}\left(\H\right)}\mathrm{d}\mu\left(U\right)\tr\left[U\otimes U\,H\otimes H\,U^{\dagger}\otimes U^{\dagger}\,\ketbra{e_{i}}{e_{j}}\otimes\ketbra{e_{j}}{e_{i}}\right]\right) \\
 & =\sum_{i,j:\,p_{i}+p_{j}\neq0}\frac{\left(p_{i}-p_{j}\right)^{2}}{p_{i}+p_{j}}\tr\left\{\left[(\alpha_H+\beta_H)\1\otimes\1+(\alpha_H-\beta_H)\mathbb{S}
 \right]\,\ketbra{e_{i}}{e_{j}}\otimes\ketbra{e_{j}}{e_{i}}\right\},\label{eq:semi final}
\end{align}
where the third equality follows from \factref{integration} for $V=H\otimes H$, and
the real numbers $\alpha_H$ and $\beta_H$ are given by
\begin{equation}
\alpha_{H}=\frac{1}{D_+}\tr\left(H\otimes H\,\mathbb{P}^{\mathrm{sym}}\right)=\frac{1}{2D_+}\tr\left(H^{2}\right),\qquad\beta_{H}=\frac{1}{D_-}\tr\left(H\otimes H\,\mathbb{P}^{a\mathrm{sym}}\right)=-\frac{1}{2D_-}\tr\left(H^{2}\right).\label{eq:constants}
\end{equation}
To obtain \eqref{eq:constants} we also used the fact that $\tr\left(H\right)=0$.
Inserting then \eqnref{eq:constants} to \eqnref{eq:semi final} and using
the identities
\begin{equation}
\mathrm{tr}\left(\ketbra{e_{i}}{e_{j}}\otimes\ketbra{e_{j}}{e_{i}}\right)=0,\qquad\mathrm{tr}\left(\ketbra{e_{i}}{e_{j}}\otimes\ketbra{e_{j}}{e_{i}}\mathbb{S}\right)=1,
\end{equation}
one arrives at
\begin{equation}
\Expect_{U \sim \mu\left(\mathcal{\mathcal{H}}\right)}F_{\Omega,H}\left(U\right)=\frac{\tr\left(H^{2}\right)}{2}\left(\frac{1}{D_{+}}+\frac{1}{D_{-}}\right)\sum_{i,j: p_{i}+p_{j}\neq0}\frac{\left(p_{i}-p_{j}\right)^{2}}{p_{i}+p_{j}},
\end{equation}
which, by virtue of the definitions of $D_{\pm}$, leads us to \eqnref{eq:compact integration}.
\end{proof}
The formula \eqref{eq:compact integration} simplifies significantly for the case of pure states.
\begin{rem}
Let $\Omega_0$ consist of pure states on $\H$. In this case
we it fairly easy to see that
\begin{equation}
\sum_{i,j: p_{i}+p_{j}\neq0}\frac{\left(p_{i}-p_{j}\right)^{2}}{p_{i}+p_{j}}=2(D-1),
\end{equation}
and consequently,
\begin{equation}
\Expect_{U \sim \mu\left(\mathcal{\mathcal{H}}\right)}F_{\Omega_0,H}\left(U\right)=\frac{4\cdot\mathrm{tr}\left(H^{2}\right)}{D+1}\,.\label{eq: Pure states average}
\end{equation}

\end{rem}
By comparing \eqnsref{eq: Pure states average}{eq:compact integration} one finds that
the average QFI over any ensemble $\Omega$ of isospectral states can be easily related to
the average QFI over pure states. Specifically, one has
\begin{equation}
\Expect_{U \sim \mu\left(\mathcal{\mathcal{H}}\right)}F_{\Omega,H}\left(U\right)=
\frac{4\cdot\mathrm{tr}\left(H^{2}\right)}{D+1} \cdot \Lambda(\{p_j\}_j)=
\Expect_{U \sim \mu\left(\mathcal{\mathcal{H}}\right)}F_{\Omega_0,H}\left(U\right)\Lambda_{\Omega},
\end{equation}
where
\begin{equation}\label{lambda}
\Lambda(\{p_j\}_j)=\frac{1}{2\left(D-1\right)}\sum_{i,j: p_{i}+p_{j}\neq0}\frac{\left(p_{i}-p_{j}\right)^{2}}{p_{i}+p_{j}}.
\end{equation}
Note that  $\Lambda(\{p_j\}_j)=1$ for pure states and $\Lambda(\{p_j\}_j)=0$ for the maximally mixed state. Since in general the dependance on the spectrum in the above formula is quite complicated it is desirable to have simple bounds on $\sum_{i,j: p_{i}+p_{j}\neq0}\frac{\left(p_{i}-p_{j}\right)^{2}}{p_{i}+p_{j}}$. The following fact provides one such bound:
\begin{fact}
\label{christian bound}
Let the numbers $p_{1},\ldots,p_{D}$ satisfy
$p_{i}\geq0$ and $\sum_{i=1}^{D}p_{i}=1$.
Then, the following inequality holds
\begin{equation}
\sum_{i,j: p_{i}+p_{j}\neq0}\frac{\left(p_{i}-p_{j}\right)^{2}}{p_{i}+p_{j}}\geq2\left[D-\left(\sum_{i=1}^{D}\sqrt{p_{i}}\right)^{2}\right]
=2D\left[1-\mathcal{F}^2(\rho,\tfrac{\mathbbm{1}}{D})\right].
\label{eq:important bound}
\end{equation}
\end{fact}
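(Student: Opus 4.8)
The plan is to reduce \eqnref{eq:important bound} to a termwise application of the arithmetic--geometric mean inequality. First I would invoke the elementary identity $(p_i-p_j)^2 = (p_i+p_j)^2 - 4\,p_i p_j$, which for every ordered pair $(i,j)$ with $p_i+p_j\neq 0$ gives
\begin{equation}
\frac{(p_i-p_j)^2}{p_i+p_j} = (p_i+p_j) - \frac{4\,p_i p_j}{p_i+p_j}\,.
\end{equation}
Summing over all such pairs splits the left-hand side of \eqnref{eq:important bound} into a linear part $\sum_{i,j:\,p_i+p_j\neq 0}(p_i+p_j)$ and a correction $\sum_{i,j:\,p_i+p_j\neq 0}\tfrac{4\,p_i p_j}{p_i+p_j}$.

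For the linear part I would note that the ordered pairs excluded from the sum are exactly those with $p_i=p_j=0$, which contribute $0$ anyway, so the sum equals $\sum_{i,j=1}^{D}(p_i+p_j)=2D$ by the normalisation $\sum_i p_i=1$. For the correction, the pairs with $p_i p_j=0$ contribute $0$, so it equals $\sum_{i,j:\,p_i,p_j>0}\tfrac{4\,p_i p_j}{p_i+p_j}$; here AM--GM in the form $p_i+p_j\geq 2\sqrt{p_i p_j}$ yields $\tfrac{4\,p_i p_j}{p_i+p_j}\leq 2\sqrt{p_i p_j}$ for each summand. Re-inserting the (vanishing) $p_i p_j=0$ terms then bounds the correction by $2\sum_{i,j=1}^{D}\sqrt{p_i}\sqrt{p_j}=2\bigl(\sum_{i=1}^{D}\sqrt{p_i}\bigr)^{2}$. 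Combining the two parts gives $\sum_{i,j:\,p_i+p_j\neq 0}\tfrac{(p_i-p_j)^2}{p_i+p_j}\geq 2D-2\bigl(\sum_i\sqrt{p_i}\bigr)^{2}=2\bigl[D-(\sum_i\sqrt{p_i})^{2}\bigr]$, which is the first expression in \eqnref{eq:important bound}.

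To obtain the equivalent fidelity form I would simply compute $\mathcal{F}(\rho,\1/D)=\tr\sqrt{\rho^{1/2}(\1/D)\rho^{1/2}}=D^{-1/2}\tr\sqrt{\rho}=D^{-1/2}\sum_i\sqrt{p_i}$, so that $\bigl(\sum_i\sqrt{p_i}\bigr)^{2}=D\,\mathcal{F}^{2}(\rho,\1/D)$ and the bound becomes $2D\,[1-\mathcal{F}^{2}(\rho,\1/D)]$. I do not expect a genuine obstacle here: the argument is short, and the only point that needs care is the bookkeeping of which ordered pairs enter each sum---one must check at every step that the terms being added or removed (pairs with $p_i=p_j=0$, or with $p_i p_j=0$, and the harmless diagonal $i=j$) are exactly zero, so that none of the intermediate equalities nor the single inequality is affected.
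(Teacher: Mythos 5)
Your proof is correct and follows essentially the same route as the paper's: the identity $(p_i-p_j)^2=(p_i+p_j)^2-4p_ip_j$, the evaluation of the linear part as $2D$, the bound $\tfrac{4p_ip_j}{p_i+p_j}\leq 2\sqrt{p_ip_j}$ (your AM--GM step is just the harmonic--geometric mean inequality the paper invokes), and the same fidelity computation. Your bookkeeping of the excluded pairs is if anything slightly cleaner than the paper's, which tracks the diagonal terms separately.
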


\begin{proof}
First, by using the identity $(p_i-p_j)^2=(p_i+p_j)^2-4p_ip_j$ the left-hand side of the inequality \ref{eq:important bound} can be rewritten as
\begin{equation}
\sum_{i,j: p_{i}+p_{j}\neq0}\frac{\left(p_{i}-p_{j}\right)^{2}}{p_{i}+p_{j}} =\sum_{i,j: p_{i}+p_{j}\neq0}\left[\frac{\left(p_{i}+p_{j}\right)^{2}}{p_{i}+p_{j}}-\frac{4p_{i}p_{j}}{p_{i}+p_{j}}\right],
\end{equation}
which, noting that the first sum in the above amounts to $2(D-1)$, can be rewritten as
\begin{equation}\label{eq:second auxiliary}
\sum_{i,j: p_{i}+p_{j}\neq0}\frac{\left(p_{i}-p_{j}\right)^{2}}{p_{i}+p_{j}}
=2\left(\left(D-1\right)-\sum_{i,j: p_{i}+p_{j}\neq0}\frac{2p_{i}p_{j}}{p_{i}+p_{j}}+1\right).
\end{equation}
To obtain the inequality in \eqnref{eq:important bound} we apply
the following well-known relation between the harmonic and geometric means
\begin{equation}
\frac{2}{\frac{1}{p_{i}}+\frac{1}{p_{j}}}\leq\sqrt{p_{i}p_{j}}
\end{equation}
to \eqnref{eq:second auxiliary}. Then, to obtain the equality in
\eqref{eq:important bound} and complete the proof it suffices to
notice that
\begin{equation}
\mathcal{F}(\rho,\tfrac{\mathbbm{1}}{D})=\tr\sqrt{\sqrt{\tfrac{\mathbbm{1}}{D}}\rho\sqrt{\tfrac{\mathbbm{1}}{D}}}=\frac{1}{\sqrt{D}}\tr\sqrt{\rho}=\frac{1}{\sqrt{D}}
\sum_{i=1}^{D}\sqrt{p_i}.
\end{equation}
\end{proof}
\begin{rem}
Note that the bound \eqref{eq:important bound} is tight. To be more
precise, it is saturated for the maximally mixed state $\rho=\mathbb{I}/D$, for which both sides
of the inequality \eqref{eq:important bound} simply vanish, and for pure states for which
they amount to $2(D-1)$.
\end{rem}

%
%

\subsection{Averages of QFI for $N$ particles}
We now discuss the average behaviour of the QFI for ensembles consisting of states of distinguishable or bosonic particles (in the case when all particles evolve in the same manner under local Hamiltonian). For the case of $N$ distinguishable particles we have
\begin{equation}
\mathcal{H}=\mathcal{H}_N=\left(\mathbb{C}^{d}\right)^{\otimes N} \label{hilbertdist} \ ,
\end{equation}
where $\mathbb{C}^{d}$ is a Hilbert space of a single particle and
$N$ is a number of particles. Clearly, we have $D=|\mathcal{H}_N|=d^{N}$. The Hilbert space of $N$ bosons in $d$ modes is the completely symmetric subspace of $\mathcal{H}_N$,
\begin{equation}
\mathcal{H}=\Hsym_N=\mathrm{span}_\mathbb{C}\left\{\ket{\phi}^{\otimes N} | \ket{\phi}\in\mathbb{C}^d \right\} \label{hilbertsym} \ ,
\end{equation}
of dimension $D= |\Hsym_N|=\binom{N+d-1}{N}$. It will be convenient for us to use the orthonormal basis of $\Hsym_N$ consisting of generalized Dicke states \cite{Migdal2014} (we wil use them extensively also in the part of the Appendix, where we estimate the impact of particle losses on typical properties of QFI). Within the second quantization picture $\Hsym_N$ can be treated as a subspace of $d$ mode bosonic Fock space and the generalized Dicke states are of the form
\begin{equation}
\ket{\vec{k},N}=\frac{\prod_{i=1}^{d} \left(a_{i}^\dagger\right)^{k_i}}{\sqrt{\prod_{i=1}^{d} k_i! }} \ket{\Omega},
\end{equation}
where $\ket{\Omega}$ is the Fock vacuum, $a_{i}^\dagger$ are the standard creation operators and the vector $\vec{k}=\left(k_1,k_2,\ldots,k_d\right)$ consists of non-negative integers counting how many particles occupy each mode. Due to the fact that the number of particles is $N$, the vector $\vec{k}$ satisfies the normalization condition $|\vec{k}|\coloneqq\sum_{i=1}^{d}k_i=N$. Let us also notice that in the particle picture the Dicke states are given by
\begin{equation}\label{eq:particledicke}
\ket{\vec{k},N}=\mathcal{N}(\vec{k},N)\mathbb{P}_{\mathrm{sym}}^N\ket{\vec{k}},
\end{equation}
where $\ket{\vec{k}}$ is a vector from $(\mathbb{C}^d)^{\otimes N}$ given by
$\ket{\vec{k}}=\ket{1}^{\otimes k_1}\otimes \ket{2}^{\otimes k_2} \otimes \ldots \otimes \ket{d}^{\otimes k_d}$,
%
the constant $\mathcal{N}(\vec{k},N)$ is given by
\begin{equation}
\mathcal{N}(\vec{k},N)=\sqrt{\binom{N}{\vec{k}}}
\end{equation}
with
\begin{equation}
\binom{N}{\vec{k}}=\frac{N!}{\prod_{i=1}^{d} k_i!},
\end{equation}
and by $\mathbb{P}_{\mathrm{sym}}^N$ we denote
the orthonormal projector onto $\Hsym_N \subset \mathcal{H}_N$.

The Hamiltonian used in the phase estimation is local and symmetric under exchange of particles,
\begin{equation}
H=H_{N}=h\otimes\mathbb{I}\otimes\ldots\otimes\mathbb{I}+\mathbb{I}\otimes h\otimes\mathbb{I}\otimes\ldots\otimes\mathbb{I}+\ldots+\mathbb{I}\otimes\ldots\otimes\mathbb{I}\otimes h\,,\label{eq:local hamiltonian}
\end{equation}
where $h$ stands for the single-particle local Hamiltonian. In what
follows we assume for simplicity that $\mathrm{tr}\left(h\right)=0$. Note that the Hamiltonian $H_{N}$ preserves the subspace $\Hsym_N$.

Now, it follows from \eqnsref{eq:compact integration}{eq: Pure states average}
that the average behaviour of the QFI on states supported on the subspace $\mathcal{W}\subset\H$
is dictated by the value of $\mathrm{tr}_{\mathcal{W}}(H_{N}^{2})$. In the following lemma
we compute the latter in the cases $\mathcal{W}=\mathcal{H}_N$ and $\mathcal{W}=\Hsym_N$.
\begin{lem}\label{averagesrem}
Let Hamiltonian $H$ be given by \eqref{eq:local hamiltonian}. Then, the following relations
\begin{equation}\label{eq:distingav}
\Expect_{U \sim \mu\left(\mathcal{H}_N\right)}F_{\Omega,H}=\frac{4 N\cdot \mathrm{tr}\left(h^2\right)}{d}\cdot\frac{|\mathcal{H}_N|}{|\mathcal{H}_N|+1}\cdot\Lambda(\{p_j\}_j)
\end{equation}
and
\begin{equation}\label{eq:symav}
\Expect_{U \sim \mu\left(\Hsym_N\right)}F_{\Omega,H}= \frac{4N\left(N+d\right)\mathrm{tr}\left(h^2\right)}{d\left(d+1\right)}
\cdot \frac{|\Hsym_N|}{|\Hsym_N|+1} \cdot\Lambda(\{p_j\}_j).
\end{equation}
are true. For pure qubits \eqref{eq:symav} simplifies to
\begin{equation}
\Expect_{U \sim \mu\left(\Hsym_N\right)}F_{(1,0,\ldots,0),H}=\frac{2}{3}N(N+1)\tr\left(h^2\right)\ .
\end{equation}

\end{lem}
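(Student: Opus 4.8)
The plan is to reduce the whole statement to the average-QFI formula already proved for isospectral ensembles, and thereby to a single combinatorial quantity: the trace of $H_N^{2}$ restricted to the relevant carrier subspace. Combining \eqnsref{eq: Pure states average}{eq:compact integration} with the definition \eqref{lambda} of $\Lambda$, one obtains, for any subspace $\mathcal{W}$ that is preserved by $H_N$, carries the isospectral ensemble $\Omega$, and on which $H_N$ is traceless,
\begin{equation}
\Expect_{U \sim \mu\left(\mathcal{W}\right)}F_{\Omega,H}=\frac{4\,\tr_{\mathcal{W}}\!\left(H_N^{2}\right)}{|\mathcal{W}|+1}\,\Lambda(\{p_j\}_j)\,.
\end{equation}
Both subspaces at hand satisfy the tracelessness condition: $\tr_{\H_N}(H_N)=N\,d^{N-1}\tr(h)=0$, and, by permutation symmetry, $\tr_{\Hsym_N}(H_N)=N\,\tr_{\Hsym_N}\!\big(h^{(1)}\big)$, which is proportional to $\tr(h)=0$ by the marginal computation below. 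Hence it only remains to evaluate $\tr_{\H_N}(H_N^{2})$ and $\tr_{\Hsym_N}(H_N^{2})$.

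For $\mathcal{W}=\H_N$ I would expand $H_N^{2}=\sum_{j,k=1}^{N}h^{(j)}h^{(k)}$. The $N$ diagonal terms each contribute $\tr\big((h^{2})^{(j)}\big)=d^{N-1}\tr(h^{2})$, while each of the $N(N-1)$ off-diagonal terms has trace $d^{N-2}\,\tr(h)^{2}=0$. Therefore $\tr_{\H_N}(H_N^{2})=N\,d^{N-1}\,\tr(h^{2})=\tfrac{N}{d}\,|\H_N|\,\tr(h^{2})$, and substituting into the formula above gives \eqref{eq:distingav}.

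For $\mathcal{W}=\Hsym_N$ I would again expand, and use the permutation invariance of $\mathbb{P}_{\mathrm{sym}}^{N}$ to collapse the double sum to two orbit representatives,
\begin{equation}
\tr_{\Hsym_N}(H_N^{2})=N\,\tr_{\Hsym_N}\!\big((h^{2})^{(1)}\big)+N(N-1)\,\tr_{\Hsym_N}\!\big(h^{(1)}h^{(2)}\big)\,.
\end{equation}
Both terms follow from the reduced operators $\tr_{2,\dots,N}\mathbb{P}_{\mathrm{sym}}^{N}$ and $\tr_{3,\dots,N}\mathbb{P}_{\mathrm{sym}}^{N}$. Since $\mathbb{P}_{\mathrm{sym}}^{N}$ commutes with $U^{\otimes N}$ for all $U\in\SU(d)$, the first reduced operator commutes with every $U\in\SU(d)$ and hence equals $\tfrac{|\Hsym_N|}{d}\,\mathbb{I}$; the second commutes with $U\otimes U$ and is supported on the two-particle symmetric subspace (fully symmetric states are in particular symmetric under exchanging particles $1$ and $2$), so it is a multiple of $\mathbb{P}^{\mathrm{sym}}$, fixed by trace-matching to $\tfrac{2\,|\Hsym_N|}{d(d+1)}\,\mathbb{P}^{\mathrm{sym}}$. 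Using $\mathbb{P}^{\mathrm{sym}}=\tfrac12(\mathbb{I}\otimes\mathbb{I}+\mathbb{S})$ and $\tr(h)=0$ then yields $\tr_{\Hsym_N}\!\big((h^{2})^{(1)}\big)=\tfrac{|\Hsym_N|}{d}\tr(h^{2})$ and $\tr_{\Hsym_N}\!\big(h^{(1)}h^{(2)}\big)=\tfrac{|\Hsym_N|}{d(d+1)}\tr(h^{2})$, so that
\begin{equation}
\tr_{\Hsym_N}(H_N^{2})=\frac{|\Hsym_N|\,\tr(h^{2})}{d}\left(N+\frac{N(N-1)}{d+1}\right)=\frac{N(N+d)}{d(d+1)}\,|\Hsym_N|\,\tr(h^{2})\,.
\end{equation}
Substituting this into the average-QFI formula produces \eqref{eq:symav}, and specializing to $d=2$, $|\Hsym_N|=N+1$ and $\Lambda=1$ recovers $\Expect F=\tfrac{2}{3}N(N+1)\tr(h^{2})$.

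The one ingredient that goes beyond bookkeeping, and the step I expect to be the main obstacle, is the identification $\tr_{3,\dots,N}\mathbb{P}_{\mathrm{sym}}^{N}\propto\mathbb{P}^{\mathrm{sym}}$. I would justify it via Schur's lemma for the $\SU(d)$-action: the commutant of $\{U\otimes U\}_{U\in\SU(d)}$ on $\mathbb{C}^{d}\otimes\mathbb{C}^{d}$ is spanned by $\mathbb{I}\otimes\mathbb{I}$ and $\mathbb{S}$ (equivalently by $\mathbb{P}^{\mathrm{sym}}$ and $\mathbb{P}^{\mathrm{asym}}$), and the reduced operator is supported on $\mathrm{Sym}^{2}(\mathbb{C}^{d})$, which leaves only $\mathbb{P}^{\mathrm{sym}}$. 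If one prefers to avoid representation theory, the same constants come out of a direct computation in the generalized Dicke basis $\{\ket{\vec{k},N}\}$: the partial trace of $\proj{\vec{k},N}$ over $N-2$ particles is diagonal in the two-particle Dicke basis with weights that are ratios of multinomial coefficients, and contracting these weights against $h\otimes h$ (respectively against $h^{2}\otimes\mathbb{I}$) reproduces the values above.
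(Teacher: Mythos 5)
Your proof is correct, and it follows the paper's overall strategy (reduce everything to $\tr_{\mathcal{W}}(H_N^2)$ via the isospectral integration formula, with the distinguishable-particle case handled identically), but for the symmetric subspace you take a genuinely different route on the one nontrivial step. The paper diagonalizes $H_N$ in the generalized Dicke basis, writes $\tr_{\Hsym_N}(H_N^2)=\sum_{\vec{k}:|\vec{k}|=N}(\vec{k}\cdot\vec{\lambda})^2$, and grinds this out with explicit combinatorial identities (reducing to $\sum_{\vec{k}}k_1^2$ and evaluating it via diagonal-sum properties of binomial coefficients). You instead expand $H_N^2$ into one- and two-body terms and determine the one- and two-particle marginals of $\mathbb{P}_{\mathrm{sym}}^{N}$ by Schur's lemma: $\tr_{2,\dots,N}\mathbb{P}_{\mathrm{sym}}^{N}=\tfrac{|\Hsym_N|}{d}\mathbb{I}$ and $\tr_{3,\dots,N}\mathbb{P}_{\mathrm{sym}}^{N}=\tfrac{2|\Hsym_N|}{d(d+1)}\mathbb{P}^{\mathrm{sym}}$, the latter because the marginal commutes with $U\otimes U$ and is supported on $\mathrm{Sym}^2(\mathbb{C}^d)$. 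Your constants check out and reproduce \eqref{symmetric trace} exactly, including the needed tracelessness of $H_N$ on $\Hsym_N$. The representation-theoretic shortcut is precisely what the paper's own closing remark flags as the way to "simplify greatly" its most demanding step; it also generalizes more readily to other irreducible carrier spaces, whereas the paper's combinatorial computation is elementary and self-contained but tied to the Dicke-basis structure of the bosonic subspace.
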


\begin{rem}
The qualitative meaning of the above lemma is twofold. First, it shows that for uniformly distributed isospectral states from $\mathcal{H}_N$ the scaling of the QFI on average is at most linear in the number of particles $N$, and,
%
%
secondly, it proves that for random pure symmetric states
the average QFI scales quadratically with $N$, both for fixed local Hamiltonian $h$ and local dimension $d$. Thus, for symmetric states the average QFI attains the Heisenberg limit. This behaviour still holds for random isospectral density matrices, provided their spectrum is sufficiently pure with the ``degree of purity'' quantified by $\Lambda(\{p_j\}_j)$ defined by \eqref{lambda}.
\end{rem}

\begin{proof}
We start from the proof of \eqref{eq:distingav}. Using the fact that the local Hamiltonian $h$ is traceless, one obtains
\begin{equation}
\mathrm{tr}_{\mathcal{H}_N}\left(H^2\right)=\sum_{i=1}^{N}\mathrm{tr}_{\mathcal{H}_N}\left[\left(h^{(i)}\right)^2 \right]=\sum_{i=1}^{N}\mathrm{tr}\left(h^2 \right)d^{N-1}=N\cdot\mathrm{tr}\left(h^2 \right) \frac{|\mathcal{H}_N|}{d}.
\end{equation}
Inserting the above to \eqref{eq:compact integration} (note that here $\mathcal{H}=\mathcal{H}_N$), we arrive at \eqref{eq:distingav}.

The proof of \eqref{eq:symav} is more involving as it requires the computation of $\mathrm{tr}_{\Hsym_N}\left(H^2\right)$. The final result reads
\begin{equation}\label{symmetric trace}
\mathrm{tr}_{\Hsym_N}\left(H^2\right)=\frac{N\left(N+d\right)\mathrm{tr}\left(h^2\right)}{d\left(d+1\right)}\cdot|\Hsym_N|
\end{equation}
which when plugged into \eqref{eq:compact integration} yields \eqref{eq:symav}.

To determine explicitly \eqref{symmetric trace} let us chose the basis $\left\{\ket{i}\right\}_{i=1}^{d}$ of the single particle space as the eigenbasis of the local Hamiltonian $h$. Thus we have $h\ket{i}=\lambda_i\ket{i}$ for $i=1,\ldots,d$. The corresponding generalized Dicke states (see \eqref{eq:particledicke}) satisfy
\begin{equation}\label{gen Dicke identity}
H_{N}\ket{\vec{k},N}=\left(\vec{k}\cdot\vec{\lambda}\right) \ket{\vec{k},N},
\end{equation}
where $\vec{\lambda}=\left(\lambda_1,\ldots,\lambda_d\right)$ is the vector of eigenvalues of $h$ and $\cdot$ is the standard inner product in $\mathbb{R}^d$. Now, Eq.~\eqref{gen Dicke identity} together with the fact that the generalized Dicke states form a basis of $\Hsym_N$ allow us to write
\begin{align}\label{eq:expansion1}
\mathrm{tr}_{\Hsym_N}\left(H^2\right)&=\sum_{\vec{k}:|\vec{k}|=N}\left(\vec{k}\cdot \vec{\lambda}\right)^2 \\
&=\sum_{\vec{k}:|\vec{k}|=N}\sum_{i=1}^d\lambda_{i}^2 k_{i}^2+\sum_{\vec{k}:|\vec{k}|=N}\sum_{\substack{i,j=1\\i\neq j}}^d \left(\lambda_i \lambda _j\right)\left(k_i k_j \right),
\end{align}
where to obtain the second equality we explicitly squared all scalar products appearing under the sum.
From the symmetry we have
\begin{equation}\label{eq:expansion2}
\sum_{\vec{k}:|\vec{k}|=N} k_{i}^2 = \sum_{\vec{k}:|\vec{k}|=N} k_{i'}^2, \qquad \sum_{\vec{k}:|\vec{k}|=N}k_i k_j = \sum_{\vec{k}:|\vec{k}|=N}k_{i'} k_{j'},
\end{equation}
for all $i,i'$ and for all pairs of different indices $\left(i,j\right)$ and $\left(i',j'\right)$. As a result Eq.~\eqref{eq:expansion1} simplifies to
\begin{equation}\label{eq:expansion1s}
\mathrm{tr}_{\Hsym_N}\left(H^2\right)=\sum_{\vec{k}:|\vec{k}|=N}\left(\left[\sum_{i=1}^d\lambda_{i}^2\right]k_1 + \left[\sum_{\substack{i,j=1\\i\neq j}}^d \lambda_i \lambda _j\right]k_1 k_2  \right).
\end{equation}
The fact that $h$ is traceless yields
\begin{equation}\label{eq:expansion3}
\mathrm{tr}\left(h^2\right)=\sum_{i=1}^d \lambda_{i}^2=-\sum_{\substack{i,j=1\\i\neq j}}^d \lambda_i \lambda_j.
\end{equation}
Moreover, due to the condition $k_1+\ldots+k_d=N$ we have
\begin{equation}
\sum_{\vec{k}:|\vec{k}|=N} \left(k_1+\ldots+ k_d \right)^2 = |\Hsym_N|N^2.
\end{equation}
By exploiting the identities \eqref{eq:expansion2} the left-hand side of the above equation can be rewritten as
\begin{equation}\label{eq:expansion4}
\sum_{\vec{k}:|\vec{k}|=N} \left(k_1+\ldots+ k_d \right)^2 = d\cdot \sum_{\vec{k}:|\vec{k}|=N} k_{1}^2 + d\left(d-1\right)\cdot\sum_{\vec{k}:|\vec{k}|=N} k_{1} k_2.
\end{equation}
As a result, one obtains
\begin{equation}\label{eq:expansion5}
d\left(d-1\right)\cdot\sum_{\vec{k}:|\vec{k}|=N}k_1 k_2 = |\Hsym_N|N^2 - d\cdot \sum_{\vec{k}:|\vec{k}|=N} k_{1}^2
\end{equation}
Using Eqs. \eqref{eq:expansion1s}, \eqref{eq:expansion3} and \eqref{eq:expansion5} we finally arrive at
\begin{equation}\label{almpostfinal}
\mathrm{tr}_{\Hsym_N}\left(H^2\right)=\mathrm{tr}\left(h^2\right)\cdot\left[\left(d+1\right)\left(\sum_{\vec{k}:|\vec{k}|=N} k_{1}^2\right) - N^2 \cdot |\Hsym_N| \right].
\end{equation}
We compute the sum $\sum_{\vec{k}:|\vec{k}|=N} k_{1}^2$ by noting that
\begin{equation}\label{numeration}
\#\left(\left\{\vec{k} \left|\ |\vec{k}|=N ,k_1 =i \right\}\right.\right)= \binom{N-i+d-2}{N-i},
\end{equation}
where $\#\left(\cdot\right)$ denotes the number of elements of a discrete set. The above equation follows from the fact that the number of elements of the set $\{\vec{k} |\ |\vec{k}|=N ,k_1 =i \}$ is the same as the dimension of the Hilbert space of $N-i$ bosons in $d-1$ modes. Consequently, we get
\begin{equation}\label{finalugly}
\sum_{\vec{k}:|\vec{k}|=N} k_{1}^2 = \sum_{i=0}^{N} i^2\cdot \binom{N-i+d-2}{N-i}=\frac{N\left(2N+d-1\right)}{d\left(d+1\right)}|\Hsym_N|.
\end{equation}
Inserting the above expression to \eqref{almpostfinal} yields \eqref{symmetric trace}. The equlity \eqref{finalugly} can be proven using standard combinatorial identities. Below we sketch its proof for completeness. First, by the virtue of the diagonal sum property of binomial coefficients \cite{Bronstejn2013} we have that
\begin{equation}\label{diagsum}
\binom{N-i+a}{N-i}= \sum_{k=0}^{N-i} \binom{a+k-1}{k},
\end{equation}
where $a$ an arbitrary integer. Inserting \eqref{diagsum} (with $a=d=2$) to the left hand side of \eqref{finalugly} we get
\begin{equation}
\sum_{i=0}^{N} i^2\cdot \binom{N-i+d-2}{N-i}=\sum_{i=0}^{N} \sum_{k=0}^{N-i} i^2\cdot \binom{d-2+k-1}{k}=\sum_{k=0}^{N}\left(\sum_{i=0}^{N-k} i^2\right)\cdot \binom{d-2+k-1}{k}.
\end{equation}
The sum $\sum_{i=0}^{N-k} i^2$ is a polynomial of degree $3$ in k and can be easily computed. Therefore, in order to finish the computation it suffices to know the moments
\begin{equation}
\sum_{k=0}^{N} k^j \cdot \binom{a-1+k}{k},
\end{equation}
for the powers $j=1,2,3$. These can be found for instance on page 5 of \cite{Gould2010}.
\end{proof}
\begin{rem}
The most demanding part in the proof of \lemref{averagesrem} was the computation of $\mathrm{tr}_{\Hsym_N}\!\left(H^2\right)$ that can be simplified greatly by the use of group theoretic methods. This should allow one to perform analogous analysis for other irreducible representations of the group $\mathrm{SU}\left(d\right)$, for instance,
for the fermionic subspace of $\mathcal{H}_N$%
.
\end{rem}

\subsection{Average QFI for bosons with particle losses}
\label{sec:average_qfi_for_bosons_with_particle_losses}
Let $\rho'=\mathrm{tr}_k \left(\rho\right)$ be a mixed symmetric state on $N-k$ particles arising from tracing out $k$ particles of some $N$-partite state $\rho\in\mathcal{D}\left(\Hsym_N\right)$. Our aim in this section is to bound the average of the QFI over mixed states created in the above way, where $\rho$ is a random isospectral state acting on $\Hsym_N$. Recall that that we are interested in the standard context of quantum metrology, i.e., the Hamiltonian $H$ encoding the phase $\param$ is given by \eqnref{eq:local hamiltonian}.


\begin{lem}\label{boundavloss}
Let $\rho\in\mathcal{D}\left(\mathcal{S_N}\right)$ be a state of $N$ bosons with single particle $d$-dimensional Hilbert space $\mathcal{H}_l$ and the spectrum $\left\{p_1,\ldots,p_{|\Hsym_N|}\right\}$. Let us fix the local Hamiltonian $h$ and a non-negative integer $k$. Then, the following inequality holds
\begin{equation}
\Expect_{U \sim \mu\left(\mathcal{\Hsym_N}\right)} F\left(\mathrm{tr}_k\left(U\rho U^\dagger \right),H^{[N-k]}\right)\geq 2\frac{(N-k)(N+d)}{(d+1)(d+k)}\cdot\frac{|\Hsym_N|(|\Hsym_N|\tr \rho_N^2-1)\tr h^2}{|\Hsym_k|(|\Hsym_N|^2-1)}.
%
\end{equation}
\end{lem}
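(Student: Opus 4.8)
The plan is to bypass computing the average QFI of the partially traced ensemble directly—the random-matrix ensemble $\tr_k(U\sigma_N U^\dagger)$ on $\Hsym_{N-k}$ is intractable—and instead to bound the QFI from below by a quantity that is \emph{quadratic} in the pre-measurement state, whose Haar average is then accessible through \factref{integration}. First, by the Hilbert--Schmidt asymmetry bound \eqref{eq:Mahou3}, $F(\tr_k(U\rho U^\dagger),H_{N-k})\ge\|[H_{N-k},\tr_k(U\rho U^\dagger)]\|_{\mathrm{HS}}^{2}\eqqcolon\Phi(U\rho U^\dagger)$, so it suffices to evaluate $\Expect_{U\sim\mu(\Hsym_N)}\Phi(U\rho U^\dagger)$. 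Setting $\rho'=\tr_k\tilde\rho$ with $\tilde\rho=U\rho U^\dagger$ and expanding $\|[H_{N-k},\rho']\|_{\mathrm{HS}}^{2}=-\tr([H_{N-k},\rho']^{2})=2\tr(\rho'^{2}H_{N-k}^{2})-2\tr(\rho'H_{N-k}\rho'H_{N-k})$ shows that $\Phi$ is a degree-$(2,2)$ polynomial in $U$, namely $\Phi(\tilde\rho)=\tr(W\,\tilde\rho^{\ot2})$ on two copies of $\Hsym_N$, with
\[
W=2\,(H_{N-k}^{2}\ot\1)\,\mathbb{S}_{N-k}-2\,(H_{N-k}\ot H_{N-k})\,\mathbb{S}_{N-k},
\]
where $\mathbb{S}_{N-k}$ swaps the $N-k$ \emph{retained} particles of the two copies and identities on the traced-out $k$-particle blocks are implicit ($W$ may be symmetrised over the two copies without loss of generality since $\tilde\rho^{\ot2}$ is swap-invariant).

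By \factref{integration}, $\Expect_{U}\tilde\rho^{\ot2}=\alpha\,\mathbb{P}^{\mathrm{sym}}+\beta\,\mathbb{P}^{\mathrm{asym}}$ on $\Hsym_N\ot\Hsym_N$, with $\alpha,\beta$ depending on $\rho$ only through $\tr\rho^{2}$ and $|\Hsym_N|$; a short computation gives $\tfrac12(\alpha-\beta)=(|\Hsym_N|\tr\rho^{2}-1)/(|\Hsym_N|(|\Hsym_N|^{2}-1))$. Writing $\mathbb{P}^{\mathrm{sym}/\mathrm{asym}}=\tfrac12(\1\pm\mathbb{S}_{N})$ (with $\1=(\mathbb{P}^{N}_{\mathrm{sym}})^{\ot2}$ the identity on $\Hsym_N\ot\Hsym_N$ and $\mathbb{S}_{N}$ the full block-swap),
\[
\Expect_{U}\Phi(U\rho U^\dagger)=\tfrac{\alpha+\beta}{2}\tr\!\big(W(\mathbb{P}^{N}_{\mathrm{sym}})^{\ot2}\big)+\tfrac{\alpha-\beta}{2}\tr\!\big(W\mathbb{S}_{N}(\mathbb{P}^{N}_{\mathrm{sym}})^{\ot2}\big).
\]
The key observation is to test this identity on $\rho=\sigma_{\mathrm{mix}}$: then $\tilde\rho=\sigma_{\mathrm{mix}}$, $\tr_k\sigma_{\mathrm{mix}}$ is $\mathrm{U}(d)^{\ot(N-k)}$-invariant and hence commutes with $H_{N-k}$, so the left-hand side is $0$; but for $\rho=\sigma_{\mathrm{mix}}$ one has $\alpha=\beta$, forcing $\tr(W(\mathbb{P}^{N}_{\mathrm{sym}})^{\ot2})=0$. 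Therefore, for every $\rho$,
\[
\Expect_{U}\Phi(U\rho U^\dagger)=\frac{|\Hsym_N|\tr\rho^{2}-1}{|\Hsym_N|(|\Hsym_N|^{2}-1)}\;\tr\!\big(W\mathbb{S}_{N}(\mathbb{P}^{N}_{\mathrm{sym}})^{\ot2}\big).
\]

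It then remains to compute $\tr(W\mathbb{S}_{N}(\mathbb{P}^{N}_{\mathrm{sym}})^{\ot2})=2T_{1}-2T_{2}$; using $\mathbb{S}_{N}\mathbb{S}_{N-k}=\mathbb{S}_{k}$ (the swap of only the $k$ traced pairs), $T_{1}=\tr[(H_{N-k}^{2}\ot\1)\mathbb{S}_{k}(\mathbb{P}^{N}_{\mathrm{sym}})^{\ot2}]$ and $T_{2}=\tr[(H_{N-k}\ot H_{N-k})\mathbb{S}_{k}(\mathbb{P}^{N}_{\mathrm{sym}})^{\ot2}]$. For $T_{1}$, tracing first over the second copy with the partial-trace identity $\tr_{k}\mathbb{P}^{N}_{\mathrm{sym}}=\tfrac{|\Hsym_N|}{|\Hsym_k|}\mathbb{P}^{k}_{\mathrm{sym}}$ (Schur's lemma on the $\mathrm{U}(d)$-irrep $\Hsym_k$), then using $\tr_{B}[\mathbb{S}_{AB}(\1_{A}\ot Y_{B})]=Y_{A}$ and $\mathbb{P}^{N}_{\mathrm{sym}}(\1\ot\mathbb{P}^{k}_{\mathrm{sym}})=\mathbb{P}^{N}_{\mathrm{sym}}$, gives $T_{1}=\tfrac{|\Hsym_N|}{|\Hsym_k|}\tr_{\Hsym_N}(H_{N-k}^{2})$, which via permutation symmetry and $\tr_{\Hsym_N}((h^{(1)})^{2})=\tfrac{|\Hsym_N|}{d}\tr h^{2}$, $\tr_{\Hsym_N}(h^{(1)}h^{(2)})=\tfrac{|\Hsym_N|}{d(d+1)}\tr h^{2}$ (read off from \eqref{symmetric trace} of \lemref{averagesrem}) equals $\tfrac{|\Hsym_N|^{2}}{|\Hsym_k|}\cdot\tfrac{(N-k)(N-k+d)}{d(d+1)}\tr h^{2}$. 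The term $T_{2}$ is the technical heart: a copy of $H_{N-k}$ now sits on the retained block of \emph{each} copy, reconnected by $\mathbb{S}_{k}$; by permutation symmetry $H_{N-k}\ot H_{N-k}$ may be replaced by $(N-k)^{2}\,h^{(a)}\ot h^{(b)}$ ($a,b$ fixed retained particles of the two copies), one then traces out all but $a,b$ from each copy via $\tr_{N-k-1}\mathbb{P}^{N}_{\mathrm{sym}}=\tfrac{|\Hsym_N|}{|\Hsym_{k+1}|}\mathbb{P}^{k+1}_{\mathrm{sym}}$, and finally traces out the lost blocks against $\mathbb{S}_{k}$, reducing $T_{2}$ (up to the above scalars) to a trace over $\mathbb{C}^{d}\ot\mathbb{C}^{d}$ of $h\ot h$ against the $\mathrm{U}(d)$- and swap-invariant operator obtained by swapping and tracing out the $k$-particle parts of two copies of $\mathbb{P}^{k+1}_{\mathrm{sym}}$; that operator equals $c_{+}\mathbb{P}^{\mathrm{sym}}+c_{-}\mathbb{P}^{\mathrm{asym}}$ with $c_{\pm}$ fixed by its trace, $|\Hsym_{k+1}|^{2}/|\Hsym_k|$, and its trace against the swap, $|\Hsym_{k+1}|$. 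Carrying this through yields $T_{2}=\tfrac{|\Hsym_N|^{2}}{|\Hsym_k|}\cdot\tfrac{k(N-k)^{2}}{d(d+1)(d+k)}\tr h^{2}$.

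Collecting, and using $(N-k+d)-\tfrac{k(N-k)}{d+k}=\tfrac{d(N+d)}{d+k}$ together with $|\Hsym_{m}|=\binom{m+d-1}{m}$, one gets $T_{1}-T_{2}=\tfrac{(N-k)(N+d)}{(d+1)(d+k)}\cdot\tfrac{|\Hsym_N|^{2}}{|\Hsym_k|}\tr h^{2}$, so that multiplying $2(T_{1}-T_{2})$ by $\tfrac{\alpha-\beta}{2}$ reproduces \emph{exactly} the claimed lower bound (the $d=2$, $\tr h^{2}=\tfrac12$ specialisation recovers \eqref{eq:qubitsloss}, a useful consistency check). I expect the main obstacle to be the evaluation of $T_{2}$—the ``mixed'' replica trace with Hamiltonians acting on both copies and entangled through $\mathbb{S}_{k}$—and, more prosaically, carrying the $|\Hsym_k|$- and $(d+k)$-dependent combinatorial prefactors through without slips; everything else is a direct application of \factref{integration}, \lemref{averagesrem}, and the partial-trace identities for the symmetric projector.
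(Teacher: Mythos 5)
Your proposal is correct and follows essentially the same route as the paper's proof: lower-bounding the QFI by $\left\|\left[H_{N-k},\tr_k\left(U\rho U^\dagger\right)\right]\right\|_{\mathrm{HS}}^2$ via \eqref{eq:Mahou3}, lifting this to a quadratic form on $\Hsym_N\otimes\Hsym_N$, integrating with \factref{integration}, and observing that only the swap term survives. The two differences are cosmetic---you kill the $(\mathbb{P}_{\mathrm{sym}}^N)^{\otimes 2}$ contribution by testing the identity on $\sigma_{\mathrm{mix}}$ (where $\alpha=\beta$ and the commutator vanishes) instead of the paper's direct observation that the surviving diagonal terms carry the factor $\lambda_{\vec{n}}^2-\lambda_{\vec{n}}\lambda_{\vec{n}}=0$, and you evaluate the two swap traces by Schur's lemma on the commutant of $u\otimes u$ rather than by the paper's explicit multinomial sums---but they yield the same quantities $T_1=L_{N,k}$ and $T_2=L'_{N,k}$ and hence the identical final bound.
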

\begin{proof}
Denoting $\sigma_{U}=\tr_k(U\rho U^{\dagger})$ we notice that the inequality \eqref{eq:Mahou3}  allows one to lower-bound the QFI as
\begin{align}\label{Remik1}
F(\sigma_U,H_{N-k})&\geq \left\|\left[\sigma_U,H_{N-k}\right]\right\|_{\mathrm{HS}}^2 \\
&=2\left(\tr_{\Hsym_{N-k}}\!\left\{\sigma_U^2H_{N-k}^2\right\}-\tr_{\Hsym_{N-k}}\!\left\{(\sigma_UH_{N-k})^2\right\}\right),
\end{align}
where due to the fact that $\sigma_U$ is symmetric, the trace is taken over
the symmetric subspace $\Hsym_{N-k}$. For the same reason we can cut the Hamiltonian to
the symmetric subspace on which it acts as
%
%
\begin{equation}\label{Remik2}
H_{N-k}\Big\vert_{\Hsym_{N-k}}\coloneqq\mathbb{P}_{\mathrm{sym}}^{N-k} H_{N-k} \mathbb{P}_{\mathrm{sym}}^{N-k}=\sum_{\vec{n}}\lambda_{\vec{n}}^{(N-k)}\proj{\vec{n},N-k},
\end{equation}
where, as before, $\ket{\vec{n},N-k}$ are $(N-k)$-partite generalized Dicke states and
$\vec{n}=(n_0,\ldots,n_{d-1})$ is a vector of non-negative integers such that $n_0+\ldots+n_{d-1}=N-k$, and,
$\lambda_{\vec{n}}^{(N-k)}$ are the eigenvalues of $H_{N-k}$.
By abuse of notation, in what follows we denote both the Hamiltonian and its symmetric part
\eqref{Remik2} by $H_{N-k}$.

Using the swap operator introduced in \factref{integration} for $\H=\Hsym_{N-k}$
and the fact that $\tr(\mathbb{S}A\ot B)=\tr(AB)$ holds for any pair of operators acting on $\Hsym_{N-k}$, we can rewrite \eqnref{Remik1} as
\begin{align}\label{JanekBezFiranek2}
F(\sigma_U,H_{N-k})&\geq 2\,\left\{\tr[(H_{N-k}\sigma_U\ot\sigma_U H_{N-k})\mathbb{S}_{N-k}]-\tr\left[(\sigma_U H_{N-k}\ot\sigma_U H_{N-k}\right)\mathbb{S}_{N-k}]\right\} \\
&= 2\,\left\{\tr\left[(\sigma_U\ot\sigma_U)\left(\mathbb{P}_{\mathrm{sym}}^{N-k}\ot H_{N-k}^2-H_{N-k}\ot H_{N-k}\right)\mathbb{S}_{N-k}\right]\right\},
\end{align}
where to obtain the second line we used the fact that $\sigma_U$ acts on $\Hsym_{N-k}$ and that $\mathbb{S}^2_{N-k}=\mathbb{P}_{\mathrm{sym}}^{N-k}\ot \mathbb{P}_{\mathrm{sym}}^{N-k}$, and, for simplicity, we dropped the subscript $\Hsym_{N-k}\ot\Hsym_{N-k}$ in the trace.

Exploting the fact that the symmetric projector $\mathbb{P}_{\mathrm{sym}}^{N}$
is diagonal in the Dicke basis, that is,
\begin{equation}\label{Leeds}
\mathbb{P}_{\mathrm{sym}}^N=\sum_{\vec{p}}\proj{\vec{p},N},
\end{equation}
the representation of the Hamiltonian in \eqnref{Remik1} and the definition of the swap operator,
one arrives at the following formula
\begin{equation}
(\mathbb{P}_{\mathrm{sym}}^{N-k}\ot H^2_{N-k}-H_{N-k}\ot H_{N-k})\mathbb{S}_{N-k}=\sum_{\vec{n},\vec{m}}(\lambda_{\vec{n}}^2-\lambda_{\vec{n}}\lambda_{\vec{m}})
\ket{\vec{m},N-k}\bra{\vec{n},N-k}\otimes \ket{\vec{n},N-k}\bra{\vec{m},N-k},
\end{equation}
which when plugged into \eqnref{JanekBezFiranek2} gives
\begin{equation}\label{Remik3}
  F(\sigma_U,H_{N-k})\geq 2
  \sum_{\vec{n},\vec{m}}(\lambda_{\vec{n}}^2-\lambda_{\vec{n}}\lambda_{\vec{m}})
  \tr[(\sigma_U\ot\sigma_U)
  \ket{\vec{m},N-k}\bra{\vec{n},N-k}\otimes \ket{\vec{n},N-k}\bra{\vec{m},N-k}].
\end{equation}

We are now ready to lower bound the average $\Expect_{U \sim \mu\left(\mathcal{\Hsym_N}\right)} F\left(\sigma_U,H_{N-k}\right)$. Using the fact that $\sigma_U=\tr_k(U\rho U^{\dagger})$ and that $U\rho U^{\dagger}$ is symmetric, we obtain from inequality \eqref{Remik3} that
\begin{align}\label{Av1}
  &\Expect_{U \sim \mu\left(\Hsym_N\right)} F\left(\sigma_U,H_{N-k}\right)
  \geq 2
  \sum_{\vec{n},\vec{m}}(\lambda_{\vec{n}}^2-\lambda_{\vec{n}}\lambda_{\vec{m}})\\
  &\times\int_{\mathrm{SU}(\Hsym_N)}\mathrm{d}\mu(U)\,\tr[(U\rho U^{\dag}\ot U\rho U^{\dag})
  \ket{\vec{m},N-k}\bra{\vec{n},N-k}\ot\mathbb{P}_{\mathrm{sym}}^k\otimes \ket{\vec{n},N-k}\bra{\vec{m},N-k}\ot \mathbb{P}_{\mathrm{sym}}^k], \nonumber
\end{align}
where now the trace is performed over $\Hsym_N\ot\Hsym_N$.
Let us focus for a moment on the state
\begin{equation}
\int_{\mathrm{SU}(\Hsym_N)}\mathrm{d}\mu(U)\,(U\rho U^{\dag}\ot U\rho U^{\dag}).
\end{equation}
It follows from \factref{integration} (for $\mathcal{H}=\Hsym_N$)
that after performing the integration the above state assumes the following form
\begin{equation}\label{InvState}
\int_{\mathrm{SU}(\Hsym_N)}\mathrm{d}\mu(U)\,(U\rho U^{\dag}\ot U\rho U^{\dag})=\alpha\mathbb{P}_{\mathrm{sym}\wedge\mathrm{sym}}+\beta\mathbb{P}_{\mathrm{as}\wedge\mathrm{as}}.
\end{equation}
For completeness let us recall that $\mathbb{P}_{\mathrm{sym}\wedge\mathrm{sym}}$ and
$\mathbb{P}_{\mathrm{as}\wedge\mathrm{as}}$ are the projectors onto the
symmetric and antisymmetric subspaces of $\Hsym_N\ot\Hsym_N$, respectively, and are given by
\begin{equation}\label{SymAsym}
\mathbb{P}_{\mathrm{sym}\wedge\mathrm{sym}}=\frac{1}{2}\left(\mathbb{P}_{\mathrm{sym}}^N\ot \mathbb{P}_{\mathrm{sym}}^N+\mathbb{S}_N\right),\qquad
\mathbb{P}_{\mathrm{as}\wedge\mathrm{as}}=\frac{1}{2}\left(\mathbb{P}_{\mathrm{sym}}^N\ot \mathbb{P}_{\mathrm{sym}}^N-\mathbb{S}_N\right).
\end{equation}
Moreover, the real coefficients $\alpha$ and $\beta$ are explicitly given by
\begin{equation}
\alpha=\frac{1}{2D_+(\Hsym_N)}(1+\tr\rho^2),\qquad
\beta=\frac{1}{2D_-(\Hsym_N)}(1-\tr\rho^2),\qquad
\end{equation}
where $D_{\pm}(\Hsym_N)=|\Hsym_N|(|\Hsym_N|\pm1)/2$.

Plugging \eqnref{InvState} into \eqnref{Av1} and using \eqnref{SymAsym}
one arrives at
\begin{align}\label{Av2}
&\Expect_{U \sim \mu\left(\Hsym_N\right)} F\left(\sigma_U,H_{N-k}\right)
\geq
\sum_{\vec{n},\vec{m}}(\lambda_{\vec{n}}^2-\lambda_{\vec{n}}\lambda_{\vec{m}})
\left\{(\alpha+\beta)\left|\tr\left(\mathbb{P}_{\mathrm{sym}}^N\ket{\vec{m},N-k}\bra{\vec{n},N-k}\ot\mathbb{P}_{\mathrm{sym}}^k\right)\right|^2\right.\\
&\left.+(\alpha-\beta)\tr\left[\mathbb{S}_N
\left(\ket{\vec{m},N-k}\bra{\vec{n},N-k}\ot\mathbb{P}_{\mathrm{sym}}^k\otimes \ket{\vec{n},N-k}\bra{\vec{m},N-k}\ot \mathbb{P}_{\mathrm{sym}}^k\right)\right]\right\}. \nonumber
\end{align}
The right-hand side of this inequality can significantly be simplified if one observes that
the first trace under the curly brackets is nonzero only if $\vec{m}=\vec{n}$, giving
\begin{align}\label{Av3}
\Expect_{U \sim \mu\left(\Hsym_N\right)} F\left(\sigma_U,H_{N-k}\right)
&\geq (\alpha-\beta)
\sum_{\vec{n},\vec{m}}(\lambda_{\vec{n}}^2-\lambda_{\vec{n}}\lambda_{\vec{m}}) \\
&\left.\times\tr\left[\mathbb{S}_N
\left(\ket{\vec{m},N-k}\bra{\vec{n},N-k}\ot\mathbb{P}_{\mathrm{sym}}^k\otimes \ket{\vec{n},N-k}\bra{\vec{m},N-k}\ot \mathbb{P}_{\mathrm{sym}}^k\right)\right]\right\}. \nonumber
\end{align}
Our aim now is to compute the remaining trace, which for further purposes we
denote $T_{\vec{m},\vec{n}}$. We use the fact that the projector
$\mathbb{P}_{\mathrm{sym}}^k$ can be written as in \eqnref{Leeds},
which together with the following identity
\begin{equation}\label{dupa1}
\mathbb{P}_{\mathrm{sym}}^N\ket{\vec{n},N-k}\ket{\vec{p},k}=\frac{\sqrt{{N-k}\choose{\vec{n}}}\sqrt{{k}\choose{\vec{p}}}}{\sqrt{{N}\choose{\vec{n}+\vec{p}}}}\ket{\vec{n}+\vec{p},N},
\end{equation}
allows us to express $T_{\vec{m},\vec{n}}$ as
\begin{equation}
T_{\vec{m},\vec{n}}=\sum_{\vec{o}}\frac{{{N-k}\choose{\vec{m}}}{{k}\choose{\vec{o}}}}{{{N}\choose{\vec{m}+\vec{o}}}}
\frac{{{N-k}\choose{\vec{n}}}{{k}\choose{\vec{o}}}}{{{N}\choose{\vec{n}+\vec{o}}}}.
\end{equation}
This gives
\begin{align}\label{Av4}
\Expect_{U \sim \mu\left(\Hsym_N\right)} F\left(\sigma_U,H_{N-k}\right)
&\geq (\alpha-\beta)
\sum_{\vec{n},\vec{m}}(\lambda_{\vec{n}}^2-\lambda_{\vec{n}}\lambda_{\vec{m}})\sum_{\vec{o}}\frac{{{N-k}\choose{\vec{m}}}{{k}\choose{\vec{o}}}}{{{N}\choose{\vec{m}+\vec{o}}}}
\frac{{{N-k}\choose{\vec{n}}}{{k}\choose{\vec{o}}}}{{{N}\choose{\vec{n}+\vec{o}}}}\\
&=2\frac{|\Hsym_N|\tr\rho^2-1}{|\Hsym_N|(|\Hsym_N|^2-1)}(L_{N,k}-L'_{N,k}),
\end{align}
where we used the explicit expressions for $\alpha$ and $\beta$ and denoted
\begin{equation}
L_{N,k}=\sum_{\vec{o}}\left[\sum_{\vec{m}}\frac{{{N-k}\choose{\vec{m}}}{{k}\choose{\vec{o}}}}{{{N}\choose{\vec{m}+\vec{o}}}}\right]\left[\sum_{\vec{n}}\lambda_{\vec{n}}^2\frac{{{N-k}\choose{\vec{n}}}{{k}\choose{\vec{o}}}}{{{N}\choose{\vec{n}+\vec{o}}}}\right],\quad
L_{N,k}'=\sum_{\vec{o}}\left[\sum_{\vec{m}}\lambda_{\vec{m}}\frac{{{N-k}\choose{\vec{m}}}{{k}\choose{\vec{o}}}}{{{N}\choose{\vec{m}+\vec{o}}}}\right]^2.
\end{equation}
We now compute each sum separately.
To this end, let us first notice that it follows from
\eqnref{dupa1} that
\begin{align}\label{Lipton}
\sum_{\vec{m}}\frac{{{N-k}\choose{\vec{m}}}{{k}\choose{\vec{o}}}}{{{N}\choose{\vec{m}+\vec{o}}}}&=\sum_{\vec{m}}\tr\left[\mathbb{P}_{\mathrm{sym}}^N\proj{\vec{m},N-k}\ot\proj{\vec{o},k}\right]\\
&=\tr\left[\mathbb{P}_{\mathrm{sym}}^N\left(\mathbb{P}_{\mathrm{sym}}^{N-k}\ot \proj{\vec{o},k} \right)\right]\\
&=\frac{|\Hsym_N|}{|\Hsym_k|},
\end{align}
where to get the second equality we used \eqnref{Leeds}, while to obtain the third one we used
the fact that the partial trace of $\mathbb{P}_{\mathrm{sym}}^N$ over $N-k$ subsystems is given by
\begin{equation}\label{Friday}
\tr_{N-k}\left(\mathbb{P}_{\mathrm{sym}}^N\right)=\frac{|\Hsym_N|}{|\Hsym_k|}\mathbb{P}_{\mathrm{sym}}^k.
\end{equation}
With the aid of formula \eqnref{Lipton} we can write $L_{N,k}$
as
\begin{equation}
L_{N,k}=\frac{|\Hsym_N|}{|\Hsym_k|}\sum_{\vec{n}}\lambda^2_{\vec{n}}\sum_{\vec{o}}
\frac{{{N-k}\choose{\vec{n}}}{{k}\choose{\vec{o}}}}{{{N}\choose{\vec{n}+\vec{o}}}}.
\end{equation}
Then, exploiting formulas \eqnref{Leeds} and \eqnref{dupa1} and the form of the Hamiltonian $H_{N-k}$
this further rewrites as
\begin{align}\label{el}
L_{N,k}&=\frac{|\Hsym_N|}{|\Hsym_k|}\tr\left[\mathbb{P}_{\mathrm{sym}}^N\left(H^2_{N-k}\ot\mathbb{P}_{\mathrm{sym}}^k\right)\right]
\\
&= \frac{|\Hsym_N|^2}{|\Hsym_k||\Hsym_{N-k}|}\tr_{\Hsym_{N-k}}\left(H^2_{N-k}\right).
\end{align}
where the second equality stems from \eqnref{Friday}.

To compute $L'_{N,k}$ we follow more or less the same strategy. First, using
\eqnsref{dupa1}{Remik2} we can rewrite it as
\begin{equation}\label{Vader}
L'_{N,k}=\sum_{\vec{o}}\left\{\tr\left[\mathbb{P}_{\mathrm{sym}}^N(H_{N-k}\ot \proj{\vec{o},k})\right]\right\}^2.
\end{equation}
Then, we use the fact in that the full Hilbert space $(\mathbb{C}^d)^{\ot (N-k)}$, $H_{N-k}$ assumes the form given in \eqnref{eq:local hamiltonian}, which gives
\begin{align}\label{Priorat}
\tr\left[\mathbb{P}_{\mathrm{sym}}^N(H_{N-k}\ot \proj{\vec{o},k})\right]&=(N-k)\tr\left[\mathbb{P}_{\mathrm{sym}}^N(h\ot \mathbb{P}_{\mathrm{sym}}^{ N-k-1}\ot \proj{\vec{o},k})\right]\\
&=(N-k)\frac{|\Hsym_N|}{|\Hsym_{k+1}|}\tr\left[\mathbb{P}_{\mathrm{sym}}^{k+1}(h\ot \proj{\vec{o},k})\right],
\end{align}
where the second line follows from \eqnref{Friday}. To compute the remaining trace we expand $h$ in its eigenbasis
as $h=\sum_{n=0}^{d-1}\xi_n\proj{n}$ (where $\xi_i$ are the eigenvalues of $h$), which can also be written using the ``mode representation'' as
\begin{equation}
h=\sum_{\vec{n}}\xi_{\vec{n}}\proj{\vec{n}},
\end{equation}
where $\vec{n}=(i_0,\ldots,i_{d-1})$ is now a $d$-dimensional vector whose
components are such that $n_i=0,1$ and $n_0+\ldots+n_{d-1}=1$. In this representation a number $n=i\in\{0,\ldots,d-1\}$ is represented by a vector $\vec{n}$ whose $i$th component $n_i=1$ and the remaining ones are zero. Using \eqnref{dupa1} one obtains
\begin{equation}
\tr\left[\mathbb{P}_{\mathrm{sym}}^{k+1}(h\ot \proj{\vec{o},k})\right]=\sum_{\vec{n}}\xi_{\vec{n}}
\frac{{{1}\choose{\vec{n}}}{{k}\choose{\vec{o}}}}{{{k+1}\choose{\vec{n}+\vec{o}}}}=\sum_{\vec{n}}\xi_{\vec{n}}
\frac{{{k}\choose{\vec{o}}}}{{{k+1}\choose{\vec{n}+\vec{o}}}},
\end{equation}
where the summation is taken over vectors $\vec{n}$ specified above (there is $d$ such vectors). The second equality straightforwardly stems from the fact that ${{1}\choose{\vec{n}}}=1$. We then exploit the fact that
${{k+1}\choose{\vec{n}+\vec{o}}}=\tfrac{k+1}{o_n+1}{{k}\choose{\vec{o}}}$ and the assumption that $\tr h=0$ to get
\begin{equation}
\tr\left[\mathbb{P}_{\mathrm{sym}}^{k+1}(h\ot \proj{\vec{o},k})\right]=\frac{1}{k+1}\sum_{n=0}^{d-1}\xi_n o_n
=\frac{1}{k+1}\lambda_{\vec{o}}^{(k)},
\end{equation}
where, to recall, $\lambda_{\vec{o}}^{(k)}$ is the eigenvalue of
the $k$-partite Hamiltonian $H_k$ (compare \eqnref{Remik2}).
Combining the above identity with \eqnsref{Priorat}{Vader}, one finds that
\begin{equation}\label{elPrime}
L'_{N,k}=\left(\frac{N-k}{k+1}\frac{|\Hsym_N|}{|\Hsym_{k+1}|}\right)^2
\tr_{\Hsym_k}\left( H_k^2\right).
%
\end{equation}
Plugging \eqnsref{el}{elPrime} into \eqnref{Av1}, one eventually finds that the average QFI is lower-bounded as
\begin{equation}
\Expect_{U \sim \mu\left(\Hsym_N\right)} F\left(\sigma_U,H_{N-k}\right)\geq
2\frac{|\Hsym_N|}{|\Hsym_k|}\frac{|\Hsym_N|\tr \rho_N^2-1}{|\Hsym_N|^2-1}\frac{(N-k)(N+d)}{(d+1)(d+k)}\tr h^2.
\end{equation}
\end{proof}

\begin{rem}
It is worth mentioning that using similar techniques, one can also
provide an upper bound on the average QFI for bosons in the case of particle losses.
To be more precise, in what follows we will derive such a bound for multi-qubit states.
As the QFI is upper bounded by the variance, one has
\begin{equation}
F_Q(\sigma_U,H_{N-k})\leq 4\Delta_{\sigma_{U}}^2H_{N-k}=4\{\tr(\sigma_U H^2_{N-k})-[\tr(\sigma_U H_{N-k})]^2\}\leq 4\tr(\sigma_U H_{N-k}^2).
\end{equation}
Using then the fact that the right-hand side can be rewritten as
%
$\tr(\sigma_U H^2_{N-k})=\tr[\rho (H^2_{N-k}\ot\mathbb{P}^{k}_{\mathrm{sym}})]$
%
and that
\begin{equation}
\int_{\mathrm{SU}(\Hsym_N)}\mathrm{d}\mu(U)U\rho U^\dag =\frac{\mathbb{P}_{\mathrm{sym}}^N}{N+1},
\end{equation}
one obtains
\begin{equation}
\Expect_{U \sim \mu\left(\Hsym_N\right)} F\left(\sigma_U,H_{N-k}\right)\leq \frac{4}{N+1}\tr[\mathbb{P}^N_{\mathrm{sym}}(H^2_{N-k}\ot \mathbb{P}_{\mathrm{sym}}^{ k})].
\end{equation}
With the aid of \eqnsref{Friday}{symmetric trace} we eventually get
\begin{equation}\label{UpperDupper}
\Expect_{U \sim \mu\left(\Hsym_N\right)} F\left(\sigma_U,H_{N-k}\right)\leq \frac{1}{3}(N-k)(N-k+2).
\end{equation}
Notice that for $k=0$ this bound gives $N(N+2)/3$ which differs from the exact value for qubtis by a factor linear in $N$. In general, however, this bound is not very informative because even for significant particle losses as e.g.
$k=\eta N$ with $0<\eta<1$, the right-hand side of \eqnref{UpperDupper} scales quadratically with $N$.
\end{rem}

\subsection{Average FI of random two-mode bosonic states in the interferometric setup}
\label{sec:averagefifortwomodes}
In this part we study the interferometric setup introduced in \secref{sec:phys} and depicted in \figref{fig:interferometer}. 
Recall that the \emph{classical Fisher information} (FI) associated with such a measurement scheme is given by:
\begin{equation}
\FI(\left\{p_{n|\param}(\psi)\right\})
=
\sum_{n=0}^N
\frac{\tr\left(\ii\,[\Pi^{N}_n,\hat J_z] \psi(\param)\right)^2}
{\tr\left(\Pi^{N}_n\psi(\param)\right)} \ ,
\label{eq:FIapp}
\end{equation}
where
by
$\hat J_\alpha\coloneqq\frac{1}{2}\sum_{i=0}^N \sigma_\alpha^{(i)}$
($\alpha=\{x,y,z\}$) we denote the angular momentum operators, 
$\Pi^{N}_n=\hat{B}D_{n}^N \hat{B}^\dagger$, with 
$\hat{B}\coloneqq\exp(-\ii \pi{\hat J}_x/2)$, $D_n^N=\kb{D_n^N}{D_n^N}$, and $\ket{D_n^N}\coloneqq\ket{n,N-n}$, are the projections 
onto the Dicke states propagated through a balanced beam-splitter, and 
$\psi(\varphi)\coloneqq\exp(-\ii \hat J_z\param)\psi\exp(\ii \hat J_z\param)$ 
with $\psi$ some pure state in $\Hsym_N$ with $d=2$ modes. 

Similarly as in \thmref{thm:inter_meas_fixed_phi} of \secref{sec:phys}, after 
fixing $\psi_N$ to be a particular pure state on $\Hsym_N$, we may then define
\begin{equation}\label{FIclas2}
\FI(U,\param) \coloneqq \FI(\left\{p_{n|\param}(U\psi_N U^\dagger)\right\} \ ,
\end{equation}
where $U\in\mathrm{SU}\left(\Hsym_N\right)$ and $\param\in[0,2\pi]$.

\begin{lem}\label{avclassical}
Let $\FI(U,\param)$ be defined as above. Then, the following inequalities hold
\begin{equation}
c_- N^2\leq \Expect_{U \sim \mu\left(\mathcal{\Hsym}_N\right)} \!\!\FI(U,\param) \;\leq\ c_+ N^2+N, 
\end{equation}
where 
\begin{equation}
c_-=\frac{1}{36}-\frac{4}{3\e^5} \approx 0.0244  \ ,\ c_+=-\frac{5}{6}+\frac{3}{\e} \approx 0.270 \ .
\end{equation} 
\end{lem}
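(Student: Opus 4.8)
\emph{Reduction.} Since the phase is encoded unitarily, $\psi(\param)=\e^{-\ii\hat{J}_z\param}\,U\psi_N U^\dagger\,\e^{\ii\hat{J}_z\param}=(\e^{-\ii\hat{J}_z\param}U)\,\psi_N\,(\e^{-\ii\hat{J}_z\param}U)^\dagger$, and by left-invariance of $\mu(\Hsym_N)$ the shift $U\mapsto\e^{-\ii\hat{J}_z\param}U$ preserves the measure; hence $\Expect_{U\sim\mu(\Hsym_N)}\FI(U,\param)$ does not depend on $\param$ and may be computed at $\param=0$. Conjugating the $n$-th summand of \eqnref{eq:FIapp} by the fixed beam-splitter $\hat{B}$ and using $\hat{B}^\dagger\hat{J}_z\hat{B}=\pm\hat{J}_y$ (the sign is immaterial, as the numerator is squared), one reduces to $\Expect_{U}\sum_{n=0}^{N}\tr(\ii[D_n,\hat{J}_y]\chi)^2/\tr(D_n\chi)$, with $D_n=\proj{D_n^N}$ the Dicke projectors and $\chi=U'\tilde\psi_N U'^\dagger$ again a Haar-random pure state on $\Hsym_N$, of dimension $N+1$. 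Writing $\chi=\proj{\phi}$, $c_m=\braket{D_m^N}{\phi}$, we get $p_n\coloneqq\tr(D_n\chi)=|c_n|^2$, and, since $\hat{J}_y$ is tridiagonal in the Dicke basis with off-diagonal elements $\mp\ii\beta_n$, $\beta_n\coloneqq\tfrac12\sqrt{(n+1)(N-n)}$, the numerator is $\mathrm{num}_n\coloneqq\tr(\ii[D_n,\hat{J}_y]\chi)=2\Re\!\big(\bar{c}_n(\beta_{n-1}c_{n-1}-\beta_n c_{n+1})\big)$. Each summand thus depends only on $c_{n-1},c_n,c_{n+1}$.

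\emph{Single-summand average.} The core quantity is $E_n\coloneqq\Expect[\mathrm{num}_n^2/p_n]$. Writing $c_n=|c_n|\e^{\ii\theta_n}$ and $\alpha_n=\arg(\beta_{n-1}c_{n-1}-\beta_n c_{n+1})$, the factor $|c_n|$ cancels and $\mathrm{num}_n^2/p_n=4\,|\beta_{n-1}c_{n-1}-\beta_n c_{n+1}|^2\cos^2(\theta_n-\alpha_n)$, which depends on $\theta_n$ only through the $\cos^2$. For a Haar-random $\ket{\phi}$ the phases $\{\theta_j\}$ are i.i.d.\ uniform and independent of the moduli, so averaging $\theta_n$ produces a factor $\tfrac12$, and using $\Expect|c_j|^2=1/(N+1)$ and $\Expect[c_{n-1}\bar{c}_{n+1}]=0$ one finds $E_n=2(\beta_{n-1}^2+\beta_n^2)/(N+1)$. (If one prefers to stay at the operator level with $\Pi_n^N,\hat{J}_z$---thereby also sidestepping the removable $0/0$ at $p_n=0$---one instead sandwiches $E_n$: from below by $\tfrac1t\Expect[\mathrm{num}_n^2\,\1\{p_n\le t\}]$, the discarded part estimated through $\Expect[\mathrm{num}_n^2\,\1\{p_n>t\}]\le(\Expect\,\mathrm{num}_n^4)^{1/2}\Pr[p_n>t]^{1/2}$ with $\Pr[p_n>t]=(1-t)^N\le\e^{-Nt}$ and $\Expect\,\mathrm{num}_n^4$ a further quartic integral of order $(\beta_n^2+\beta_{n-1}^2)^2/N^2$, computed via \factref{integration} and its fourth-order analogue; from above by crudely bounding the $\cos^2$ and the cross term. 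Optimizing $t=s/N$ over $s$ is what yields the stated $c_\pm$ with their factors of $\e^{-5}$.)

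\emph{Summation.} Summing over $n$ needs only $\sum_{n=0}^{N}\beta_n^2=\tfrac14\sum_{n=0}^{N}(n+1)(N-n)=N(N+1)(N+2)/24$ (the $d=2$ case of the computation behind \eqnref{finalugly}) together with $\beta_{-1}=\beta_N=0$, which makes $\sum_n\beta_{n-1}^2$ equal to the same. This gives $\Expect_{U\sim\mu(\Hsym_N)}\FI(U,\param)=\sum_n E_n=N(N+2)/6$, so both inequalities of the Lemma hold with margin; if one only proves the two bounds separately via the truncation argument, the explicit $c_\pm$ emerge instead. The one real obstacle is the denominator $1/p_n$, which diverges where $p_n\to0$: one must either exploit the algebraic cancellation $\mathrm{num}_n=|c_n|\times(\text{bounded})$, which turns $\mathrm{num}_n^2/p_n$ into a bounded function of $U$ amenable to direct integration, or---working with operators---control the rare region $p_n\lesssim1/N$ using the $\mathrm{Beta}(1,N)$ law of $p_n=|c_n|^2$ and a fourth-moment bound. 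Everything else is bookkeeping with the Dicke-basis matrix elements of $\hat{J}_y$ and the above combinatorial sum.
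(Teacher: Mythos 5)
Your main argument is correct, and it takes a genuinely different route from the paper's. The paper never computes $\Expect_{U}\FI(U,\param)$ exactly: it bounds each term $f_n/g_n$ by splitting $\SU(\Hsym_N)$ according to whether the denominator $g_n=\tr(D_n^N\tilde\psi)$ is above or below a threshold $\alpha$, uses the inequality $f_n\le N^2 g_n^2$ together with the Beta$(1,N)$ law $p(X)=N(1-X)^{N-1}$ of $g_n$, and then optimizes $\alpha=\Delta/N$ numerically ($\Delta=6$ for the lower bound, $\Delta=1$ for the upper), which is exactly where the non-matching constants $c_\pm$ with their factors of $\e^{-5}$ and $\e^{-1}$ come from — your parenthetical "operator-level" variant is essentially this proof, with slightly different bookkeeping for the discarded region. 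Your primary argument is sharper: the algebraic cancellation $\mathrm{num}_n=2|c_n|\,|\beta_{n-1}c_{n-1}-\beta_n c_{n+1}|\cos(\theta_n-\alpha_n)$ turns each summand into a bounded function of $U$, and the independence of the phases $\{\theta_j\}$ from each other and from the moduli under the Haar measure gives the exact value $E_n=2(\beta_{n-1}^2+\beta_n^2)/(N+1)$ and hence $\Expect_U\FI=N(N+2)/6$, from which both stated inequalities follow with room to spare. This is a strictly stronger conclusion: it proves the paper's conjecture that the average tends to $N^2/6$, and in fact slightly refines the heuristic finite-$N$ value $N(N+1)/6$ quoted in the paper (a quick sanity check at $N=1$ gives $\FI=\cos^2(\theta_1-\theta_0)$ with average $1/2=N(N+2)/6$, not $1/3$). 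What the paper's approach buys in exchange is robustness — it only needs the marginal law of $g_n$ and the bound $f_n\le N^2g_n^2$ — whereas your exact computation relies on the POVM elements being rank-one projectors onto the very basis in which the Haar-random amplitudes have independent uniform phases and on $\hat J_y$ having vanishing diagonal there; you correctly flag the only delicate point (the $0/0$ set), which is measure zero and hence harmless for the average.
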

\begin{proof}
The main difficulty in the proof comes from the fact that $\FI(U,\param)$ is a complicated, non-linear function of $U$.
Let us first note that by using the relation  $\hat B\,\e^{-\ii {\hat J}_z \param}\hat B^\dagger =\e^{\ii {\hat J}_y \param}$ it is possible to rewrite the FI in \eqnref{eq:FIapp} as
\begin{equation}
\FI(\left\{p_{n|\param}(\psi)\right\})
=
\sum_{n=0}^N
\frac{\tr\!\left(\ii\,[D^{N}_n,\hat J_y] \tilde{\psi}(\param)\right)^2}
{\tr\!\left(D^{N}_n\tilde{\psi}(\param)\right)} \ ,
\end{equation}
where $\tilde{\psi}(\param)= \exp\left(\ii\param\hat{J}_y\right)\psi\exp\left(-\ii\param\hat{J}_y\right)$. Let us introduce the auxiliary notation
\begin{align} 
f_n\left(U,\param\right) &= \left\{\tr\!\left(\ii\,[D^{N}_n,\hat J_y] \exp\left(\ii\param\hat{J}_y\right)U\psi U^\dagger \exp\left(-\ii\param\hat{J}_y\right) \right)\right\}^2 \ , \\
g_n\left(U,\param\right) &= \tr\!\left( D^{N}_n \exp\left(\ii\param\hat{J}_y\right)U\psi U^\dagger \exp\left(-\ii\param\hat{J}_y\right) \right) \ .
\end{align}
Using the above formulas we obtain the compact expression for $\FI(U,\param)$ ,
\begin{equation}\label{simpler}
\FI(U,\param)=\sum_{n=0}^N \frac{f_{n}(U,\param)}{g_{n}(U,\param)}\  .
\end{equation}
In what follows we will make use of the inequality
\begin{equation}\label{simplbound}
f_{n}(U,\param)\leq N^2 g_{n}(U,\param)^2 \ ,
\end{equation}
which follows directly from \eqref{eq:posineq} applied to the considered setting. In order to obtain bounds on the average $\Expect_{U \sim \mu\left(\mathcal{\Hsym}_N\right)}\FI(U,\param)$ we will use the use the following subsets of the $\mathrm{SU}\left(\Hsym_N\right)$,
\begin{align}
\mathcal{G}_{+,\alpha}^n&=\left\{U\in\mathrm{SU}\left(\left.\Hsym_n\right)\right|\ g_n\left(U,\param\right)\geq \alpha    \right\}\ , \\
\mathcal{G}_{-,\alpha}^n&=\left\{U\in\mathrm{SU}\left(\left.\Hsym_n\right)\right|\ g_n\left(U,\param\right)\leq \alpha    \right\}\ ,
\end{align}
where $n=0,\ldots,N$ and $\alpha\in\left[0,1\right]$. Because of the unitary invariance of the Haar measure and the fact that projectors $D^{N}_n$ have rank one the distribution of the random variable $g_n\left(U,\param\right)$ is identical with the distribution of the 
random variable $X\left(V\right)=\tr\left(\psi V \psi V^\dagger\right)$, where $V$ - is Haar distributed unitary on $\mathbb{C}^{N+1}$ and $\psi$ is a pure state on this Hilbert space. The distribution of $X\left(V\right)$ is known (see for instance equation (9) in \cite{Zyczkowski2000}) and is given by
\begin{equation}\label{distrib}
p(X)=N\left(1 - X\right)^{N - 1} \ , \ X\in[0,1] \ .
\end{equation}

\emph{Lower bound.} Let us first derive the lower bound for the average of FI. Consider first the average of a single term in a sum \eqref{simpler}. For $\alpha>0$ we have the following chain of (in)equalities 
\begin{align}
\Expect_{U \sim \mu\left(\mathcal{\Hsym}_N\right)} \frac{f_n(U,\param)}{g_n(U,\param)}&\geq \int_{U\in\mathcal{G}_{-,\alpha}^n}\mathrm{d}\mu(U) \frac{f_n(U,\param)}{\alpha} +  \int_{U\in\mathcal{G}_{+,\alpha}^n}\mathrm{d}\mu(U) \frac{f_n(U,\param)}{g_n(U,\param)}\ \label{f1} \\
&= \frac{1}{\alpha} \int_{U\in\mathrm{SU}\left(\Hsym_N\right)}\mathrm{d}\mu(U) f_n(U,\param) -  \int_{U\in\mathcal{G}_{+,\alpha}^n}\mathrm{d}\mu(U) f_n(U,\param)\frac{(g_n(U,\param)-\alpha)}{g_n(U,\param)\alpha}\ \label{f2} \\ 
&\geq \frac{1}{\alpha} \int_{U\in\mathrm{SU}\left(\Hsym_N\right)}\mathrm{d}\mu(U) f_n(U,\param) - \frac{N^2}{\alpha} \int_{U\in\mathcal{G}_{+,\alpha}^n}\mathrm{d}\mu(U) g_n(U,\param)(g_n(U,\param)-\alpha)\ \label{f3} \\ 
&=  \frac{1}{\alpha}\int_{U\in\mathrm{SU}\left(\Hsym_N\right)}\mathrm{d}\mu(U) f_n(U,\param) - \frac{N^2}{\alpha} \int_{\alpha}^1 dX p(X)  X(X-\alpha) \  \label{f4} \\
&=  \frac{1}{\alpha}\int_{U\in\mathrm{SU}\left(\Hsym_N\right)}\mathrm{d}\mu(U) f_n(U,\param) - \frac{N^2(1-\alpha)^{N+1}(2+\alpha N)}{\alpha(1+N)(2+N)} \ \label{f5}\ .
\end{align}
In the above sequence of (in)equalities  \eqref{f1} follows form the definitions of sets $\mathcal{G}_{\pm,\alpha}^n$,  \eqref{f3} follows from the nonnegativity of $g_n(U,\param)-\alpha$ on $\mathcal{G}_{+,\alpha}^n$ and from \eqref{simplbound}. Equation \eqref{f4} follows form the definition of the random variable $X$ presented in the discussion above \eqref{distrib}. Finally equation \eqref{f5} follows directly form \eqref{distrib}. Summing up over $n$ we obtain the inequality 
\begin{equation} \label{almostfinlow}
\Expect_{U \sim \mu\left(\mathcal{\Hsym}_N\right)} \FI(U,\param) \geq \frac{1}{\alpha}\sum_{n=0}^N \left(\int_{U\in\mathrm{SU}\left(\Hsym_N\right)}\mathrm{d}\mu(U) f_n(U,\param)\right) - \frac{N^2(1-\alpha)^{N+1}(2+\alpha N)}{\alpha(2+N)} \ .
\end{equation}
Using the integration techniques analogous to the ones used in preceding sections it is possible to show that
\begin{equation}
\int_{U\in\mathrm{SU}\left(\Hsym_N\right)}\mathrm{d}\mu(U) f_n(U,\param)=\frac{\tr\left(-\left[D^{N}_n, \hat{J}_y \right]^2\right)}{(N+1)(N+2)} \ .
\end{equation}
Making use of the fact that $\tr\left(\hat{J}_y D^{N}_n\right)=0$ we obtain  
\begin{equation}\label{auxaverage}
\sum_{n=0}^N \left(\int_{U\in\mathrm{SU}\left(\Hsym_N\right)}\mathrm{d}\mu(U) f_n(U,\param)\right)= \sum_{n=0}^N \frac{2\tr\left(D^{N}_n, \hat{J}_{y}^2 \right)}{(N+1)(N+2)} = \frac{2\tr\left(\hat{J}_{y}^2\right)}{(N+1)(N+2)}=\frac{N}{6} \ ,
\end{equation}
In the last equality of \eqref{auxaverage} we have used \eqref{symmetric trace} and the fact that 
$\hat{J}_y$ 
originates in a single particle Hamiltonian satisfying $\tr\left(h^2\right)=\frac{1}{2}$. Plugging \eqref{auxaverage} to \eqref{almostfinlow} we obtain that for all $\alpha>0$ we obtain
\begin{equation}
\Expect_{U \sim \mu\left(\mathcal{\Hsym}_N\right)} \FI(U,\param) \geq \frac{N}{6\alpha} - \frac{N^2(1-\alpha)^{N+1}(2+\alpha N)}{\alpha(2+N)} \ .
\end{equation}
By setting $\alpha=\frac{\Delta}{N}$, where $\Delta$ is a fixed positive parameter, and by using the inequality $(1-\frac{\Delta}{N})^{N+1}\leq\exp\left(-\Delta\right)$  we obtain 
\begin{equation}\label{almostfinn}
\Expect_{U \sim \mu\left(\mathcal{\Hsym}_N\right)} \FI(U,\param) \geq \frac{N^2}{6\Delta} - \frac{N^2 \exp\left(-\Delta\right)\left(2+\Delta\right)}{\Delta} \ .
\end{equation}
Finding the maximal value of right hand side of \eqref{almostfinn} (treated as a function of $\Delta$) is difficult. Numerical investigation shows that the maximal value is obtained very close to $\Delta=6$ which finally gives
\begin{equation}
\Expect_{U \sim \mu\left(\mathcal{\Hsym}_N\right)} \FI(U,\param) \geq c_- N^2 \ ,
\end{equation}
where $c_-=\frac{1}{36}-\frac{4}{3e^5} \approx 0.0244$. 

\emph{Upper bound.} The proof of the upper bound of the average Fisher information is analogous. For $\alpha>0$ we have the following chain of (in)equalities 
\begin{align}
\Expect_{U \sim \mu\left(\mathcal{\Hsym}_N\right)} \frac{f_n(U,\param)}{g_n(U,\param)}&\leq \int_{U\in\mathcal{G}_{+,\alpha}^n}\mathrm{d}\mu(U) \frac{f_n(U,\param)}{\alpha} +  \int_{U\in\mathcal{G}_{-,\alpha}^n}\mathrm{d}\mu(U) \frac{f_n(U,\param)}{g_n(U,\param)}\  \label{g1} \\
&= \frac{1}{\alpha} \int_{U\in\mathrm{SU}\left(\Hsym_N\right)}\mathrm{d}\mu(U) f_n(U,\param) +  \int_{U\in\mathcal{G}_{-,\alpha}^n}\mathrm{d}\mu(U) f_n(U,\param)\frac{(\alpha-g_n(U,\param))}{g_n(U,\param)\alpha}\  \label{g2} \\ 
&\leq \frac{1}{\alpha} \int_{U\in\mathrm{SU}\left(\Hsym_N\right)} f_n(U,\param) + \frac{N^2}{\alpha} \int_{U\in\mathcal{G}_{-,\alpha}^n} g_n(U,\param)(\alpha-g_n(U,\param))\ \label{g3} \\ 
&=  \frac{1}{\alpha}\int_{U\in\mathrm{SU}\left(\Hsym_N\right)}\mathrm{d}\mu(U) f_n(U,\param) + \frac{N^2}{\alpha} \int_{0}^\alpha dX p(X)  X(\alpha-X) \label{g4} \\
&=  \frac{1}{\alpha}\int_{U\in\mathrm{SU}\left(\Hsym_N\right)}\mathrm{d}\mu(U) f_n(U,\param) + \frac{N^2\left(\alpha(2+N)+(1-\alpha)^{N+1}(2+\alpha N)-2\right)}{\alpha(1+N)(2+N)} \ \label{g5}\ .
\end{align}
In the above sequence of (in)equalities  \eqref{g1} follows form the definitions of sets $\mathcal{G}_{\pm,\alpha}^n$,  \eqref{g3} follows from the nonnegativity of $\alpha-g_n(U,\param)$ on $\mathcal{G}_{-,\alpha}^n$ and from \eqref{simplbound}. Equation \eqref{f4} follows form the definition of the random variable $X$ presented in the discussion above \eqref{distrib}. Finally equation \eqref{g5} follows directly form \eqref{distrib}. Summing up over $n$ we obtain the inequality 
\begin{equation} \label{almostfinup}
\Expect_{U \sim \mu\left(\mathcal{\Hsym}_N\right)} \FI(U,\param) \leq \frac{1}{\alpha}\sum_{n=0}^N \left(\int_{U\in\mathrm{SU}\left(\Hsym_N\right)}\mathrm{d}\mu(U) f_n(U,\param)\right) + \frac{N^2\left(\alpha(2+N)+(1-\alpha)^{N+1}(2+\alpha N)-2\right)}{\alpha(2+N)} \ .
\end{equation}
 Let  $\Delta$ be a fixed positive number. By setting $\alpha=\frac{\Delta}{N}$, and by using \eqref{auxaverage} we obtain the upper bound
\begin{equation} \label{almostfinup1}
\Expect_{U \sim \mu\left(\mathcal{\Hsym}_N\right)} \FI(U,\param) \leq \frac{N^2}{6\Delta} + \frac{N^2}{\Delta}\left(\Delta -2 +\left(\Delta+2\right)\exp\left(-\Delta\right) \right)+N\ .
\end{equation}
Finding the minimal value of right hand side of \eqref{almostfinup1} (treated as a function of $\Delta$) is difficult. Numerical investigation shows that the minimal value is obtained very close to $\Delta=1$. Inserting this to  \eqref{almostfinup1} gives
\begin{equation}
\Expect_{U \sim \mu\left(\mathcal{\Hsym}_N\right)} \FI(U,\param) \leq c_+ N^2 +N \ ,
\end{equation}
where $c_+=-\frac{5}{6}+\frac{3}{e} \approx 0.270$. 
\end{proof}

\section{Proofs of main theorems} \label{sec:finproofs}
In this section we use the technical results developed in the preceding parts of the Appendix to prove main theorems form the main manuscript. In the main text we have used, for the sake of simplicity, the $\Theta$ notation that allowed us to hide the presence of complicated constants in the concentration inequalities. In what follows we will present technical versions of these theorems giving explicitly all the relevant constants. Proofs of Theorems \ref{thm:moststatesareuseless},\ref{thm:symmetricstatesareuseful},\ref{thm:symmetriclosses}, and Example \ref{thm:depolensamble} are analogous in a sense that they all relay concentration inequalities \eqref{eq:su concentration} and on
\begin{itemize}
\item Upper bounds on the Lipschitz constants of the relevant functions on $\mathrm{SU}\left(\H\right)$;
\item Bounds or explicit values on the average of these functions on $\mathrm{SU}\left(\H\right)$.
\end{itemize}  
The proof of Theorem \ref{thm:inter_meas_fixed_phi} is slightly more complicated  and relies on the regularity of $\FI(U,\param)$ viewed as a function of the parameter $\param$. 

Let us start with a immediate corollary of \factref{concform} describing the concentration of measure on $\mathrm{SU}\left(\H\right)$. 
\begin{corr}\label{adjustedconcform}
Let $f:\SU\left(\H\right)\longmapsto\mathbb{R}$ be a function on $\SU\left(\H\right)$. Let $D=|\H|$ be the dimension of $\H$. Assume that the function $f$  with the Lipschitz constant $L$ satisfying $L\leq\tilde{L}$ for some nonnegative scalar $\tilde{L}$. 
Assume that the expectation value of $f$ is upper bounded as $\Expect_{U \sim \mu\left(\mathcal{\mathcal{H}}\right)}f\leq F_+$. Then, for every $\epsilon\geq0$ the following large deviation bound holds,
\begin{equation}\label{eq:upconc1}
\Pr_{U\sim \mu\left(\H\right)}\left(f\left(U\right)\geq F_++\epsilon\right)\leq\exp\left(-\frac{D\epsilon^{2}}{4\tilde{L}^{2}}\right) \ .
\end{equation}
Assume that the expectation value of $f$ is lower bounded as $\Expect_{U \sim \mu\left(\mathcal{\mathcal{H}}\right)}f\geq F_-$. Then, for every $\epsilon\geq0$ the following large deviation bound holds,
\begin{equation}\label{eq:lowconc2}
\Pr_{U\sim \mu\left(\H\right)}\left(f\left(U\right)\leq F_- -\epsilon\right)\leq\exp\left(-\frac{D\epsilon^{2}}{4\tilde{L}^{2}}\right) \ .
\end{equation}
\end{corr}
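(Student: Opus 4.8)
The plan is to derive Corollary~\ref{adjustedconcform} directly from the concentration inequalities of Fact~\ref{concform} (together with Fact~\ref{GenLip}) by two elementary manipulations: replacing the exact mean by a one-sided bound, and weakening the Lipschitz constant. First I would recall that for any $L$-Lipschitz function $f$ on $\SU(\H)$ one has, by Fact~\ref{concform} in the smooth case and by Fact~\ref{GenLip} in general, the tail estimates $\Pr_{U\sim\mu(\H)}(f(U)-\Expect_{U\sim\mu(\H)}f\geq\epsilon)\leq\exp(-D\epsilon^{2}/(4L^{2}))$ and $\Pr_{U\sim\mu(\H)}(f(U)-\Expect_{U\sim\mu(\H)}f\leq-\epsilon)\leq\exp(-D\epsilon^{2}/(4L^{2}))$, valid for every $\epsilon\geq0$, where $D=|\H|$.

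Next, for the upper bound I would observe that whenever $\Expect_{U\sim\mu(\H)}f\leq F_{+}$ the event $\{f(U)\geq F_{+}+\epsilon\}$ is contained in the event $\{f(U)-\Expect_{U\sim\mu(\H)}f\geq\epsilon\}$, since on the former we have $f(U)-\Expect_{U\sim\mu(\H)}f\geq F_{+}+\epsilon-\Expect_{U\sim\mu(\H)}f\geq\epsilon$. Monotonicity of probability then gives $\Pr_{U\sim\mu(\H)}(f(U)\geq F_{+}+\epsilon)\leq\exp(-D\epsilon^{2}/(4L^{2}))$. The symmetric argument, using the lower-tail estimate above and the hypothesis $\Expect_{U\sim\mu(\H)}f\geq F_{-}$, shows that $\{f(U)\leq F_{-}-\epsilon\}\subseteq\{f(U)-\Expect_{U\sim\mu(\H)}f\leq-\epsilon\}$ and hence $\Pr_{U\sim\mu(\H)}(f(U)\leq F_{-}-\epsilon)\leq\exp(-D\epsilon^{2}/(4L^{2}))$.

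Finally, since the map $x\mapsto\exp(-D\epsilon^{2}/(4x^{2}))$ is non-decreasing on $(0,\infty)$, the hypothesis $L\leq\tilde L$ yields $\exp(-D\epsilon^{2}/(4L^{2}))\leq\exp(-D\epsilon^{2}/(4\tilde L^{2}))$, which upgrades both bounds to the forms stated in \eqnref{eq:upconc1} and \eqnref{eq:lowconc2} in terms of $\tilde L$. This is convenient in the applications that follow, where only an upper bound $\tilde L$ on the Lipschitz constant (e.g.\ from Lemmas~\ref{important Lipschitz}--\ref{classicalFishlip}) and a one-sided estimate on the mean (e.g.\ from \lemref{averagesrem} or \lemref{boundavloss}) are available.

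There is no genuine obstacle here: the statement is a bookkeeping consequence of Fact~\ref{concform}. The only points requiring mild care are to keep the direction of the inequality correct when substituting the one-sided mean bound, so that the event inclusions run the right way, and to invoke Fact~\ref{GenLip} rather than Fact~\ref{concform} whenever the relevant functional of $U$ is Lipschitz but not smooth—as is the case, for instance, for the LU-optimised QFI of \eqnref{eq:auxiliary Fisher optim} treated in \lemref{important Lipschitz opt}.
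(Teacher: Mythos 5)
Your proof is correct and matches the paper's treatment: the paper presents this corollary as an immediate consequence of Fact~\ref{concform} without writing out the details, and your event-inclusion argument combined with the monotonicity of $x\mapsto\exp(-D\epsilon^{2}/(4x^{2}))$ is exactly the intended (and only natural) derivation. Your remark about invoking Fact~\ref{GenLip} for merely Lipschitz (non-smooth) functions such as the LU-optimised QFI is a worthwhile point of care that the paper also relies on implicitly.
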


We use \corrref{adjustedconcform} to prove technical versions of Theorems \ref{thm:moststatesareuseless},\ref{thm:symmetricstatesareuseful},\ref{thm:symmetriclosses} and Example \ref{thm:depolensamble} from the main text. 

\begin{thm}[Technical version of Theorem \ref{thm:moststatesareuseless} from the main text] 
\label{thm:moststatesareuselesstech}
Fix a single-particle Hamiltonian $h$, local dimension $d$ and a pure state $\psi_N$ on $\H_N$. 
Let $\QFI^{\mathrm{LU}}(U)\coloneqq\QFI^{\mathrm{LU}}(U\psi_N\,U^\dagger, H)$, then for every $\epsilon\geq0$
  \begin{equation}\label{eq:distinup}
    \Pr_{U \sim \mu\left(\H_N\right)}\left( \QFI^{\mathrm{LU}}(U) \geq 4N\|h \|^2 \left(1+\frac{(N-1)d^2}{\sqrt{d^N}}\right)+\epsilon \right) \leq \exp\left(-\frac{\epsilon^2 d^N}{4096 \|h \|^4 N^4} \right) \ ,
  \end{equation}
	  \begin{equation}\label{eq:distindown}
    \Pr_{U \sim \mu\left(\H_N\right)}\left( \QFI^{\mathrm{LU}}(U) \leq \frac{4N\tr(h^2)d^N}{d(d^N+1)}-\epsilon \right) \leq \exp\left(-\frac{\epsilon^2 d^N}{4096 \|h \|^4 N^4} \right) \ .
  \end{equation}
Setting $\epsilon=2N\|h \|^2 \left(1+\frac{(N-1)d^2}{\sqrt{d^N}}\right)$ and $\epsilon=\frac{2N\tr(h^2)d^N}{d(d^N+1)}$ in \eqref{eq:distinup} and \eqref{eq:distindown} respectively yields Theorem \ref{thm:moststatesareuseless}.
\end{thm}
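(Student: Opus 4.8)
The plan is to derive both tail bounds directly from the concentration inequality of \corrref{adjustedconcform}, applied to the function $F^{\mathrm{LU}}\colon U \mapsto \QFI^{\mathrm{LU}}(U\,\psi_N\,U^\dagger, H)$ on $\SU(\H_N)$ with $D = |\H_N| = d^N$ (invoking \factref{GenLip}, since the LU-supremum need not be smooth). This requires only two inputs: a uniform upper bound $\tilde L$ on the Lipschitz constant of $F^{\mathrm{LU}}$, and an upper bound $F_+$ together with a lower bound $F_-$ on its Haar mean. Granting these, \eqref{eq:distinup} is exactly \eqref{eq:upconc1} with this $F_+$ and $\tilde L$, \eqref{eq:distindown} is \eqref{eq:lowconc2} with $F_-$ and $\tilde L$, and the statement of \thmref{thm:moststatesareuseless} follows by taking $\epsilon = F_+/2$ in the first and $\epsilon = F_-/2$ in the second, because then both $F_- - \epsilon$ and $F_+ + \epsilon$ are of order $N$.

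For the Lipschitz constant I would first use \lemref{important Lipschitz opt}, which bounds the Lipschitz constant of the LU-optimised QFI on an isospectral orbit by that of the plain isospectral QFI; then the crude branch of \lemref{important Lipschitz} (discarding the Bures-distance refinement, which is useless for pure states) gives $L \le 32\,\|H\|^2 \le 32\,N^2\,\|h\|^2$, using $\|H\| = \|\sum_j h^{(j)}\| \le N\,\|h\|$. Setting $\tilde L = 32\,N^2\,\|h\|^2$ makes $4\,\tilde L^2 = 4096\,\|h\|^4\,N^4$, which is precisely the constant appearing in the exponents of \eqref{eq:distinup}--\eqref{eq:distindown}.

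The upper bound on the mean is where the real work is, and I would follow the sketch already indicated for \thmref{thm:moststatesareuseless}. From \eqref{eq:QFI_formula_pure} one has $\QFI(\psi,H) \le 4\,\tr(\psi\,H^2)$; expanding $H^2 = \sum_{j,k} h^{(j)} h^{(k)}$ after conjugation by a local unitary $V = V_1\otimes\cdots\otimes V_N$, the $N$ terms with $j = k$ each contribute at most $\|h\|^2$, while for $j\neq k$ the operator $V_j h^{(j)} V_j^\dagger\,V_k h^{(k)} V_k^\dagger$ is supported on particles $j, k$, is traceless (since $\tr h = 0$), and has operator norm at most $\|h\|^2$; so, tracing out the remaining particles, each $j\neq k$ term is bounded by $\|h\|^2\,\|\tr_{\neg j,k}(U\,\psi_N\,U^\dagger) - \1/d^2\|_1$, \emph{uniformly in} the $V_j$. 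Hence $\sup_{\mathrm{LU}}$ is controlled by this quantity, and after taking the Haar expectation over $U$ and inserting the marginal estimate $\Expect_U \|\tr_{\neg j,k}(U\,\psi_N\,U^\dagger) - \1/d^2\|_1 \le d^2/\sqrt{d^N}$ from \refcite{Popescu2006}, one gets $F_+ = 4N\,\|h\|^2\bigl(1 + (N-1)\,d^2/\sqrt{d^N}\bigr)$.

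Finally, for the lower bound on the mean I would just use $\1 \in \mathrm{LU}$, so that $F^{\mathrm{LU}}(U) \ge \QFI(U\,\psi_N\,U^\dagger, H)$, whose Haar average is computed \emph{exactly} in \lemref{averagesrem}, \eqnref{eq:distingav}, with $\Lambda(\{p_j\}_j) = 1$ for pure states, giving $F_- = 4N\,\tr(h^2)\,d^N/\bigl(d\,(d^N + 1)\bigr)$. Plugging $F_+$, $F_-$, $\tilde L$ and $D = d^N$ into \corrref{adjustedconcform} then yields \eqref{eq:distinup} and \eqref{eq:distindown}. I expect the only genuinely delicate step to be the off-diagonal estimate in the mean: one has to make sure the trace-norm bound does not depend on the $V_j$, so that $\sup_{\mathrm{LU}}$ can be passed through before averaging, and that the Page/Popescu--Short--Winter bound on two-particle reduced density matrices of random pure states is invoked correctly; everything else is routine constant-tracking.
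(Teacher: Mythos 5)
Your proposal is correct and follows essentially the same route as the paper's own proof: the Lipschitz bound $\tilde L = 32\|H\|^2 \le 32N^2\|h\|^2$ via Lemmas~\ref{important Lipschitz} and~\ref{important Lipschitz opt}, the upper bound on the mean from the marginal estimate \eref{eq:ineq1}, the lower bound from the unoptimised average \eref{eq:distingav} with $\Lambda=1$, and Corollary~\ref{adjustedconcform} with $D=d^N$. The only addition worth noting is your (correct) remark that one should invoke Fact~\ref{GenLip} because the LU-supremum need not be smooth, which the paper leaves implicit.
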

\begin{proof}
The proof of \thmref{thm:moststatesareuselesstech} follows directly from \corrref{adjustedconcform} and results proved previously. From \lemref{important Lipschitz} and \ref{important Lipschitz opt} one can infer that the Lipschitz constant of $F^{\mathrm{LU}}$ is upper bounded by $\tilde{L}=32\|H\|^2=32N^2 \|h\|^2$. From \eqref{eq:ineq1} we have the upper bound on $\Expect_{U \sim \mu\left(\mathcal{\mathcal{H}}\right)}\QFI^{\mathrm{LU}}$. Using this bound in \eqref{eq:upconc1} gives \eqref{eq:distinup}. The lower bound $\Expect_{U \sim \mu\left(\mathcal{\mathcal{H}}\right)}\QFI^{\mathrm{LU}}$ can be obtained by noting that the unoptimized QFI is a lower bound to its optimized version. Therefore
\begin{equation}\label{eqaux1}
\Expect_{U \sim \mu\left(\mathcal{\mathcal{H}}\right)}\QFI^{\mathrm{LU}}\geq \Expect_{U \sim \mu\left(\mathcal{\mathcal{H}}\right)} \QFI(U\psi_N\,U^\dagger, H)= \frac{4N\tr(h^2)d^N}{d(d^N+1)} \ ,
\end{equation}
where in the last equality we used \eqref{eq:distingav}. Plugging \eqref{eqaux1} in \eqref{eq:lowconc2} yields \eqref{eq:distindown}.
\end{proof}

\begin{thm}[Technical version of Theorem \ref{thm:symmetricstatesareuseful} from the main text] 
\label{thm:symmetricstatesareusefultech}
Fix a single-particle Hamiltonian $h$, local dimension $d$ and a state $\sigma_N$ from the symmetric subspace $\Hsym_N$  with eigenvalues $\{p_j\}_j$. Let $\sigma_{\mathrm{mix}}$ be the maximally mixed state on $\Hsym_N$. Let $\QFI(U) \!\coloneqq\! \QFI(U\,\sigma_N\,U^\dagger, H)$, then for every $\epsilon\geq0$
  \begin{equation}\label{isospectralboundex}
      \Pr_{U \sim \mu\left(\Hsym_N\right)} \left( \QFI\left(U\right) \leq \dB\left(\sigma_N,\sigma_{\mathrm{mix}}\right)^2 \frac{2N(N+d)\tr(h^2)}{d(d+1)}\frac{|\Hsym_N|^2}{|\Hsym_N|^2-1} -\epsilon \right)\leq \exp\left( - \frac{\epsilon^2 |\Hsym_N|}{4096 C\|h \|^4 N^4}) \right) \ ,
  \end{equation}
where $|\Hsym_N|=\binom{N+d-1}{N}$ and $C=\min\left\{1,8\dB\left(\sigma_N,\sigma_{\mathrm{mix}}\right) \right\}$.
Setting $\epsilon=\dB\left(\sigma_N,\sigma_{\mathrm{mix}}\right)^2 \frac{N(N+d)\tr(h^2)}{d(d+1)}\frac{|\Hsym_N|^2}{|\Hsym_N|^2-1}$ in \eqref{isospectralboundex} yields Theorem \ref{thm:symmetricstatesareuseful}.
\end{thm}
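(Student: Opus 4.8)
The plan is to follow exactly the three-step template already used in the proof of \thmref{thm:moststatesareuselesstech}: first bound the Lipschitz constant of the function $F_{\Omega,H}(U)=\QFI\left(U\,\sigma_N\,U^\dagger,H\right)$ of \eqnref{eq:auxiliary Fisher} on $\SU\left(\Hsym_N\right)$; then lower-bound its Haar average $\Expect_{U\sim\mu\left(\Hsym_N\right)}F_{\Omega,H}$; and finally feed both bounds into the one-sided deviation inequality \eqref{eq:lowconc2} of \corrref{adjustedconcform}. All the analytic input is already available --- \lemref{important Lipschitz}, \lemref{averagesrem}, and \factref{christian bound} --- so the remaining work is essentially algebraic bookkeeping.

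For the average, \eqnref{eq:symav} of \lemref{averagesrem} gives
\begin{equation}
\Expect_{U\sim\mu\left(\Hsym_N\right)}F_{\Omega,H}(U)=\frac{4N(N+d)\tr(h^2)}{d(d+1)}\,\frac{|\Hsym_N|}{|\Hsym_N|+1}\,\Lambda(\{p_j\}_j),
\end{equation}
so everything reduces to lower-bounding the spectral factor $\Lambda$ defined in \eqref{lambda}. By \factref{christian bound} the relevant eigenvalue sum is at least $2\,|\Hsym_N|\,[1-\mathcal{F}^2(\sigma_N,\sigma_{\mathrm{mix}})]$, and since $\mathcal{F}\leq1$ forces $\mathcal{F}^2\leq\mathcal{F}$ we get $1-\mathcal{F}^2\geq1-\mathcal{F}=\tfrac12\,\dB(\sigma_N,\sigma_{\mathrm{mix}})^2$; plugging this into \eqref{lambda} yields $\Lambda(\{p_j\}_j)\geq\frac{|\Hsym_N|}{|\Hsym_N|-1}\,\frac{\dB(\sigma_N,\sigma_{\mathrm{mix}})^2}{2}$. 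Substituting back gives
\begin{equation}
\Expect_{U\sim\mu\left(\Hsym_N\right)}F_{\Omega,H}(U)\;\geq\;\dB(\sigma_N,\sigma_{\mathrm{mix}})^2\,\frac{2N(N+d)\tr(h^2)}{d(d+1)}\,\frac{|\Hsym_N|^2}{|\Hsym_N|^2-1}\;=:\;F_-,
\end{equation}
which is precisely the threshold appearing in \eqref{isospectralboundex}.

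For the Lipschitz constant I would invoke \lemref{important Lipschitz} with $\H=\Hsym_N$ (so that $D=|\Hsym_N|$ and $\1/D=\sigma_{\mathrm{mix}}$) together with the crude bound $\|H\|\leq N\|h\|$, obtaining $L_\Omega\leq\min\{1,2\sqrt2\,\sqrt{\dB(\sigma_N,\sigma_{\mathrm{mix}})}\}\cdot32\,N^2\|h\|^2$, i.e.\ $L_\Omega^2\leq1024\,C\,N^4\|h\|^4$ with $C=\min\{1,8\,\dB(\sigma_N,\sigma_{\mathrm{mix}})\}$. Inserting $F_-$ and $\tilde L^2=1024\,C\,N^4\|h\|^4$ into \eqref{eq:lowconc2} of \corrref{adjustedconcform} immediately produces \eqref{isospectralboundex}; the choice $\epsilon=F_-/2$ then recovers the compact $\Theta$-form stated as \thmref{thm:symmetricstatesareuseful}, with the exponent $\dB(\sigma_N,\sigma_{\mathrm{mix}})^3\,\Theta(N^{d-1})$ arising from $F_-^2/(C\,N^4)=\Theta(\dB(\sigma_N,\sigma_{\mathrm{mix}})^4/\min\{1,8\dB(\sigma_N,\sigma_{\mathrm{mix}})\})=\Theta(\dB(\sigma_N,\sigma_{\mathrm{mix}})^3)$ and $|\Hsym_N|\in\Theta(N^{d-1})$.

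I expect the main obstacle to be purely in the bookkeeping of the spectral reduction: one has to verify that $\Lambda(\{p_j\}_j)$, which a priori is a complicated function of the whole spectrum, collapses cleanly to a function of the single scalar $\dB(\sigma_N,\sigma_{\mathrm{mix}})$, and that the several $|\Hsym_N|/(|\Hsym_N|\pm1)$ factors coming from \lemref{averagesrem} and from the bound on $\Lambda$ combine into the single factor $|\Hsym_N|^2/(|\Hsym_N|^2-1)$. The only non-mechanical step is the elementary fidelity-to-Bures estimate $1-\mathcal{F}^2\geq1-\mathcal{F}$; everything else is substitution into results already established earlier in the Appendix.
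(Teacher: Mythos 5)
Your proposal is correct and follows essentially the same three-step route as the paper: the Lipschitz bound from \lemref{important Lipschitz}, the average from \eqnref{eq:symav} combined with \factref{christian bound}, and the one-sided concentration inequality \eqref{eq:lowconc2}. Your spelled-out estimate $1-\mathcal{F}^2\geq 1-\mathcal{F}=\tfrac12\dB(\sigma_N,\sigma_{\mathrm{mix}})^2$ is in fact the correct reading of the step the paper attributes to Fuchs--van de Graaf (which is stated there with the inequality reversed), and all constants match.
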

\begin{proof}
The proof is analogous to the proof of \thmref{thm:moststatesareuselesstech}.  From \lemref{important Lipschitz} we infer that the Lipschitz constant of $\QFI(U)$ is upper bounded by $\tilde{L}=32\|H\|^2\min\left\{1,8\dB\left(\sigma_N,\sigma_{\mathrm{mix}}\right) \right\}=32N^2 \|h\|^2\min\left\{1,8\dB\left(\sigma_N,\sigma_{\mathrm{mix}}\right) \right\}$. From equation \eqref{eq:symav} in \lemref{averagesrem} we get
\begin{equation}\label{eq:distingav1}
\Expect_{U \sim \mu\left(\Hsym_N\right)}\QFI(U)=\frac{4N(N+d)\tr(h^2)}{d(d+1)}\frac{|\Hsym_N|}{|\Hsym_N|+1}\Lambda(\{p_j\}_j) \ .
\end{equation} 
Using the inequality \eqref{eq:important bound} and the Fuch-van de Graaf inequality \cite{Fuchs1999},
$1-\mathcal{F}^2(\sigma_N,\sigma_{\mathrm{mix}})\leq\frac{1}{2}\dB\left(\sigma_N,\sigma_{\mathrm{mix}}\right)$, we obtain 
  \begin{equation}\label{eq:funnyineq1}
    \Lambda(\{p_j\}_j) \geq \frac{|\Hsym_N|}{2(|\Hsym_N|-1)} \, \dB\left(\sigma_N,\sigma_{\mathrm{mix}}\right)^2  \ .
  \end{equation} 
Inserting this inequality into \eqref{eq:distingav1} gives
\begin{equation}\label{eq:funnyineq2}
\Expect_{U \sim \mu\left(\mathcal{H}_N\right)}\QFI(U)\geq\dB\left(\sigma_N,\sigma_{\mathrm{mix}}\right)^2 \frac{2N(N+d)\tr(h^2)}{d(d+1)}\frac{|\Hsym_N|^2}{|\Hsym_N|^2-1} \ ,
\end{equation}
which together with the bound on the Lipschitz constant of $\QFI(U)$ and \corrref{adjustedconcform} allows us to conclude \eqref{isospectralboundex}.
\end{proof}

\begin{exa}[Technical version of Example \ref{thm:depolensamble} from the main manuscript]
\label{thm:depolensambletech} 
Fix a local dimension $d$, single particle Hamiltonian $h$, and $p\in\left[0,1\right]$.
Let $\psi_N$ be a pure state on $\Hsym_N$ and set
\begin{equation}\label{depol1}
 \sigma_N(p) = (1-p)\,\psi_N + p\, \sigma_{\mathrm{mix}} \ .
\end{equation}
Let $F_p\left(U\right) \coloneqq \QFI(U\,\sigma_N(p)\,U^\dagger, H)$, then for every $\epsilon>0$
\begin{equation}\label{depolarconc1}
 \Pr_{U \sim \mu\left(\Hsym_N\right)}\left( |\QFI_p(U) - \Expect_{U \sim \mu\left(\mathcal{\Hsym}_N\right)} F_p| \geq \epsilon \Expect_{U \sim \mu\left(\mathcal{\Hsym}_N\right)} F_p \right) \leq 2\exp\left(-\frac{\epsilon^2 \tr(h^2)^2 (N+d)^2 |\Hsym_N|^2}{64\|h\|^4\left(d(d+1)N(1+|\Hsym_N|)\right)^2}|\Hsym_N|\right) \ ,
\end{equation}
where $|\Hsym_N|=\binom{N+d-1}{N}$ and
\begin{equation}
\label{eq:lambda_dep1}
 \Expect_{U \sim \mu\left(\mathcal{\Hsym}_N\right)} F_p =\frac{4N(N+d)\tr(h^2)}{d(d+1)}\frac{|\Hsym_N|}{|\Hsym_N|+1} \frac{(1-p)^2}{(1-p+2\,p/|\Hsym_N|)} \ .
\end{equation}
Equation \eqref{eq:lambda_dep1} implies \exaref{thm:depolensamble} as for fixed local dimension $d$ we have $|\Hsym_N|\in\Theta\left(N^{d-1}\right)$.
\end{exa}
\begin{proof}[Sketch of the proof]
The proof of \exaref{thm:depolensambletech} parallels  proofs of \thmref{thm:moststatesareuselesstech} and \ref{thm:symmetricstatesareusefultech} and relies on  \factref{concform}. The bound of the Lipschitz constant of $\QFI_{p}(U)$ is provided by \lemref{important Lipschitz opt}. The expression the average of $\QFI_{p}(U)$ is given in \lemref{averagesrem}. The inequality \eqref{depolarconc1} follows directly from concentration inequalities from \factref{concform} by setting $\epsilon=\tilde{\epsilon} \Expect_{U \sim \mu\left(\mathcal{\Hsym}_N\right)} F_p$. 
\end{proof}

\begin{thm}[Technical version of Theorem 3 from the main manuscript] \label{thm:symmetriclossestech}
  Fix a single particle Hamiltonian $h$, local dimension $d$, nonngative integer $k$ and a state $\sigma_N$ on $\Hsym_N$ with eigenvalues $\{p_j\}_j$.  Let $\sigma_{\mathrm{mix}}$ be the maximally mixed state on $\Hsym_N$.
  Let $\QFI_k\left(U\right)\coloneqq\QFI(\tr_k\left(U\,\sigma_N\,U^\dagger\right), H_{N-k})$, then for every $\epsilon\geq0$
  \begin{equation}\label{losspar3}
   \Pr_{U \sim \mu\left(\Hsym_N\right)}\left( \QFI_k\left(U\right) \leq 2\frac{(N-k)(N+d)}{(d+1)(d+k)}\cdot\frac{|\Hsym_N|(|\Hsym_N|\tr \rho_N^2-1)\tr h^2}{|\Hsym_k|(|\Hsym_N|^2-1)}-\epsilon\right) \leq \exp\left( - \frac{\epsilon^2 |\Hsym_N|}{4096 C\|h \|^4 (N-k)^4}\right)\ \ ,
  \end{equation}
	where $|\Hsym_N|=\binom{N+d-1}{N}$ and $C=\min\left\{1,8\dB\left(\sigma_N,\sigma_{\mathrm{mix}}\right) \right\}$. Setting $\epsilon=\frac{(N-k)(N+d)}{(d+1)(d+k)}\cdot\frac{|\Hsym_N|(|\Hsym_N|\tr \rho_N^2-1)\tr h^2}{|\Hsym_k|(|\Hsym_N|^2-1)}$ in \eqref{losspar3} yields Theorem 3.
\end{thm}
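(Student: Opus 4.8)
The plan is to mirror exactly the proof of \thmref{thm:symmetricstatesareusefultech}, replacing the QFI of the isospectral state by the QFI of its partially traced version $\QFI_k(U)=\QFI(\tr_k(U\,\sigma_N\,U^\dagger),H_{N-k})$. Concentration on $\SU(\Hsym_N)$ will again follow from \corrref{adjustedconcform}, so I only need two ingredients: an upper bound on the Lipschitz constant of the function $U\mapsto\QFI_k(U)$, and a lower bound on its Haar average.

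First I would invoke \lemref{important Lipschitz kloss}, which bounds the Lipschitz constant of $F^{[k]}_\Omega$ by $\min\{1,2\sqrt2\sqrt{\dB(\sigma_N,\mathbbm{P}_{\mathrm{sym}}^N/|\Hsym_N|)}\}\cdot 32\,\|H_{N-k}\|^2$. Since $\|H_{N-k}\|\le (N-k)\|h\|$ and $\mathbbm{P}_{\mathrm{sym}}^N/|\Hsym_N|=\sigma_{\mathrm{mix}}$, this gives $\tilde L = 32(N-k)^2\|h\|^2\,C$ with $C=\min\{1,8\dB(\sigma_N,\sigma_{\mathrm{mix}})\}$ (using $2\sqrt2\sqrt{\dB}\le 8\dB$ when $\dB\le 1$, and the trivial bound $1$ otherwise, exactly as in the proof of \thmref{thm:symmetricstatesareusefultech}). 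Squaring, $\tilde L^2 = 1024\,(N-k)^4\|h\|^4\,C^2$, and $C^2\le C$ since $C\le 1$; plugging into the exponent $D\epsilon^2/(4\tilde L^2)$ with $D=|\Hsym_N|$ yields precisely the denominator $4096\,C\,\|h\|^4(N-k)^4$ appearing in \eqnref{losspar3}.

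Next I would supply the lower bound on $\Expect_{U\sim\mu(\Hsym_N)}\QFI_k(U)$. This is exactly \lemref{boundavloss}, which states
\begin{equation}
\Expect_{U \sim \mu\left(\Hsym_N\right)} \QFI\left(\tr_k\left(U\rho_N U^\dagger \right),H_{N-k}\right)\geq 2\frac{(N-k)(N+d)}{(d+1)(d+k)}\cdot\frac{|\Hsym_N|(|\Hsym_N|\tr \rho_N^2-1)\tr h^2}{|\Hsym_k|(|\Hsym_N|^2-1)} \ ,
\end{equation}
so I set $F_-$ equal to this right-hand side. Applying \eqnref{eq:lowconc2} of \corrref{adjustedconcform} with this $F_-$ and the $\tilde L$ above gives \eqnref{losspar3} directly. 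Finally, choosing $\epsilon$ to be half of $F_-$, i.e. $\epsilon=\frac{(N-k)(N+d)}{(d+1)(d+k)}\cdot\frac{|\Hsym_N|(|\Hsym_N|\tr\rho_N^2-1)\tr h^2}{|\Hsym_k|(|\Hsym_N|^2-1)}$, collapses the bound into the $\Theta$-form of \thmref{thm:symmetriclosses}; here one uses $|\Hsym_N|\tr\rho_N^2-1 = |\Hsym_N|\bigl(\|\sigma_N\|_{\mathrm{HS}}^2-1/|\Hsym_N|\bigr)$ together with $\|\sigma_N-\sigma_{\mathrm{mix}}\|_{\mathrm{HS}}^2 = \|\sigma_N\|_{\mathrm{HS}}^2-1/|\Hsym_N|$, and $|\Hsym_N|\in\Theta(N^{d-1})$, $|\Hsym_k|\in\Theta(k^{d-1})$, $(d+k)\in\Theta(k)$ for the asymptotic scaling $\Theta(N^2/k^d)$ inside the probability and $\Theta(N^{d-1}/k^{2d})$ in the exponent of \eqnref{losspar}.

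The only genuinely nontrivial inputs are \lemref{important Lipschitz kloss} and \lemref{boundavloss}, both already established in the Appendix; the assembly here is routine. The one point requiring a little care is the bookkeeping of constants when translating $\tilde L^2$ into the stated denominator and when converting $|\Hsym_N|\tr\rho_N^2-1$ into the Hilbert--Schmidt distance $\|\sigma_N-\sigma_{\mathrm{mix}}\|_{\mathrm{HS}}^2$ — this is where the main (if modest) obstacle lies, since one must verify that the quadratic factor $(|\Hsym_N|\tr\rho_N^2-1)$ is nonnegative (true because $\tr\rho_N^2\ge 1/|\Hsym_N|$) so that halving $F_-$ is legitimate.
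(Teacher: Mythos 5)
Your proposal is correct and follows exactly the route the paper takes: \lemref{important Lipschitz kloss} for the Lipschitz constant, \lemref{boundavloss} for the lower bound on the Haar average, and \corrref{adjustedconcform} to assemble the deviation bound, with the same bookkeeping of constants (including the observation that squaring $\min\{1,2\sqrt2\sqrt{\dB}\}$ yields $C=\min\{1,8\dB\}$ in the denominator $4096\,C\|h\|^4(N-k)^4$). Nothing is missing.
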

\begin{proof}[Sketch of the proof]
The proof of \thmref{thm:symmetriclossestech} parallels  proofs of \thmref{thm:moststatesareuselesstech} and \ref{thm:symmetricstatesareusefultech} and relies on \corrref{adjustedconcform}. The bound of the Lipschitz constant of $\QFI_{k}(U)$ is provided by \lemref{important Lipschitz kloss}. The lower bound for the average of $\QFI_{k}(U)$ is given in \lemref{boundavloss}. 
\end{proof}

\begin{thm}[Technical version of Theorem \ref{thm:inter_meas_fixed_phi} from the main manuscript]
\label{thm:inter_meas_fixed_phi1}
Let $\psi_N$ be a fixed pure state on $\Hsym_N$ with $d=2$ bosonic modes. Let $p_{n|\param}(U\psi_NU^\dagger)$ the probability to obtain outcome $n$ in the interferometric scheme defined in Section~\ref{sec:phys}, given that the value of the unknown phase parameter is $\param$ and the input state was $U\psi_NU^\dagger$ (see also \eqref{eq:p_n}). Let $\FI(U,\param) \coloneqq \FI(\left\{p_{n|\param}(U\psi_N U^\dagger)\right\})$ be the corresponding FI according to \eqref{eq:FI_p_n} (or \eqref{eq:FIapp}). Then, for every $\epsilon\geq0$ and every $\param\in[0,2\pi]$ we have
\begin{align}
\Pr_{U \sim \mu\left(\Hsym_N\right)}\left(  \FI(U,\param) \leq \Expect_{U \sim \mu\left(\mathcal{\Hsym}_N\right)} \FI(U,\param)-\epsilon   \right) &\leq \mathrm{exp}\left(-\frac{\epsilon^2}{144N^4 } (N+1) \right)\ , \label{eq:clFishpup} \\ ,
\Pr_{U \sim \mu\left(\Hsym_N\right)}\left(  \FI(U,\param) \geq \Expect_{U \sim \mu\left(\mathcal{\Hsym}_N\right)} \FI(U,\param)+\epsilon   \right) &\leq \mathrm{exp}\left(-\frac{\epsilon^2}{144N^4 } (N+1) \right)\ \label{eq:clFishpdown} .  
\end{align}
In the equations above $\Expect_{U \sim \mu\left(\mathcal{\Hsym}_N\right)} \FI(U,\param)$ satisfies inequalities
\begin{equation}\label{inequal}
c_- N^2\leq \Expect_{U \sim \mu\left(\mathcal{\Hsym}_N\right)}\FI(U,\param) \leq\ c_+ N^2+N\ , 
\end{equation}
where
\begin{equation}
c_-=\frac{1}{36}-\frac{4}{3e^5} \approx 0.0244  \ ,\ c_+=-\frac{5}{6}+\frac{3}{e} \approx 0.270 \ .
\end{equation}
Moreover, we have the following inequality 
\begin{equation}\label{clmainres}
\Pr_{U \sim \mu\left(\Hsym_N\right)}\left( \exists \param\in[0,2\pi]\   \FI(U,\param) \leq \frac{c_-}{4} N^2   \right)\leq \left\lceil  \frac{12 \pi N}{c_-} \right\rceil\mathrm{exp}\left(-\frac{c_{-}^2}{576} (N+1) \right)\ , 
\end{equation}
where $\left\lceil x \right\rceil$ stands for the smallest integer not less than $x$.
Equation \eqref{clmainres} yields exactly \eqref{eq:classfishtyp} from Theorem \ref{thm:inter_meas_fixed_phi}.
\end{thm}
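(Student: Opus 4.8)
\textbf{Proof plan for \thmref{thm:inter_meas_fixed_phi1}.}
The plan is to establish the two pointwise concentration bounds \eqref{eq:clFishpup}--\eqref{eq:clFishpdown}, then the averaged bounds \eqref{inequal}, and finally to upgrade the pointwise-in-$\param$ statement to a uniform-in-$\param$ statement \eqref{clmainres} via a net argument over $[0,2\pi]$ combined with a Lipschitz-in-$\param$ estimate. First, for \emph{fixed} $\param$, I would view $\FI(U,\param)$ as a function of $U\in\SU\left(\Hsym_N\right)$ and invoke \lemref{classicalFishlip}, which gives a Lipschitz constant bounded by $24\|H\|^2$ with respect to $g_{\mathrm{HS}}$; here $H=\hat J_z$ (equivalently $\hat J_y$ after the beam-splitter rotation), which originates from a single-particle Hamiltonian with $\|h\|=1/2$, so $\|H\|=N/2$ and the Lipschitz constant is at most $24\,N^2/4=6N^2$. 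Plugging $\tilde L=6N^2$ and $D=|\Hsym_N|=N+1$ into the concentration inequalities of \factref{concform} (or \corrref{adjustedconcform}) yields precisely the exponent $-\epsilon^2(N+1)/(4\cdot 36 N^4)=-\epsilon^2(N+1)/(144N^4)$ of \eqref{eq:clFishpup}--\eqref{eq:clFishpdown}, after observing that the average is $\param$-independent by the unitary-invariance argument in \secref{sec:phys}. The inequalities \eqref{inequal} are exactly \lemref{avclassical} from \appref{sec:averagefifortwomodes}.

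Next, for the uniform statement, I would first prove that $\param\mapsto\FI(U,\param)$ is Lipschitz on $[0,2\pi]$ with a Lipschitz constant that is polynomial in $N$, uniformly in $U$. Since $\psi_N(\param)=\e^{-\ii\hat J_z\param}\psi_N\e^{\ii\hat J_z\param}$ and $\|\hat J_z\|=N/2$, both the numerators $f_n(U,\param)$ and denominators $g_n(U,\param)$ in the expression \eqref{simpler} are smooth in $\param$ with derivatives controlled by powers of $N$; using the bounds \eqref{eq:posineq}, \eqref{diffbound1}, \eqref{diffbound} from the proof of \lemref{classicalFishlip} one can bound $|\partial_\param \FI(U,\param)|$ by a constant times $N^3$ (a crude bound of the form $C N^3$ suffices). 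Then I would introduce a uniform net $\{\param_i\}_{i=1}^{M}$ of $[0,2\pi]$ with spacing $2\pi/M$; choosing $M=\lceil 12\pi N/c_-\rceil$ makes the net fine enough that the Lipschitz-in-$\param$ estimate guarantees $|\FI(U,\param)-\FI(U,\param_i)|\le c_- N^2/4$ for the nearest net point $\param_i$ — this is where the specific constant $\lceil 12\pi N/c_-\rceil$ in \eqref{clmainres} comes from. Consequently, if $\FI(U,\param)\le (c_-/4)N^2$ for some $\param$, then $\FI(U,\param_i)\le (c_-/2)N^2\le \Expect\FI(U,\param_i)-(c_-/2)N^2$ for the nearest net point, using the lower bound $\Expect\FI\ge c_- N^2$ from \eqref{inequal}.

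Finally, I would apply the union bound over the $M$ net points: the probability of the bad event is at most $M$ times the single-point probability from \eqref{eq:clFishpup} with $\epsilon=(c_-/2)N^2$, giving $M\exp\!\left(-\tfrac{(c_-/2)^2 N^4}{144 N^4}(N+1)\right)=M\exp\!\left(-\tfrac{c_-^2}{576}(N+1)\right)$, which with $M=\lceil 12\pi N/c_-\rceil$ is exactly \eqref{clmainres}. Since $M$ grows only polynomially in $N$ while the exponential factor decays in $N$, the right-hand side is $\exp(-\Theta(N))$, which is \eqref{eq:classfishtyp} of \thmref{thm:inter_meas_fixed_phi}. The main obstacle I anticipate is the Lipschitz-in-$\param$ bound: one must be careful that the denominators $g_n(U,\param)$ can be small, so a naive derivative of $f_n/g_n$ blows up; the resolution, as in \lemref{classicalFishlip}, is that the numerator structure ensures $f_n\le N^2 g_n^2$ so the ratio and its $\param$-derivative stay bounded by polynomials in $N$ even where individual $g_n$ are tiny, and the singularities are removable. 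Getting this estimate cleanly, with an explicit polynomial constant, is the technically delicate part; everything else is an assembly of the lemmas already proven in the Appendix.
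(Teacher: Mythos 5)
Your proposal is correct and follows essentially the same route as the paper's proof: Lemma~\ref{classicalFishlip} with $\|H\|=\|\hat J_z\|=N/2$ gives $\tilde L=6N^2$ and hence the exponent $-\epsilon^2(N+1)/(144N^4)$ via Fact~\ref{concform}, the average bounds are Lemma~\ref{avclassical}, and the uniform statement is obtained exactly as you describe — the $\param$-derivative bound $|\partial_\param\FI(U,\param)|\le 24\|H\|^3=3N^3$ (which the paper extracts directly from the same estimate \eqref{eq:laststep} underlying Lemma~\ref{classicalFishlip}, with $X=H$, so the small-denominator issue you flag is already handled there), a net of $M=\lceil 12\pi N/c_-\rceil$ points, and a union bound with $\epsilon=(c_-/2)N^2$. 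No gaps.
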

\begin{proof}
Equations \eqref{eq:clFishpup} and \eqref{eq:clFishpdown} follow directly from \factref{concform} and the bounds of the Lipschitz constant of $\FI(U,\param)$, treated as a function of $U$ (for fixed $\param$), given in \lemref{classicalFishlip}. From \lemref{classicalFishlip} it follows that the Lipschitz constant of $\FI(U,\param)$ is bounded above as
\begin{equation}
L\leq24\|H\|^2=24\|\hat{J}_z\|^2=6N^2 \ .
\end{equation}
Inequalities from \eqref{inequal} follow from \lemref{avclassical}. The nontrivial part of the proof is the justification of \eqref{clmainres}. Let us first introduce the discretization of the interval $[0,2\pi]$ by $M$ equally spaced points:
\begin{equation}\label{eq:disctrete}
\param_i = (i-1)\frac{2\pi}{M} \ ,\ i=1,\ldots,M \ .
\end{equation}
Moreover, let us notice that from \eqref{eq:laststep} it follows that $\FI(U,\param)$ is Lipschitz continuous for fixed $U$ and varying $\param$:
\begin{equation}\label{derivbound}
\left|\left.\frac{d}{d\param}\right.\FI(U,\param)\right| \leq 24 \|H\|^3 = 3 N^3 \ ,
\end{equation}
where in \eqref{eq:laststep} we set $X=H=\hat{J}_z$. From \eqref{derivbound} it follows that for fixed $U\in\mathrm{SU}\left(\Hsym_N\right)$ and for $\param,\tilde{\param}\in[0,2\pi]$ we have
\begin{equation}\label{lipphase}
\left|\FI(U,\param)-\FI(U,\tilde{\param})\right|\leq 3N^3 |\param-\tilde{\param}| \ . 
\end{equation}
When the points in the discretization \eqref{eq:disctrete} are separated by $\Delta =\frac{2\pi}{M}$, the distance on any $\param\in[0,2\pi]$ to closest $\param_i$ the does not exceed $\Delta'=\frac{\Delta}{2}=\frac{\pi}{M}$. Using the union bound, equation \eqref{eq:clFishpup} and the lower bound in equation \eqref{inequal} we obtain 
\begin{equation}\label{ineqdiscretization}
\Pr_{U \sim \mu\left(\Hsym_N\right)}\left(\exists \param_i\  \FI(U,\param_i) \leq c_- N^2 -\epsilon   \right) \leq M \mathrm{exp}\left(-\frac{\epsilon^2}{144N^4 } (N+1) \right)\ .
\end{equation}
Using \eqref{lipphase} and the discussion following it we obtain 
\begin{equation}\label{ineqdiscretization1}
\Pr_{U \sim \mu\left(\Hsym_N\right)}\left(\exists \param\  \FI(U,\param) \leq c_- N^2 - 3N^3 \Delta' -\epsilon   \right) \leq M \mathrm{exp}\left(-\frac{\epsilon^2}{144N^4 } (N+1) \right)\ .
\end{equation}
Now by setting in the above equation $M=\left\lceil  \frac{12 \pi N}{c_-} \right\rceil$ (this is the smallest integer $M$  such that $3N^3 \Delta'\leq \frac{c_i}{4} N^2$)  and $\epsilon=\frac{c_-}{2}N^2$ we obtain \eqref{clmainres}.

\end{proof}

\section{Partial-trace and beam-splitter models of particle losses} \label{sec:partloss}

In this section we prove the equivalence of the beam-splitter model of particle losses and the 
operation of taking partial trace over the constituent particles in the system of $N$ bosons in $d=2$ modes.
A general pure state $\psi_N$ of $N$ bosons in two modes $a$ and $b$ can be written as
\begin{equation}
\left|\psi_{N}\right\rangle =\sum_{n=0}^{N}  \alpha_n \left|n,N-n\right\rangle  =\sum_{n=0}^{N}\alpha_{n}\left|D_{n}^{N}\right\rangle
\label{eq:state_m}
\end{equation}
with the complex coefficients $\{\alpha_n\}_{n=0}^N$ satisfying $\sum_{n=0}^N |\alpha_n|^2=1$. 
Each Dicke state $\left|D_{n}^{N}\right\rangle$ can be written in the basis of \emph{particle basis} $\ket{\phantom{x}}_{\mathsf{p}}$ as 
\begin{equation}
\left|D_{n}^{N}\right\rangle  =\frac{1}{\sqrt{\binom{N}{n}}}\sum_{\mathbf{x} \in \{0,1\}^{N}}\delta_{x,n}\left|\mathbf{x}\right\rangle _{\mathsf{p}}\ ,
\label{eq:dicke}
\end{equation}
where $x \coloneqq |\mathbf{x}| \coloneqq \sum_{i}x_{i}$ denotes the Hamming weight of any binary string $\mathbf{x}=[x_{1},\dots,x_{N}]$,
whose consecutive entries specify the state of each qubit. 
As a result, we may write a general bosonic pure state \eref{eq:state_m} in the particle basis as 
\begin{equation}
|\psi_{N}\rangle =\sum_{\mathbf{x} \in \{0,1\}^{N}} 
c_{\mathbf{x}} \, |
\mathbf{x}\rangle_{\mathsf{p}}=\sum_{\mathbf{x} \in \{0,1\}^{N}}c_{\mathbf{x}} \, |x_{1}\rangle \left|x_{2}\right\rangle \dots|x_{N}\rangle ,
\label{eq:state_p}
\end{equation}
with the coefficients $c_{\mathbf{x}}$ then given by  $c_{\mathbf{x}} = \frac{1}{\sqrt{\binom{N}{n}}}\sum_{n=0}^{N}\alpha_{n}\delta_{x,n}$.

\subsection{Tracing-out $k$ particles}
Let us define notation in which we may split any binary string, $\mathbf{x}$
(describing $N$ qubits), into two strings, $\mathbf{x}'$ and $\mathbf{u}$
(describing first $N-k$ and last $k$ qubits respectively), so that
$\mathbf{x}=[\mathbf{x}',\mathbf{u}]=[x'_{1},\dots,x'_{N-k},u_{1},\dots,u_{k}]$.
Then, we may generally write the bosonic state \eref{eq:state_p} in the particle
basis after tracing-out the last $k$ qubits as
\begin{eqnarray}
\varrho_{N-k}^{\tr}
& \coloneqq & 
\tr_{k}\!\left\{ \psi_{N}\right\} \label{eq:state_tr_k}\\
 & = & \tr_{k}\!\left\{ \sum_{\mathbf{x},\mathbf{y}=0^{N}}^{1^{N}}c_{\mathbf{x}}c_{\mathbf{y}}^{\star}\left|\mathbf{x}\right\rangle _{\mathsf{p}}\left\langle \mathbf{y}\right|\right\} =\sum_{\mathbf{x},\mathbf{y}=0^{N}}^{1^{N}}c_{\mathbf{x}}c_{\mathbf{y}}^{\star}\;\tr_{k}\!\left\{ \left|\mathbf{x}\right\rangle _{\mathsf{p}}\!\left\langle \mathbf{y}\right|\right\} \\
 & = &  \sum_{\mathbf{x}',\mathbf{x}'=0^{N-k}}^{1^{N-k}}\left[\varrho_{N-k}^{\tr}\right]_{\mathbf{x}'\mathbf{y}'}\;\left|\mathbf{x}'\right\rangle _{\mathsf{p}}\!\left\langle \mathbf{y}'\right|,\label{eq:dupa}
\end{eqnarray}
where the above matrix entries of $\varrho_{N-k}^{\tr}$ are given by
\begin{eqnarray}
\left[\varrho_{N-k}^{\tr}\right]_{\mathbf{x}'\mathbf{y}'} & = & \sum_{\mathbf{u},\mathbf{w}=0^{k}}^{1^{k}}c_{[\mathbf{x}',\mathbf{u}]}c_{[\mathbf{y}',\mathbf{w}]}^{\star}\,\delta_{\mathbf{u}\mathbf{w}}=\sum_{\mathbf{u}=0^{k}}^{1^{k}}c_{[\mathbf{x}',\mathbf{u}]}c_{[\mathbf{y}',\mathbf{u}]}^{\star}\\
 & = & \sum_{\mathbf{u}=0^{k}}^{1^{k}}\left(\sum_{n=0}^{N}\frac{\alpha_{n}\delta_{x'+u,n}}{\sqrt{\binom{N}{n}}}\right)\left(\sum_{m=0}^{N}\frac{\alpha_{m}^{\star}\delta_{y'+u,m}}{\sqrt{\binom{N}{m}}}\right)=\sum_{u=0}^{k}\binom{k}{u}\frac{\alpha_{x'+u}\alpha_{y'+u}^{\star}}{\sqrt{\binom{N}{x'+u}}\sqrt{\binom{N}{y'+u}}}.\label{eq:tr_matrix_entry_p}
\end{eqnarray}
In the mode basis we may equivalently write
\begin{equation}
\varrho_{N-k}^{\tr}=\sum_{n,m=0}^{N-k}\left[\varrho_{N-k}^{\tr}\right]_{nm}\left|D_{n}^{N-k}\right\rangle \left\langle D_{m}^{N-k}\right|
\end{equation}
and with the help of \eqnsref{eq:dicke}{eq:dupa} explicitly evaluate the corresponding density-matrix entries:
\begin{eqnarray}
\left[\varrho_{N-k}^{\tr}\right]_{nm} & = & \sum_{\mathbf{x}',\mathbf{y}'=0^{N-k}}^{1^{N-k}}\left[\varrho_{N-k}^{\tr}\right]_{\mathbf{x}'\mathbf{y}'}\;\left\langle D_{m}^{N-k}\right|\left|\mathbf{x}'\right\rangle _{\mathsf{p}}\!\left\langle \mathbf{y}'\right|\left|D_{n}^{N-k}\right\rangle \\
 & = & \sum_{\mathbf{x}',\mathbf{y}'=0^{N-k}}^{1^{N-k}}\left[\varrho_{N-k}^{\tr}\right]_{\mathbf{x}'\mathbf{y}'}\;\frac{\delta_{x',m}}{\sqrt{\binom{N-k}{m}}}\frac{\delta_{y',n}}{\sqrt{\binom{N-k}{n}}}\\
 & = & \sum_{x',y'=0}^{N-k}\binom{N-k}{x'}\binom{N-k}{y'}\sum_{u=0}^{k}\binom{k}{u}\frac{\alpha_{x'+u}\alpha_{y'+u}^{\star}}{\sqrt{\binom{N}{x'+u}}\sqrt{\binom{N}{y'+u}}}\;\frac{\delta_{x',m}}{\sqrt{\binom{N-k}{m}}}\frac{\delta_{y',n}}{\sqrt{\binom{N-k}{n}}}\\
 & = & \sum_{u=0}^{k}\alpha_{m+u}\alpha_{n+u}^{\star}\;\binom{k}{u}\,\sqrt{\frac{\binom{N-k}{m}\binom{N-k}{n}}{\binom{N}{m+u}\binom{N}{n+u}}}.\label{eq:tr_matrix_entry_m}
\end{eqnarray}

\subsection{Beam-splitter model of mode-asymmetric particle losses}
In quantum optics, photonic losses are modelled by adding
fictitious beam-splitters (BSs) of fixed transmittance into the light
transmission modes \cite{Bachor2004}. In this way, by impinging a
vacuum state on the other input port of any such BS and tracing out
its unobserved output port, one obtains a model depicting loss of photon. In case of the two-mode
$N$-photon bosonic state \eref{eq:state_m}, after fixing the transmissivity 
of the fictitious BS introduced in mode $a$ ($b$) to $\eta_a$ ($\eta_b$), 
the density matrix describing then the observed modes generally reads \cite{Demkowicz2015}:
\begin{eqnarray}
\varrho^\t{BS}_N & \coloneqq & \Lambda^\t{BS}_{\eta_a,\eta_b}[\psi_N]\\
& = & \sum_{l_{a}=0}^{N}\sum_{l_{b}=0}^{N-l_{a}}p_{l_{a},l_{b}}\,\left|\xi_{l_{a},l_{b}}\right\rangle _{\mathsf{m}}\!\left\langle \xi_{l_{a},l_{b}}\right|,
\label{eq:rho_BSloss}
\end{eqnarray}
where $\Lambda^\t{BS}_{\eta_a,\eta_b}$ is the effective quantum channel
representing the action of fictitious BSs in the two modes, while indices 
$l_{a}$ and $l_{b}$ denote the number of photons lost in modes $a$ and $b$ respectively. The states 
\begin{equation}
\left|\xi_{l_{a},l_{b}}\right\rangle _{\mathsf{m}}\coloneqq\frac{1}{\sqrt{p_{l_{a},l_{b}}}}\sum_{n=l_{a}}^{N-l_{b}}\,\alpha_{n}\,\sqrt{b_{n}^{(l_{a},l_{b})}}\left|n-l_{a},N-n-l_{b}\right\rangle 
\end{equation}
are generally non-orthogonal and their coefficients contain generalised
binomial factors:
\begin{equation}
b_{n}^{(l_{a},l_{b})}\coloneqq\binom{n}{l_{a}}\,\eta_{a}^{n-l_{a}}\left(1-\eta_{a}\right)^{l_{a}}\;\binom{N-n}{l_{b}}\,\eta_{b}^{N-n-l_{b}}\left(1-\eta_{b}\right)^{l_{b}}.
\end{equation}
The probability of losing $l_{a}$ and $l_{b}$ photons in modes $a$ and $b$ respectively then reads: 
\begin{equation}
p_{l_{a},l_{b}}=\sum_{n=l_{a}}^{N-l_{b}}\left|\alpha_{n}\right|^{2}b_{n}^{(l_{a},l_{b})}.
\end{equation}
On ther hand, after reindexing \eqnref{eq:rho_BSloss} by $l$---the
total number of photons lost in both modes---the output two-mode mixed state 
may be equivalently rewritten as 
\begin{equation}
\varrho^\t{BS}_N =\bigoplus_{l=0}^{N}\;p_{l}\;\varrho_{N,l}^\t{BS},
\label{eq:rho_BSloss_directsum}
\end{equation}
where
\begin{eqnarray}
\varrho_{N,l}^\t{BS} & \coloneqq & \frac{1}{p_{l}}\sum_{l_{a}=0}^{l}p_{l_{a},l-l_{a}}\,\left|\xi_{l_{a},l-l_{a}}\right\rangle _{\mathsf{m}}\!\left\langle \xi_{l_{a},l-l_{a}}\right|\\
 & = & \frac{1}{p_{l}}\sum_{l_{a}=0}^{l}\sum_{n,m=l_{a}}^{N-l+l_{a}}\alpha_{n}\alpha_{m}^{\star}\,\sqrt{b_{n}^{(l_{a},l-l_{a})}b_{m}^{(l_{a},l-l_{a})}}\;\left|n-l_{a},N-n-l+l_{a}\right\rangle _{\mathsf{m}}\!\left\langle m-l_{a},N-m-l+l_{a}\right|\label{eq:state_N-l}
\end{eqnarray}
belong to orthogonal subspaces and represent the state after loss
of $l$ photons, what may occur with probability:
\begin{equation}
p_{l}=\sum_{l_{a}=0}^{l}p_{l_{a},l-l_{a}}=\sum_{l_{a}=0}^{l}\sum_{n=l_{a}}^{N-l+l_{a}}\left|\alpha_{n}\right|^{2}b_{n}^{(l_{a},l-l_{a})}=\sum_{n=0}^{N}\left|\alpha_{n}\right|^{2}\sum_{l_{a}=\max\!\left\{ 0,n-N+l\right\} }^{\min\!\left\{ l,n\right\} }b_{n}^{(l_{a},l-l_{a})}.\label{eq:p_l_gen}
\end{equation}

\subsection{Equivalence of the partial-trace and beam-splitter models in case of equal losses in the two modes}
\begin{lem}
\label{lem:bal-bs_tr_equiv}
For equal photonic losses in both modes, $\eta\coloneqq\eta_{a}\!=\!\eta_{b}$, the fictitious BS model is equivalent to tracing-out
$k$ particles with $k$ distributed according to a binomial distribution, i.e.,
\begin{equation}
\forall_{\psi_n\in\mathcal{S}_N}:
\quad
\Lambda^\t{BS}_{\eta,\eta}[\psi_N]
\;=\;
\bigoplus_{k=0}^{N}\;p_{k}\;\tr_{k}\!\left\{ \psi_{N}\right\} 
\qquad\t{with}\qquad
p_k=\binom{N}{k}\,\eta^{N-k}\left(1-\eta\right)^{k}.
\label{eq:milosc}
\end{equation}
\end{lem}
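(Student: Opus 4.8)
The plan is to show that for equal transmissivities $\eta_a=\eta_b=\eta$, the generalized binomial factor $b_n^{(l_a,l_b)}$ factorizes in a way that makes the dependence on $n$ collapse, so that the beam-splitter channel acts on a Dicke state exactly like a classical mixture of partial traces. First I would fix the total number of lost photons $k$ and examine the operator $\varrho_{N,k}^\t{BS}$ of \eqnref{eq:state_N-l}. Setting $\eta_a=\eta_b=\eta$ in the definition of $b_n^{(l_a,l_b)}$, one finds
\begin{equation}
b_n^{(l_a,k-l_a)} = \binom{n}{l_a}\binom{N-n}{k-l_a}\,\eta^{N-k}\left(1-\eta\right)^{k},
\end{equation}
where crucially the powers of $\eta$ and $1-\eta$ no longer depend on $n$ or $l_a$ separately but only on $N$ and $k$. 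Hence $\sqrt{b_n^{(l_a,k-l_a)}b_m^{(l_a,k-l_a)}} = \eta^{N-k}(1-\eta)^k\sqrt{\binom{n}{l_a}\binom{N-n}{k-l_a}\binom{m}{l_a}\binom{N-m}{k-l_a}}$, and the prefactor pulls out of both the sum over $l_a$ in \eqnref{eq:state_N-l} and the sum defining $p_k$ in \eqnref{eq:p_l_gen}. This immediately identifies $p_k=\binom{N}{k}\eta^{N-k}(1-\eta)^k$ after using the Vandermonde identity $\sum_{l_a}\binom{n}{l_a}\binom{N-n}{k-l_a}=\binom{N}{k}$ together with the normalization $\sum_n|\alpha_n|^2=1$.

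Next I would verify that the normalized conditional state $\varrho_{N,k}^\t{BS}$ equals $\tr_k\{\psi_N\}$ by comparing density-matrix entries in the Dicke basis. On the one hand, relabeling $n-l_a\to m'$ in \eqnref{eq:state_N-l}, the matrix element $\langle D_{m'}^{N-k}|\varrho_{N,k}^\t{BS}|D_{n'}^{N-k}\rangle$ becomes, after the $\eta$-factors cancel against $p_k^{-1}\cdot\eta^{N-k}(1-\eta)^k$ and leave $\binom{N}{k}^{-1}$,
\begin{equation}
\left[\varrho_{N,k}^\t{BS}\right]_{n'm'}
= \frac{1}{\binom{N}{k}}\sum_{l_a=0}^{k}\alpha_{n'+l_a}\alpha_{m'+l_a}^\star
\sqrt{\binom{n'+l_a}{l_a}\binom{N-n'-l_a}{k-l_a}\binom{m'+l_a}{l_a}\binom{N-m'-l_a}{k-l_a}}.
\end{equation}
On the other hand, \eqnref{eq:tr_matrix_entry_m} gives the partial-trace entries as a sum over $u$ of $\alpha_{m+u}\alpha_{n+u}^\star\binom{k}{u}\sqrt{\binom{N-k}{m}\binom{N-k}{n}/(\binom{N}{m+u}\binom{N}{n+u})}$. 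The remaining task is the elementary binomial identity that, term by term (with $u=l_a$), these two expressions agree; this reduces to checking
\begin{equation}
\binom{k}{u}\sqrt{\frac{\binom{N-k}{n}\binom{N-k}{m}}{\binom{N}{n+u}\binom{N}{m+u}}}
= \frac{1}{\binom{N}{k}}\sqrt{\binom{n+u}{u}\binom{N-n-u}{k-u}\binom{m+u}{u}\binom{N-m-u}{k-u}},
\end{equation}
which in turn follows from the single-index identity $\binom{N-k}{n}\binom{N}{n+u}^{-1}\binom{N}{k} = \binom{n+u}{u}\binom{N-n-u}{k-u}\binom{k}{u}^{-1}$, verified directly by writing out factorials.

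Finally I would assemble the pieces: since $\varrho^\t{BS}_N=\bigoplus_k p_k\varrho_{N,k}^\t{BS}$ by \eqnref{eq:rho_BSloss_directsum}, and I have shown $p_k=\binom{N}{k}\eta^{N-k}(1-\eta)^k$ and $\varrho_{N,k}^\t{BS}=\tr_k\{\psi_N\}$, the claimed equality \eqnref{eq:milosc} follows, and by linearity it extends from pure $\psi_N$ to all states in $\mathcal{S}_N$. I expect the main obstacle to be purely bookkeeping: carefully matching the two combinatorial parametrizations (lost photons per mode $(l_a,l_b)$ versus total traced-out particles and their pattern) and confirming the binomial identity above without sign or index errors. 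There is no conceptual difficulty once the $\eta$-independence of the exponents on $n$ is noticed; that observation is the crux and everything else is routine manipulation of binomial coefficients.
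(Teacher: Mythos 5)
Your proposal is correct and follows essentially the same route as the paper's proof: decompose by the total number $k$ of lost photons, observe that for $\eta_a=\eta_b$ the factor $\eta^{N-k}(1-\eta)^k$ in $b_n^{(l_a,k-l_a)}$ is independent of $n$ and $l_a$ so that Vandermonde's identity yields the binomial $p_k$, and then match Dicke-basis matrix entries of $\varrho_{N,k}^{\mathrm{BS}}$ against those of $\tr_k\{\psi_N\}$ via the same binomial identity (your ``single-index identity'' is exactly the squared form of the identity the paper derives from $\binom{n}{m}\binom{m}{k}=\binom{n}{k}\binom{n-k}{m-k}$). No gaps; only the index bookkeeping you already flag remains to be carried out.
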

\begin{proof}
In case of mode-symmetric losses, $\eta\coloneqq\eta_{a}\!=\!\eta_{b}$,
the overall probability of losing $l$ photons becomes 
independent of the state $\psi_N$ (i.e., its coefficients $\alpha_{n}$ of \eqnref{eq:state_m}),
as \eqnref{eq:p_l_gen} then simplifies to 
\begin{equation}
p_{l}=\sum_{n=0}^{N}\left|\alpha_{n}\right|^{2}\!\!\!\sum_{l_{a}=\max\!\left\{ 0,n-N+l\right\} }^{\min\!\left\{ l,n\right\} }\!\binom{n}{l_{a}}\binom{N-n}{l-l_{a}}\;\eta^{N-l}\left(1-\eta\right)^{l}\;=\;\binom{N}{l}\,\eta^{N-l}\left(1-\eta\right)^{l}.\label{eq:p_l_eq}
\end{equation}
Furthermore, the state \eref{eq:state_N-l} in each orthogonal subspace indexed by $l$ takes then a simpler form
\begin{eqnarray}
\varrho_{N,l}^\t{BS} & = & \frac{1}{p_{l}}\sum_{l_{a}=0}^{l}\sum_{n,m=0}^{N-l}\alpha_{n+l_{a}}\alpha_{m+l_{a}}^{\star}\,\sqrt{b_{n+l_{a}}^{(l_{a},l-l_{a})}}\sqrt{b_{m+l_{a}}^{(l_{a},l-l_{a})}}\;\left|n,N-l-n\right\rangle _{\mathsf{m}}\!\left\langle m,N-l-m\right|\\
 & = & \sum_{n,m=0}^{N-l}\left[\varrho_{N,l}^\t{BS}\right]_{nm}\;\left|D_{n}^{N-l}\right\rangle \left\langle D_{m}^{N-l}\right|,
\end{eqnarray}
where we have shifted the indices $n\to n+l_{a}$ and $m\to m+l_{a}$
to explicitly rewrite the state in the Dicke basis, in which its matrix entries then read
\begin{eqnarray}
\left[\varrho_{N,l}^\t{BS}\right]_{nm} & = & \frac{1}{p_{l}}\sum_{l_{a}=0}^{l}\alpha_{n+l_{a}}\alpha_{m+l_{a}}^{\star}\,\sqrt{b_{n+l_{a}}^{(l_{a},l-l_{a})}}\sqrt{b_{m+l_{a}}^{(l_{a},l-l_{a})}}\\
 & = & \frac{1}{p_{l}}\sum_{l_{a}=0}^{l}\alpha_{n+l_{a}}\alpha_{m+l_{a}}^{\star}\,\eta^{N-l}\left(1-\eta\right)^{l}\sqrt{\binom{n+l_{a}}{l_{a}}\binom{N-n-l_{a}}{l-l_{a}}\binom{m+l_{a}}{l_{a}}\binom{N-m-l_{a}}{l-l_{a}}}\\
 & = & \sum_{l_{a}=0}^{l}\alpha_{n+l_{a}}\alpha_{m+l_{a}}^{\star}\sqrt{\frac{\binom{n+l_{a}}{l_{a}}\binom{N-n-l_{a}}{l-l_{a}}}{\binom{N}{l}}}\sqrt{\frac{\binom{m+l_{a}}{l_{a}}\binom{N-m-l_{a}}{l-l_{a}}}{\binom{N}{l}}}.
\end{eqnarray}
However, using the $\binom{n}{m}\binom{m}{k}=\binom{n}{k}\binom{n-k}{m-k}$
and $\binom{n}{k}=\binom{n}{n-k}$ we get
\begin{equation}
\sqrt{\frac{\binom{n+l_{a}}{l_{a}}\binom{N-n-l_{a}}{l-l_{a}}}{\binom{N}{l}}}=\sqrt{\binom{l}{l_{a}}}\sqrt{\frac{\binom{N-l}{n}}{\binom{N}{n+l_{a}}}} \ .
\end{equation}
This allows us to finally write the matrix entries specified in the Dicke-basis
as
\begin{equation}
\left[\varrho_{N,l}^\t{BS}\right]_{nm}=\sum_{l_{a}=0}^{l}\alpha_{n+l_{a}}\alpha_{m+l_{a}}^{\star}\binom{l}{l_{a}}\sqrt{\frac{\binom{N-l}{n}}{\binom{N}{n+l_{a}}}}\sqrt{\frac{\binom{N-l}{m}}{\binom{N}{m+l_{a}}}}.\label{eq:loss_matrix_entry_m}
\end{equation}
Comparing the above expression with \eqnref{eq:tr_matrix_entry_m}
and relabelling the indices $l\rightarrow k$ and $l_{a}\rightarrow u$, 
one observes that independently of $\psi_N$ indeed $\varrho_{N,l}^\t{BS} = \varrho_{N-l}^\t{tr} = \tr_{l}\!\left\{ \psi_{N}\right\}$.
Hence, \eqnref{eq:rho_BSloss_directsum} yields \eqnref{eq:milosc} with binomially distributed $p_l$ according to \eqnref{eq:p_l_eq}.
\end{proof}

\end{document}